\DeclarePairedDelimiter{\ceil}{\lceil}{\rceil}
\DeclarePairedDelimiter\floor{\lfloor}{\rfloor}
\newtheorem{definition}{Definition}[section]
\newtheorem{remark}{Remark}[section]
\newtheorem{lemma}{Lemma}[section]
\newtheorem{theorem}{Theorem}[section]
\newtheorem{proposition}{Proposition}[section]
\newtheorem*{rough version}{Rough Version}
\newtheorem*{theorem*}{Theorem}
\newtheorem*{corollary*}{Corollary}
\newenvironment{sketch proof}{\proof}{\endproof}
\newtheorem{thm}{Theorem}
\newenvironment{customTheorem}[1]
  {\innercustomTheorem}
  {\endinnercustomTheorem}
\numberwithin{equation}{section}
\title[Quasilinear wave equations]{Quasilinear wave equations on Schwarzschild--de~Sitter}
\author{Georgios Mavrogiannis}
\address[Georgios Mavrogiannis]{University of Cambridge, Department of Pure Mathematics and Mathematical Statistics, Wilberforce Road, Cambridge CB3 0WB, United Kingdom}
\email{gm615@cam.ac.uk}
\date\today
\begin{document}

\begin{abstract}
    We give an elementary new argument for global existence and exponential decay of solutions of quasilinear wave equations on Schwarzschild--de~Sitter black hole backgrounds, for appropriately small initial data. The core of the argument is entirely local, based on time translation invariant energy estimates in spacetime slabs of fixed time length. Global existence then follows simply by iterating this local result in consecutive spacetime slabs.  We infer that an appropriate future energy flux decays exponentially with respect to the energy flux of the initial data. 
\end{abstract}

\maketitle

\tableofcontents

\section{Introduction}

We revisit the problem of global existence of solutions of quasilinear wave equations, of the form
\begin{equation}\label{eq: quasilinear}
    \Box_{g(\nabla\psi)}\psi=\partial\psi\cdot\partial\psi,
\end{equation}
with $g(\nabla\psi)=g_{M,\Lambda}+h(\nabla\psi)$, where $g_{M,\Lambda}$ is the metric of the Schwarzschild--de~Sitter black hole spacetime, also $\partial\psi\cdot\partial\psi\:\dot{=}\:a^{ij}\partial_i\psi\partial_j\psi$, where $a,h$ are sufficiently regular tensors, with $h(0)=0$. Specifically, we are interested in a spacetime region that is slightly larger than that enclosed by the event and cosmological horizons, respectively $\mathcal{H}^+,\bar{\mathcal{H}}^+$, see the shaded region of Figure~\ref{fig: penrose}.

The problem of stability of quasilinear wave equations on such backgrounds has been extensively studied by Hintz and Vasy,~see~\cite{hintz3,hintz4,hintz5,hintz6}, where they arrived at global stability results. Their papers appeal to machinery from microlocal analysis and Nash Moser iteration arguments. These results were proceeded by a long list of results on the linear problem, see~\cite{vasy1,vasy2,bony,DR6,DR7,DR3,melr-barr-vasy,Dyatlov1,Dyatlov2}. Moreover, note the recent remarkable global non-linear stability proof for the slowly rotating Kerr--de~Sitter black hole as a solution of the Einstein vacuum equation with $\Lambda>0$, by Hintz--Vasy~\cite{hintz2}, based in part on the above works. (For some results on the cosmological region see~\cite{schlue2021decay,Volker}). Note that the non-linear stability of the pure de~Sitter spacetime has been obtained previously by Friedrich~\cite{friedrich}.

Schwarzschild--de~Sitter can be thought of as the $\Lambda>0$ analogue of the Schwarzschild and Kerr spacetimes, which are celebrated solutions of the vacuum Einstein equation with $\Lambda=0$. For the study of linear equations on the latter see for instance~\cite{DR4,tatarutohaneanuKerr} and the definitive~\cite{DR2}. These results have been used to prove non-linear stability results for equations of type~\eqref{eq: quasilinear}, see~\cite{luk2010null,lindbladquasilinear}. In general, these non-linear problems are more difficult than the $\Lambda>0$ case, because the expected decay is only polynomial and one has to assume and exploit suitable null structure (see~\cite{klainerman1984null}) for the non-linearities. For results on stability of black hole spacetimes with $\Lambda=0$ see \cite{DHR,rita1,hintzVasyHafner,anderssonBlueMa,klainermanszeftel,dafermos2021nonlinear,DR1}.

In principle, one approach to the study of~\eqref{eq: quasilinear} on Schwarzschild or Kerr--de~Sitter backgrounds would be to directly adapt the methods from the $\Lambda=0$ case. Such an approach, however, would not fully exploit the aspects that make the $\Lambda>0$ problem easier.

The purpose of this paper is to introduce a physical space approach to~\eqref{eq: quasilinear} which is well tailored to this setting. Our approach is based entirely on local in time translation invariant energy estimates (Theorem~\ref{thm: quasilinear, 1, local bootstrap}) and an iteration argument in consecutive spacetime regions (Theorem~\ref{thm: quasilinear, 2, exp decay}). This approach will use the results of our accompanying physical space linear paper~\cite{mavrogiannis}, where we utilized a physical space commutation with a vector field $\mathcal{G}$, see already~\eqref{eq: prototype new vector field}, and proved \textit{a relatively non-degenerate estimate}, using also a Morawetz estimate proved in~\cite{DR3}. Note that Holzegel--Kauffman originally introduced the analogue of the $\mathcal{G}$ vector field in the $\Lambda=0$ case, see~\cite{gustav}. Our physical space commutation with $\mathcal{G}$, in the high frequency limit, connects with the work of previous authors on `lossless estimates' and `non-trapping estimates', e.g. see~\cite{zworski,burq1,burq2,ikawa,nonnenmacher,hintz4,Dyatlov3}.

We will present the rough version of our Theorems, for which the reader may wish to refer to Figure~\ref{fig: penrose} for the Schwarzschild--de~Sitter spacetime.
\begin{figure}[htbp]
	\centering
	\includegraphics[scale=0.8]{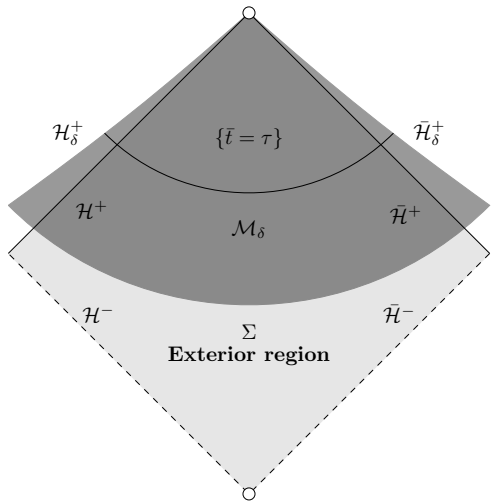}
	\caption{The Schwarzschild--de~Sitter spacetime}
	\label{fig: penrose}
\end{figure}

We denote as 
\begin{equation}
\mathcal{M}_\delta
\end{equation}
the `extended' exterior region, also see the dark shaded region of Figure~\ref{fig: penrose}, where $\delta$
is a smallness parameter that parametrises how far the boundaries of $\mathcal{M}_\delta$ (which we denote as $\mathcal{H}^+_\delta,\bar{\mathcal{H}}^+_\delta$) are from the event horizon $\mathcal{H}^+$ and the cosmological horizon $\bar{\mathcal{H}}^+$ respectively. We choose $\delta$ in the proof of our Theorem~\ref{thm: quasilinear, 2, exp decay} in Section~\ref{sec: proof of theorem 2, exp decay}. For the precise definition of $\mathcal{M}_\delta$ see already Definition \ref{def: spacetime domains}. The spacetime domain $\mathcal{M}_\delta$, see the dark shaded region of Figure~\ref{fig: penrose}, is foliated by the spacelike hypersurfaces
\begin{equation}
\{\bar{t}=\tau\},
\end{equation}
with $\tau\geq 0$, see Figure~\ref{fig: penrose}, where
\begin{equation}
	(\bar{t},r,\theta,\varphi)
\end{equation}
are appropriate (non-standard) hyperboloidal coordinates in which the metric takes the form~\eqref{eq: def: preliminairies, def 1, eq 1}. The coordinate vector field $\partial_{\bar{t}}$ is Killing. We denote by $\Omega_\alpha$, $\alpha=1,2,3$ the generators of the Lie algebra $so(3)$ associated with the $(\theta,\varphi)$ spheres.

We will consider two types of energies. 

The first energy is a non-degenerate energy which we define as 
\begin{equation}\label{eq: prototype EN energy}
	E_{j}[\psi](\tau)=  \sum_{1\leq i_1+i_2+i_3\leq j}\sum_{\alpha=1,2,3}\int_{\{\bar{t}=\tau\}}\left(\partial_{\bar{t}}^{i_1}\partial_r^{i_2}\Omega_\alpha^{i_3}\psi\right)^2
\end{equation}
for all $j\geq 1$, with respect to the induced volume form of $g_{M,\Lambda}$ on the $\{\bar{t}=\tau\}$ hypersurface.

Now, we discuss the second energy that we will consider. The first two terms of the second energy, see~\eqref{eq: prototype def: energies}, are associated with the $C^{0,1}$ vector field $\mathcal{G}$ of~\cite{mavrogiannis} (see the previous~\cite{gustav} for $\Lambda=0$), which is defined as 
\begin{equation}\label{eq: prototype new vector field}
\mathcal{G}\:\dot{=}\:
\begin{cases}
r\sqrt{1-\frac{2M}{r}-\frac{\Lambda}{3}r^2}\partial_r,&	\: 1-\frac{2M}{r}-\frac{\Lambda}{3}r^2>0\\
0 ,	&	\: 1-\frac{2M}{r}-\frac{\Lambda}{3}r^2\leq 0
\end{cases}
\end{equation}
in appropriate hyperboloidal coordinates $(\bar{t},r,\theta,\varphi)$, see already Definition \ref{def: new vector field}. We define the second energy as 
\begin{equation}\label{eq: prototype def: energies}
\begin{aligned}
E_{\mathcal{G},j}[\psi](\tau) &   =\int_{\{\bar{t}=\tau\}}\sum_{1\leq i_1+i_2+i_3\leq j-1,i_3\geq 1}\sum_{\alpha=1,2,3}\left(1-\frac{2M}{r}-\frac{\Lambda}{3}r^2\right)^{2i_3-1}\left(\partial_{\bar{t}}^{i_1}\Omega_\alpha^{i_2}\partial_r^{i_3}\mathcal{G}\psi\right)^2\\
&   \quad\quad\quad\quad\:\: +\sum_{1\leq i_1+i_2\leq j-1}\sum_{\alpha=1,2,3}(\partial_{\bar{t}}^{i_1}\Omega_\alpha^{i_2}\mathcal{G}\psi)^2\\
&	\quad\quad\quad\quad+ \sum_{1\leq i_1+i_2+i_3\leq j-1}\sum_{\alpha=1,2,3} \left(\partial_{\bar{t}}^{i_1}\partial_r^{i_2} \Omega_\alpha^{i_3} \psi\right)^2,  \\
\end{aligned}
\end{equation}
for all $j\geq 2$, with respect to the induced volume form of $g_{M,\Lambda}$ on the $\{\bar{t}=\tau\}$ hypersurface. Note that the highest order integrands of the energies in~\eqref{eq: prototype def: energies} are identically zero where $\mathcal{G}\equiv 0$. 

Note the inequality
\begin{equation}\label{eq: prototype def: energies, relation of the two energies}
E_{j-1}[\psi](\tau)\leq E_{\mathcal{G},j}[\psi](\tau)\lesssim E_{j}[\psi](\tau),
\end{equation}
also see already Remark \ref{rem: energies}. For the formal definition of the above energies see Definition \ref{def: energies}.

The rough version of our main Theorem~\ref{thm: quasilinear, 1, local bootstrap} is the following. (See the dark shaded region of Figure~\ref{fig: finite slab} for the local spacetime slab we will consider.)  
\begin{figure}[htbp]
	\centering
	\begin{minipage}[b]{0.4\textwidth}
		\includegraphics[width=\textwidth]{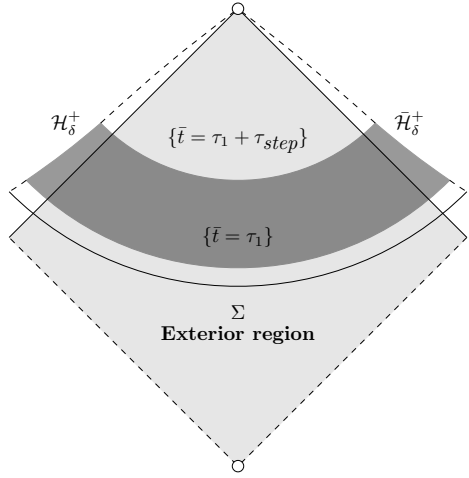}
		\caption{A spacetime slab of finite time length $\tau_{\textit{step}} $}
		\label{fig: finite slab}
	\end{minipage}
\end{figure}

\begin{customTheorem}{1}[rough version]\label{rough thm: finite slab}
Let $k\geq 7$. There exists a constant $C>0$, depending on $k,M,\Lambda$ and on the tensors $a,h$, see~\eqref{eq: quasilinear}, such that, for $L>0$ an arbitrary positive number, the following holds. 

There exists a $\tau_{\textit{step}}(L)>0$ sufficiently large and there exist
\begin{equation}
	\delta=\delta(\tau_{\textit{step}})>0,\qquad \epsilon=\epsilon(\tau_{\textit{step}},\delta)>0
\end{equation}
sufficiently small such that for all $\tau_1\geq 0$ and 
\begin{equation}
	\tau_2=\tau_1+\tau_{\textit{step}}
\end{equation}
if we take initial data for~\eqref{eq: quasilinear} on $\{\tau=\tau_1\}$ with
\begin{equation}
	E_{k+1}[\psi](\tau_1)\leq \epsilon
\end{equation}
then there exists a unique solution to the quasilinear wave equation~\eqref{eq: quasilinear} on $\mathcal{M}_\delta\cap \{\tau_1\leq \tau\leq \tau_2\}$ and the following estimates are satisfied 
\begin{equation}\label{eq: rough version thm 1, local bootstrap, eq 1}
	\begin{aligned}
		E_{\mathcal{G},k}[\psi](\tau_{2})	&	\leq \frac{1}{L}E_{\mathcal{G},k}[\psi](\tau_1),\\
	\end{aligned}
\end{equation}
and
\begin{equation}\label{eq: rough version thm 1, local bootstrap, eq 2}
	\begin{aligned}
		E_{\mathcal{G},k}[\psi](\tau^\prime)	&	\leq C E_{\mathcal{G},k}[\psi](\tau_1),\\
		E_{k+1}[\psi](\tau^\prime)	&	\leq C E_{k+1}[\psi](\tau_1),\\
		E_{k+2}[\psi](\tau^\prime)	&	\leq C E_{k+2}[\psi](\tau_1),\\
	\end{aligned}
\end{equation}  
for all $\tau^\prime\in [\tau_1,\tau_1+\tau_{\textit{step}}]$. 

Finally, 
\begin{equation}\label{eq: rough version thm 1, local bootstrap, eq 3}
\begin{aligned}
E_{k+1}[\psi](\tau_2)	\leq C e^{-\frac{1}{3}\log(L)+\frac{2}{3}\log C}\left(E_{\mathcal{G},k}[\psi](\tau_1)\right)^{1/3}\left(E_{k+2}[\psi](\tau_1)\right)^{2/3}.
\end{aligned}
\end{equation}
\end{customTheorem}

It is instructive to compare the results of Theorem~\ref{thm: quasilinear, 1, local bootstrap} with the main results of our linear theory~\cite{mavrogiannis} on a Schwarzschild--de~Sitter background. Specifically, the estimate of inequality~\eqref{eq: rough version thm 1, local bootstrap, eq 1} and the first inequality of~\eqref{eq: rough version thm 1, local bootstrap, eq 2} corresponds to the estimate in linear theory that one obtains from the commutation with the $\mathcal{G}$ vector field, see already Theorem~\ref{thm: wave equation with inhomogeneities}. The remaining inequalities of~\eqref{eq: rough version thm 1, local bootstrap, eq 2} correspond to the uniform boundedness results of~\cite{DR3}.

We use the result of Theorem~\ref{thm: quasilinear, 1, local bootstrap}, for a well chosen $L$ so that the constant in~\eqref{eq: rough version thm 1, local bootstrap, eq 3} is sufficiently small and, by a completely elementary iteration argument on consecutive spacetime regions, see Figure~\ref{fig: finite slab} for such a region, we prove that the solution of the quasilinear wave equation~\eqref{eq: quasilinear} exists globally and decays exponentially. The rough version of our main Theorem~\ref{thm: quasilinear, 2, exp decay} is the following.

\begin{customTheorem}{2}[rough version]\label{thm rough: exp decay}
Let $k\geq 7$. Then, there exist constants $c_{\text{d}},c_{\text{g}}>0$ and there exists a $\delta>0$ and an $\epsilon>0$ sufficiently small, such that if the initial energy satisfies
\begin{equation}
    E_{k+2}[\psi](0)\leq \epsilon,
\end{equation}
on $\Sigma=\{\bar{t}=0\}$ then the solution exists globally on $\mathcal{M}_\delta$ and the energy $E_{\mathcal{G},k}[\psi]$ decays exponentially
\begin{equation}\label{eq: thm rough: exp decay, eq 1}
    E_{\mathcal{G},k}[\psi](\tau)\lesssim e^{-c_{\text{d}}\tau} E_{\mathcal{G},k}[\psi](0).
\end{equation}

We note that the classical Sobolev energies $E_{k-1}[\psi](\tau),E_{k+1}[\psi](\tau)$ decay exponentially 
\begin{equation}\label{eq: thm rough: exp decay, eq 2}
    E_{k-1}[\psi](\tau)\lesssim e^{-c_{\text{d}}\tau} E_{\mathcal{G},k}[\psi](0),\qquad E_{k+1}[\psi](\tau)\lesssim  e^{-c_{\text{d}}\tau}E_{k+2}[\psi](0)
\end{equation}
while for the top order energy we only have
\begin{equation}\label{eq: thm rough: exp decay, eq 3}
	E_{k+2}[\psi](\tau)\lesssim e^{c_{\text{g}} \tau}E_{k+2}[\psi](0).
\end{equation}
\end{customTheorem}

\begin{remark}
Note that the growth constant~$c_g$ of~\eqref{eq: thm rough: exp decay, eq 3} of Theorem~\ref{thm: quasilinear, 2, exp decay} can in fact be made arbitrarily small, restricting to sufficiently small $\epsilon>0$. See already Remark~\ref{rem: sec: main theorems, rem 2}.
\end{remark}

\begin{remark}
In our Theorem~\ref{thm: quasilinear, 2, exp decay}, we improve slightly on the regularity assumption of initial data of~\cite{hintz6}. Specifically, we only require the initial data to be in the Sobolev space $H^9$, see our main Theorems \ref{thm: quasilinear, 1, local bootstrap}, \ref{thm: quasilinear, 2, exp decay} and Remark \ref{rem: sec: main theorems, rem 1}.
\end{remark}

\begin{remark}
	In Section \ref{sec: semilinear}, we present Theorems~\ref{thm: semilinear, local bootstrap} and \ref{thm: semilinear, exp decay} which give global non-linear stability in the semilinear case, with the tensor $h\equiv 0$, under weaker assumptions than those of Theorem~\ref{thm: quasilinear, 2, exp decay}. Specifically, we require the initial data only to be in the Sobolev space $H^7$.
\end{remark}

\begin{remark}
Note that by using the results of our forthcoming~\cite{mavrogiannis2} on Kerr--de~Sitter, the global stability results of the present paper generalize to the slowly rotating Kerr--de~Sitter case. 
\end{remark}

\subsection{Acknowledments} I would like to thank my supervisor Mihalis Dafermos, for his continuous support, for suggesting that the results of~\cite{mavrogiannis} may also treat the nonlinear stability problems of the present paper and for carefully reading previous versions of this paper. The author would also like to thank Christoph Kehle for valuable discussions and useful comments.

\section{Preliminaries and notation}\label{sec: preliminairies}

\subsection{The manifolds, metrics and spacetime domains}\label{subsec: manifold, metric, domains}

The definitions and notation of this Section have already been introduced in our~\cite{mavrogiannis}.

Fix $M,\Lambda>0$ such that  
\begin{equation}
    r_+<\bar{r}_+
\end{equation}
are the two positive real roots of 
\begin{equation}
    1-\mu\:\dot{=}\:1-\frac{2M}{r}-\frac{\Lambda}{3}r^2.
\end{equation}

We need the following definition

\begin{definition}\label{def: spacetime domains}
	For $\tau_2\geq\tau_1\geq 0$ we define 
	\begin{equation}
	D(\tau_1,\tau_2)=[\tau_1,\tau_2]_{\bar{t}}\times [r_+,\bar{r}_+]_r\times \mathbb{S}^2_{(\theta,\varphi)}.
	\end{equation}
	We also define, for $\delta>0$,
	\begin{equation}
	D_\delta(\tau_1,\tau_2)=[\tau_1,\tau_2]_{\bar{t}}\times[r_+-\delta,\bar{r}_+ +\delta]_r\times  \mathbb{S}^2_{(\theta,\varphi)}    
	\end{equation}
	and 
	\begin{equation}
		\mathcal{M}=D(0,\infty)\:\dot{=}\:[0,\infty)_{\bar{t}}\times [r_+,\bar{r}_+]_r\times \mathbb{S}^2_{(\theta,\varphi)},\quad\mathcal{M}_\delta=D_\delta(0,\infty)\:\dot{=}\:[0,\infty)_{\bar{t}}\times [r_+-\delta,\bar{r}_++\delta]_r\times \mathbb{S}^2_{(\theta,\varphi)}.
	\end{equation}
We refer to the coordinates $(\bar{t},r,\theta,\varphi)$ as regular hyperboloidal coordinates.
\end{definition}

We need the following definition 
\begin{definition}\label{def: preliminairies, def 1}
We denote as $(\mathcal{M}_\delta,g_{M,\Lambda})$ the Schwarzschild--de~Sitter spacetime, with metric
\begin{equation}\label{eq: def: preliminairies, def 1, eq 1}
    g_{M,\Lambda}=-\left(1-\frac{2M}{r}-\frac{\Lambda}{3}r^2\right)(d\bar{t})^2-2\frac{1-\frac{3M}{r}}{\sqrt{1-9M^2\Lambda}}\sqrt{1+\frac{6M}{r}} d\bar{t}dr +\frac{27M^2}{1-9M^2\Lambda}\frac{1}{r^2} (dr)^2+r^2 d\sigma_{\mathbb{S}^2},
\end{equation}
in regular hyperboloidal coordinates $(\bar{t},r,\theta,\varphi)$, where $d\sigma_{\mathbb{S}^2}=d\theta^2+\sin^2\theta d\varphi^2$ is the standard metric of the unit sphere $\mathbb{S}^2$. We will often denote the metric $g_{M,\Lambda}$ as 
\begin{equation}
	\mathring{g}.
\end{equation}
\end{definition}

Note the following remarks

\begin{remark}
	For a sufficiently small $\delta>0$, note that the hypersurfaces 
	\begin{equation}
	\{\bar{t}=\tau\}
	\end{equation}
	are spacelike in $\mathcal{M}_\delta$, with respect to the metric of the following Definition \ref{def: preliminairies, def 1}.
\end{remark}

\begin{remark}
	Note that $\delta$ will be fixed in the proof of Theorem~\ref{thm: quasilinear, 2, exp decay} in Section~\ref{sec: proof of theorem 2, exp decay}. In what follows $\delta$ will always be assumed sufficiently small.
\end{remark}

\begin{remark}\label{rem: sec: preliminairies, rem 0}
The inverse metric components of the metric of Definition \ref{def: preliminairies, def 1} are 
\begin{equation}
\begin{aligned}
\mathring{g}^{rr}=(1-\mu),\quad \mathring{g}^{\bar{t}\bar{t}}=-\frac{1-\xi^2(r)}{1-\mu},\quad \mathring{g}^{\bar{t}r}=-\xi(r).
\end{aligned}
\end{equation}
for 
\begin{equation}
	\xi(r)= \frac{1-\frac{3M}{r}}{\sqrt{1-9M^2\Lambda}}\sqrt{1+\frac{6M}{r}}.
\end{equation}
\end{remark}

\begin{remark}\label{rem: sec: preliminairies, rem 1}
Note the usual expression of the Schwarzschild--de~Sitter metric is in coordinates $(t,r,\theta,\varphi)\in\mathbb{R}_{t}\times (r_+,\bar{r}_+)_r\times \mathbb{S}^2_{(\theta,\varphi)}$, which reads
\begin{equation}
    \mathring{g}=-\left(1-\mu\right)(dt)^2+\left(1-\mu\right)^{-1}(dr)^2+r^2d\sigma_{\mathbb{S}^2},
\end{equation}
where note the relation 
\begin{equation}
	\bar{t}=t+H(r),\:\: H(r)=\int_{3M}^r\frac{\xi(r)}{1-\mu}dr.\:\: 
\end{equation}
\end{remark}

We denote as 
\begin{equation}
	\slashed{\nabla}
\end{equation}
the covariant derivative of the Riemannian metric $r^2d\sigma_{\mathbb{S}^2}$. Moreover, we denote the standard generators of the $so(3)$ Lie Algebra 
\begin{equation}
	\Omega_\alpha,\:\alpha=1,2,3,
\end{equation}
associated with the $(\theta,\varphi)$ spheres. These can be thought of as vector fields on $\mathcal{M}_\delta$.

Note that 
\begin{equation}
	|\slashed{\nabla}f|^2\sim\sum_\alpha \frac{1}{r^2}|\Omega_\alpha f|^2
\end{equation}
for any sufficiently regular $f$, where the constants in the above only depend on the black hole parameters. (We have included the inessential $r$ factor above for comparison with the asymptotically flat case.)

\subsection{The horizons and auxilliary spacelike hypersurfaces}\label{subsec: horizons, auxilliary hyper}

We need the following definition 
\begin{definition}\label{def: subsec horizons, horizons}
The event and cosmological horizons are defined as
\begin{equation}\label{eq: def: subsec horizons, horizons, eq 1}
    \begin{aligned}
        \mathcal{H}^+=\{r=r_+\},\:\bar{\mathcal{H}}^+=\{r=\bar{r}_+\}
    \end{aligned}
\end{equation}
also see~\cite{mavrogiannis,mavrogiannis2}. These hypersurfaces are null with respect to the metric $g_{M,\Lambda}$.

For sufficiently small $\delta>0$, we define the following spacelike hypersurfaces 
\begin{equation}\label{eq: def: subsec horizons, horizons, eq 2}
    \mathcal{H}^+_{\delta}=\{r=r_+-\delta,\:\bar{t}\geq 0\},\quad \bar{\mathcal{H}}^+_\delta=\{r=\bar{r}_++\delta,\:\bar{t}\geq 0\},
\end{equation}
in the coordinates $(\bar{t},r,\theta,\varphi)$ of Definition \ref{def: preliminairies, def 1}. 
\end{definition}

\subsection{Volume forms and normals for the metric \texorpdfstring{$g_{M,\Lambda}$}{g}}\label{subsec: volume forms}

We denote the spacetime volume form of $\mathring{g}$ as  
\begin{equation}\label{eq: hyperb volume form}
    d\mathring{g}=r^2\sin\theta d\bar{t} dr d\theta d\varphi
\end{equation}
with respect to the $\left(\bar{t},r,\theta,\varphi\right)$ coordinates.

By pulling back the spacetime volume form~\eqref{eq: hyperb volume form} into hypersurfaces of constant $\bar{t}$, we obtain that the $\{\bar{t}=\tau\}$ hypersurfaces admit the volume form 
\begin{equation}\label{eq: volume form of spacelike hypersurface}
d\mathring{g}_{\{\bar{t}=c\}}=r\sqrt{\frac{27M^2}{1-9M^2\Lambda}}\sin\theta dr d\theta d\varphi.    
\end{equation}

We note that the normal of $\{\bar{t}=\tau\}$, with respect to $\mathring{g}$, is
\begin{equation}
	\mathring{n}_{\{\bar{t}=\tau\}}=\sqrt{\frac{27M^2}{1-9M^2\Lambda}\frac{1}{r^2}}\frac{\partial}{\partial \bar{t}}+\frac{\xi(r)}{\sqrt{\frac{27M^2}{1-9M^2\Lambda}\frac{1}{r^2}}}\frac{\partial}{\partial r},
\end{equation}
which we also simply denote as 
\begin{equation}
	\mathring{n},
\end{equation}
for $\xi(r)$ see Remark \ref{rem: sec: preliminairies, rem 0}. We denote the normals of the event and cosmological horizons respectively as 
\begin{equation}
    \mathring{n}_{\mathcal{H}^+}=\partial_{\bar{t}},\:\mathring{n}_{\bar{\mathcal{H}}^+}=\partial_{\bar{t}}.
\end{equation}
With the above choice of normals, the corresponding volume forms of the respective null hypersurfaces take the form
\begin{equation} 
d\mathring{g}_{\mathcal{H}^+}=r^2\sin\theta d\bar{t}d\sigma_{\mathbb{S}^2},\:\: d\mathring{g}_{\bar{\mathcal{H}}^+}=r^2\sin\theta d\bar{t}d\sigma_{\mathbb{S}^2}.
\end{equation}

Furthermore, for $\delta>0$ sufficiently small, the vectors 
\begin{equation}
	\mathring{n}_{\mathcal{H}^+_\delta}=\frac{\nabla r}{\sqrt{|\mathring{g}(\nabla r,\nabla r)}|}\Bigg|_{r=r_+-\delta},\:\mathring{n}_{\bar{\mathcal{H}}^+_\delta}=-\frac{\nabla r}{\sqrt{|\mathring{g}(\nabla r,\nabla r)}|}\Bigg|_{r=\bar{r}_++\delta}
\end{equation}
are the unit outward normals of the spacelike hypersurfaces $\mathcal{H}^+_\delta,\bar{\mathcal{H}}^+_\delta$, respectively, with respect to the metric $\mathring{g}$. Note that there exist smooth functions 
\begin{equation}
	c_1:(r_+-\delta,r_+)\rightarrow \mathbb{R},\qquad \bar{c}_1:(\bar{r}_+,\bar{r}_++\delta)\rightarrow \mathbb{R},
\end{equation}
with
\begin{equation}\label{eq: subsec: volume forms, eq 1}
\sqrt{|1-\mu|}c_1(r)= 1+\mathcal{O}(1-\mu),\qquad \sqrt{|1-\mu|}\bar{c}_1(r)=1+\mathcal{O}(1-\mu),\: 
\end{equation}
as $r\rightarrow r_+,\:r\rightarrow \bar{r}_+$ respectively, and $c_2(r)=-\sqrt{|1-\mu|},\: \bar{c}_2(r)=\sqrt{|1-\mu|}$, such that the normals of $\mathcal{H}^+_\delta,\bar{\mathcal{H}}^+_\delta$ can be written respectively as 
\begin{equation}
\mathring{n}_{\mathcal{H}^+_\delta}=c_1(r)\partial_{\bar{t}}+c_2(r)\partial_r,\:\:\mathring{n}_{\bar{\mathcal{H}}^+_\delta}=\bar{c}_1(r)\partial_{\bar{t}}+\bar{c}_2(r)\partial_r.
\end{equation}

By pulling back the spacetime volume form~\eqref{eq: hyperb volume form} on hypersurfaces of constant $r=r_+-\delta,\text{ or }r=\bar{r}_++\delta$,  we obtain the respective volume forms 
\begin{equation} 
d\mathring{g}_{\mathcal{H}^+_\delta},\:\: d\mathring{g}_{\bar{\mathcal{H}}^+_\delta}=r^2\sqrt{|1-\mu|}\sin\theta d\bar{t}d\sigma_{\mathbb{S}^2},
\end{equation}
which we both denote simply as $d\mathring{g}_{\mathcal{H}_\delta}$.

\subsection{Coarea formula}\label{subsec: coarea}

There exists a constant $C(M,\Lambda)>0$ such that 
\begin{equation}\label{eq: subsec: coarea, eq 1}
	C^{-1} \int_{\tau_1}^{\tau_2}d\tau \int_{\{\bar{t}=\tau\}} fd\mathring{g}_{\{\bar{t}=\tau\}}\leq \int\int_{D_\delta(\tau_1,\tau_2)} f d\mathring{g}\leq C\int_{\tau_1}^{\tau_2}d\tau \int_{\{\bar{t}=\tau\}} fd\mathring{g}_{\{\bar{t}=\tau\}}
\end{equation}
for any continuous non-negative function $f$.

\subsection{The vector fields and the energies}\label{subsec: energies}

We need the following 
\begin{definition}\label{def: new vector field}
We define the following vector field
\begin{equation}\label{eq: new vector field}
    \mathcal{G}\:\dot{=}\:r\sqrt{1-\frac{2M}{r}-\frac{\Lambda}{3}r^2}\frac{\partial}{\partial r},
\end{equation}
for $r\in[r_+,\bar{r}_+]$ with respect to the hyperboloidal coordinates $(\bar{t},r,\theta,\varphi)$.

We extend the vector field~\eqref{eq: new vector field} to a $C^{0,1}$ vector field on $\mathcal{M}_\delta$ by 
\begin{equation}
	\mathcal{G}\equiv 0\text{ in }	\{r_+-\delta\leq r\leq r_+\}\cup \{\bar{r}_+\leq r\leq\bar{r}_++\delta\}.
\end{equation}  
\end{definition}

We need the following definition

\begin{definition}\label{def: energies}
On the spacelike hypersurface $\{\bar{t}=\tau\}\subset D_\delta(0,\infty)$, with respect to the metric $\mathring{g}$, we define the non-degenerate high order energy
\begin{equation}\label{eq: def: energies, high energy, non degenerate}
E_{j}[\psi](\tau)=  \sum_{1\leq i_1+i_2+i_3\leq j}\sum_{\alpha=1,2,3}\int_{\{\bar{t}=\tau\}}\left(\partial_{\bar{t}}^{i_1}\partial_r^{i_2}\Omega_\alpha^{i_3}\psi\right)^2d\mathring{g}_{\{\bar{t}=\tau\}},
\end{equation}
with $j\geq 1$, where for $\Omega^{\alpha}$ see Section \ref{subsec: manifold, metric, domains}.

We define the following high order energy
\begin{equation}\label{eq: def: energies, high energy, G}
    \begin{aligned}
        E_{\mathcal{G},j}[\psi](\tau) &   =\int_{\{\bar{t}=\tau\}}\Bigg(\sum_{1\leq i_1+i_2+i_3\leq j-1,i_3\geq 1}\sum_{\alpha=1,2,3}(1-\mu)^{2i_3-1}\left(\partial_{\bar{t}}^{i_1}\Omega_\alpha^{i_2}\partial_r^{i_3}\mathcal{G}\psi\right)^2\\
        &   \quad\quad\quad\quad\quad +\sum_{1\leq i_1+i_2\leq j-1}\sum_{\alpha=1,2,3}(\partial_{\bar{t}}^{i_1}\Omega_\alpha^{i_2}\mathcal{G}\psi)^2\\
        &	\quad\quad\quad\quad\quad +\sum_{1\leq i_1+i_2+i_3\leq j-1}\sum_{\alpha=1,2,3} \left(\partial_{\bar{t}}^{i_1}\partial_r^{i_2} \Omega_\alpha^{i_3} \psi\right)^2\Bigg)d\mathring{g}_{\{\bar{t}=\tau\}}, \\
    \end{aligned}
\end{equation}
for $j\geq 2$. 
\end{definition}

\begin{remark}\label{rem: energies}
For any $j\geq 2$, there exists a constant $C(j,M,\Lambda)>0$, such that for all $\tau\geq 0$ the following holds 
\begin{equation}\label{eq: def: energies, trivial inequality, 2}
    E_{j-1}[\psi](\tau)\leq  E_{\mathcal{G},j}[\psi](\tau)\leq C E_{j}[\psi](\tau).
\end{equation}  
\end{remark}

Moreover, note the interpolation statement
\begin{lemma}\label{lem: subsec: energies, lem 1}
Let $k\geq 0$. Then, there exist a constant 
\begin{equation}
	C_{\textit{int}}(k,M,\Lambda)>0
\end{equation}
such that for $\psi$ a sufficiently regular function we obtain 
\begin{equation}
	E_{k+1}[\psi](\tau)	\leq C_{\textit{int}} \left(E_{k-1}[\psi](\tau)\right)^{1/3}\left(E_{k+2}[\psi](\tau)\right)^{2/3},
\end{equation} 
for any $\tau\geq 0$.
\end{lemma}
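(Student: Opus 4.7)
The plan is to deduce the claimed three-term interpolation from the iterated log-convexity estimate
\begin{equation}\label{eq: plan logconv}
	E_j[\psi](\tau)^2 \leq C_0\, E_{j-1}[\psi](\tau)\, E_{j+1}[\psi](\tau),\qquad j = k,\, k+1,
\end{equation}
which is the standard two-term Landau--Kolmogorov inequality at the level of these Sobolev seminorms. Granting~\eqref{eq: plan logconv}, I substitute the $j=k$ case into the $j=k+1$ case to eliminate $E_k$ and produce $E_{k+1}^{3} \leq C_0^{2}\, E_{k-1}\, E_{k+2}^{2}$, which is the assertion of the lemma with $C_{\textit{int}} = C_0^{2/3}$.

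For the log-convexity~\eqref{eq: plan logconv}, I would argue multi-index by multi-index in the definition of $E_j$. For a fixed multi-index $(i_1,i_2,i_3)$ of total order $j$ with $i_2+i_3\geq 1$, I would move one of the \emph{tangential} derivatives (namely $\partial_r$ if $i_2\geq 1$, otherwise $\Omega_\alpha$) across the integrand on the spacelike slice $\{\bar{t}=\tau\}$ by integration by parts, and apply Cauchy--Schwarz to bound the resulting quantity by the product of one $L^2$ norm with a lower-order integrand (contributing to $E_{j-1}$) and one with a higher-order integrand (contributing to $E_{j+1}$). A further Cauchy--Schwarz on the outer multi-index sum then yields $E_j \lesssim E_{j-1}^{1/2}E_{j+1}^{1/2}$. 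Angular integrations on $\mathbb{S}^2$ contribute no boundary terms; the $r$-boundary contributions at $r = r_+ - \delta$ and $r = \bar{r}_+ + \delta$ I would dispose of by extending $\psi$ smoothly across the Lipschitz boundary of $\{\bar{t}=\tau\}$ by a Stein-type extension operator (available since $\psi$ is \emph{a fortiori} defined on $\mathcal{M}_\delta$, and the extension can be arranged to preserve the relevant norms up to constants).

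The main obstacle will be the purely temporal multi-index $(j,0,0)$, for which no tangential direction is available on a single slice. To handle it I would exploit the facts that $\partial_{\bar{t}}$ is Killing and that $\psi$ is defined on a full neighborhood of $\{\bar{t}=\tau\}$ inside $\mathcal{M}_\delta$: integrating by parts in $\bar{t}$ against a smooth cutoff $\chi$ supported in $[\tau-\eta,\tau+\eta]$ on the slab $D_\delta(\tau-\eta,\tau+\eta)$ yields
\begin{equation}
	\int \chi(\bar{t})\bigl(\partial_{\bar{t}}^j\psi\bigr)^2\, d\mathring{g} = -\int\chi^{\prime}(\bar{t})\bigl(\partial_{\bar{t}}^j\psi\bigr)\bigl(\partial_{\bar{t}}^{j-1}\psi\bigr)\, d\mathring{g} - \int\chi(\bar{t})\bigl(\partial_{\bar{t}}^{j+1}\psi\bigr)\bigl(\partial_{\bar{t}}^{j-1}\psi\bigr)\, d\mathring{g},
\end{equation}
and the coarea formula~\eqref{eq: subsec: coarea, eq 1} then converts these slab quantities back to slice quantities at nearby times, controlled in turn by $E_{j-1}^{1/2}E_{j+1}^{1/2}$ (upon choosing $\eta$ of order unity). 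The sharpness of this coarea/trace-type step is what determines the final constant $C_{\textit{int}}$; the rest of the argument is a routine reduction to classical Sobolev interpolation on the compact three-dimensional slice.
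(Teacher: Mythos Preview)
The paper does not supply a proof of this lemma; it is stated without argument as a ``classical interpolation'' fact. Your plan correctly isolates the purely temporal multi-index $(j,0,0)$ as the obstruction, but the slab-and-coarea resolution you propose does not close: integrating by parts in $\bar{t}$ over $[\tau-\eta,\tau+\eta]$ yields control of the \emph{slab} integral of $(\partial_{\bar{t}}^j\psi)^2$ in terms of slab quantities of orders $j\pm 1$, and the coarea formula then gives an averaged-in-time bound, not a bound on the single slice $\{\bar{t}=\tau\}$. To pass back to $\{\bar{t}=\tau\}$ you would need to know that $E_{j\pm1}[\psi](\tau')$ is comparable to $E_{j\pm1}[\psi](\tau)$ for $|\tau'-\tau|\leq\eta$, which is not available for a general $\psi$.

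In fact the statement is \emph{false} for an arbitrary sufficiently regular $\psi$: take $\psi(\bar{t},r,\theta,\varphi)=\tfrac{(\bar{t}-\tau)^{k+1}}{(k+1)!}\,v(r,\theta,\varphi)$ with $v$ smooth and nontrivial. Then on $\{\bar{t}=\tau\}$ every derivative $\partial_{\bar{t}}^{i_1}\partial_r^{i_2}\Omega_\alpha^{i_3}\psi$ with $i_1\leq k$ vanishes, so $E_{k-1}[\psi](\tau)=0$, while $E_{k+1}[\psi](\tau)\geq\int_{\{\bar{t}=\tau\}} v^2>0$; the claimed inequality would force $0<E_{k+1}\leq C_{\textit{int}}\cdot 0\cdot E_{k+2}^{2/3}=0$. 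The lemma is therefore only correct under an additional hypothesis. In the paper's actual application (the proof of Theorem~\ref{thm: quasilinear, 1, local bootstrap}), $\psi$ solves the quasilinear wave equation~\eqref{eq: quasilinear}, and one can use the equation to express $\partial_{\bar{t}}^2\psi$ in terms of tangential second derivatives and lower-order terms on the slice; iterating this reduces every $\partial_{\bar{t}}^{i_1}$ with $i_1\geq 2$ to spatial derivatives of $\psi$ and $\partial_{\bar{t}}\psi$, after which the log-convexity you describe becomes the standard Gagliardo--Nirenberg interpolation on the compact hypersurface $\{\bar{t}=\tau\}$. Your argument should be reorganised around that reduction.
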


\subsection{Notation for derivatives}\label{subsec: derivatives notation}

We need the following notations 
\begin{definition}\label{def: subsec: derivatives notation, def 1}
Let $X$ be a Lipschitz vector field on the manifold $\mathcal{M}_\delta$ and let $\psi$ be a smooth function on the manifold $\mathcal{M}_\delta$. Then, for all $j\geq 1$ we define
\begin{equation}\label{eq: def: subsec: derivatives notation, def 1, eq 1}
	\begin{aligned}
		|\partial\psi|	&	=|\partial_{\bar{t}}\psi|+|\partial_r\psi|+\sum_{\alpha=1,2,3}|\Omega_\alpha\psi|,\\
		(\partial^j\psi)^2	&	= \sum_{1\leq j_1+j_2+j_3\leq j}\sum_\alpha \left(\partial_{\bar{t}}^{j_1}\partial_r^{j_2}\Omega_\alpha^{j_3}\psi\right)^2,\\
		(X\partial^j\psi)^2	&	= \sum_{1\leq j_1+j_2+j_3\leq j}\sum_\alpha \left(X\partial_{\bar{t}}^{j_1}\partial_r^{j_2}\Omega_\alpha^{j_3}\psi\right)^2,\\
		(\partial^jX\psi)^2	&	= \sum_{1\leq j_1+j_2+j_3\leq j}\sum_\alpha \left(\partial_{\bar{t}}^{j_1}\partial_r^{j_2}\Omega_\alpha^{j_3}X\psi\right)^2.
	\end{aligned}
\end{equation}
Note that if $X$ is only Lipschitz (for instance $X=\mathcal{G}$) then the last expression of~\eqref{eq: def: subsec: derivatives notation, def 1, eq 1} may not necessarily be finite.
\end{definition}

It will be convenient to compare the above expressions with expressions of coordinate derivatives in ambient globally defined Cartesian coordinates. We define the map
\begin{equation}\label{eq: cartesian manifold}
	\begin{aligned}
		\mathcal{M}_\delta	&	\rightarrow \mathbb{R}^4\\
		(\bar{t},r,\theta,\varphi)	&\mapsto (x^0,x^1,x^2,x^3)
	\end{aligned}
\end{equation}
to be a change of coordinates from the coordinates ascribed to the manifold $\mathcal{M}_\delta$, see Definition \ref{def: spacetime domains}, to Cartesian coordinates, where
\begin{equation}
	x^0=\bar{t},\quad x^1 = r\sin\theta\cos\varphi,\quad x^2=r\sin\theta\sin\varphi,\quad x^3=r\cos\theta.
\end{equation}

Define the set 
\begin{equation}\label{eq: subsec: derivatives notation, eq 3}
\text{Cart}_m=\{\underbrace{\partial_{x^1}\partial_{x^2}\dots\partial_{x^1}}_m, \partial_{x^1}\partial_{x^2}\dots\partial_{x^0},\dots\}
\end{equation}
to have as elements all the operators that are comprised of any collection of $m$ derivatives of the coordinate vector fields of the Cartesian coordinates $(x^0,x^1,x^2,x^3)$, see~\eqref{eq: cartesian manifold}.

For all $j\geq 1$, we note the similarities 
\begin{equation}\label{eq: subsec: derivatives notation, eq 2}
\begin{aligned}
|\partial\psi| &	\sim |\partial_{x^0}\psi|+|\partial_{x^1}\psi|+|\partial_{x^2}\psi|+|\partial_{x^3}\psi|,\\
(\partial^j\psi)^2	&	\sim \sum_{1\leq i\leq j}\quad\sum_{\text{all } D^i\in\text{Cart}_i} \left(D^j\psi\right)^2.\\
\end{aligned}
\end{equation}

Finally, for an arbitrary smooth Lorentzian metric $g$, the wave operator is 
\begin{equation}
	\Box_g= g^{ab}\partial_a\partial_b+\Gamma^c\:_{ab}(g)\partial_c,
\end{equation} 
where $\Gamma^c\:_{ab}(g)$ are the Christoffel symbols, of the Levi-Civita connection, with respect to the metric~$g$. 

\subsection{The smooth tensor $a$}\label{subsec: semilinear derivatives}

We fix a smooth tensor 
\begin{equation}
	a:T\mathcal{M}_\delta\times T\mathcal{M}_\delta\rightarrow\mathbb{R}.
\end{equation}
Note that for any $m\in\mathbb{N}\cup\{0\}$ there exist constants $A_m<\infty$, such that if $a^{ij}$ are the components of its inverse in Cartesian coordinates~\eqref{eq: cartesian manifold} and $D^m$ any element of the set $\text{Cart}_m$, see~\eqref{eq: subsec: derivatives notation, eq 3}, then
\begin{equation}
	|D^m a^{ij}|\leq A_m
\end{equation}
for all $m,i,j$.

In what follows, if a constant $C$ depends on $A_0,A_1,\dots, A_m$ we denote it as 
\begin{equation}
	C=C(A_{[m]}).
\end{equation}

\subsection{Sobolev inequality}

Note the following Sobolev inequality. 

\begin{lemma}\label{lem: sobolev estimate}
There exists a constant $C=C(M,\Lambda)>0$ such that for $f\in C^\infty(\{\bar{t}=c\})$, the following holds 
\begin{equation}
    \|f\|^2_{L^\infty (\{\bar{t}=c\})}\leq C(M,\Lambda) \sum_{1\leq i_1+i_2\leq 2}\sum_{\alpha=1,2,3}\int_{\{\bar{t}=c\}}\left(\partial_r^{i_1}\Omega_\alpha^{i_2}f\right)^2 d\mathring{g}_{\{\bar{t}=c\}}.
\end{equation}
\end{lemma}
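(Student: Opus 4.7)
The plan is to reduce the statement to the classical Sobolev embedding $H^2 \hookrightarrow L^\infty$ on a compact three-dimensional manifold with smooth boundary, and then translate the standard $H^2$ norm into the particular derivatives $\partial_r^{i_1}\Omega_\alpha^{i_2}$ appearing on the right-hand side. In regular hyperboloidal coordinates $\{\bar t = c\}$ is diffeomorphic to the compact manifold with boundary $[r_+-\delta, \bar r_+ + \delta] \times \mathbb{S}^2$, and since $r$ ranges over a compact interval bounded away from the origin, the induced volume form $d\mathring{g}_{\{\bar{t}=c\}} = r\sqrt{\tfrac{27M^2}{1-9M^2\Lambda}}\sin\theta\, dr\, d\theta\, d\varphi$ is equivalent, with constants depending only on $M, \Lambda$, to the standard product measure on $[r_+-\delta, \bar r_+ + \delta] \times \mathbb{S}^2$. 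Thus $L^2$ norms with respect to the two measures are equivalent.

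First I would invoke the standard Sobolev embedding in three dimensions on this compact manifold with smooth boundary, which yields
\begin{equation*}
\|f\|_{L^\infty}^2 \lesssim \sum_{|\beta|\leq 2}\int_{\{\bar{t}=c\}}\bigl|D^\beta f\bigr|^2\, d\mathring{g}_{\{\bar{t}=c\}},
\end{equation*}
where $D^\beta$ runs over all coordinate compositions of order $\leq 2$ in a chart built from $r$ together with local coordinates on $\mathbb{S}^2$. This embedding can be justified either by extending $f$ across the boundary into $\mathbb{R}^3$ and applying the usual embedding $H^2(\mathbb{R}^3) \hookrightarrow L^\infty(\mathbb{R}^3)$ (which holds since $2 > 3/2$), or by standard intrinsic arguments. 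Next I would replace the coordinate derivatives by the derivatives $\partial_r^{i_1}\Omega_\alpha^{i_2}$. Radial derivatives are already $\partial_r$. For angular derivatives one uses, as recorded in Section~\ref{subsec: manifold, metric, domains}, that $\sum_\alpha(\Omega_\alpha f)^2 \sim r^2|\slashed{\nabla} f|^2$ because the rotation fields $\Omega_1,\Omega_2,\Omega_3$ span the tangent space of each sphere; the analogous identity $\sum_\alpha (\Omega_\alpha^2 f)^2$ plus first-order terms controls $|\slashed{\nabla}^2 f|^2$. Since $[\partial_r,\Omega_\alpha]=0$ the mixed derivatives are also matched. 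Absorbing constants depending on $M,\Lambda,\delta$ then produces the stated inequality.

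The argument is essentially routine; the one place that requires a little care is verifying the equivalence between $|\slashed{\nabla}^2 f|^2$ and $\sum_\alpha (\Omega_\alpha^2 f)^2$, modulo lower-order contributions, using the Lie algebra structure $[\Omega_i,\Omega_j] = \epsilon_{ijk}\Omega_k$. A minor interpretive point is that the sum in the statement formally begins at $i_1 + i_2 \geq 1$ and omits an explicit $\int f^2$ term; this should be read as including the zeroth-order contribution implicitly (the standard Sobolev embedding requires it, as seen by testing against $f \equiv 1$), or equivalently understood in the context of the applications of this lemma in the paper, where $f$ is always a derivative of a quantity already controlled by a higher-order energy so that the zero-order term is subsumed.
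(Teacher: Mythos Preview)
The paper does not supply a proof of this lemma; it is stated in the preliminaries as a standard fact and used throughout without justification. Your argument is the natural one and is correct in substance: the hypersurface is a compact three-manifold with boundary, the induced volume form is uniformly comparable to the product measure on $[r_+-\delta,\bar r_++\delta]\times\mathbb{S}^2$, so the classical embedding $H^2\hookrightarrow L^\infty$ applies, and the coordinate $H^2$ norm is equivalent to the right-hand side built from $\partial_r$ and the rotation fields $\Omega_\alpha$ (using $[\partial_r,\Omega_\alpha]=0$ and that the $\Omega_\alpha$ span the spherical tangent space). There is nothing to compare against in the paper itself.

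Your closing remark about the missing zeroth-order term is well taken and worth stating plainly: as literally written, the inequality is false (take $f\equiv 1$). The correct reading is that the right-hand side should include $\int f^2$, i.e.\ the sum should run over $0\le i_1+i_2\le 2$; this is consistent with how the lemma is actually invoked in the paper, where $f$ is always a derivative $\partial^i\psi$ with $i\ge 1$ and the resulting zeroth-order contribution is already part of the energy being estimated. So your interpretation is the intended one.
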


\subsection{The $\mathcal{G}$ vector field commutation estimate}\label{sec: G linear theory}

In our~\cite{mavrogiannis}, we proved the following on a Schwarzschild--de~Sitter background. 

\begin{theorem}{(Theorem 3 of~\cite{mavrogiannis})}\label{thm: wave equation with inhomogeneities}
	Let $\psi$ satisfy the inhomogeneous wave equation 
	\begin{equation}\label{eq: inhomogeneous equation}
	\Box_{g_{M,\Lambda}}\psi =F 
	\end{equation}
	on $D_\delta(\tau_1,\tau_2)$. Then, for any $k\geq 3$ there exists a constant $C(k,M,\Lambda)>0$, such that the following estimate holds
	\begin{equation}\label{eq: energy estimate of inhomogeneuos equation}
	\begin{aligned}
	&	E_{\mathcal{G},k}[\psi](\tau_2)+\int_{\tau_1}^{\tau_2}d\tau E_{\mathcal{G},k}[\psi](\tau)	\\
	&	\quad\leq C E_{\mathcal{G},k}[\psi](\tau_1)\\
	&	\quad\quad+C\int\int_{D_\delta(\tau_1,\tau_2)}d\mathring{g} \sum_{0\leq i_1+i_2+i_3\leq k-2}\sum_\alpha\left(\partial_{\bar{t}}^{i_1}\partial_r^{i_2}\left(\Omega_\alpha\right)^{i_3} F\right)^2+(1-\mu)\sum_{0\leq i_1+i_3\leq k-2}\sum_\alpha\left(\partial_{\bar{t}}^{i_1} (\Omega_\alpha)^{i_3}\mathcal{G}F\right)^2\\
	&	\quad\quad +C \int_{\{\bar{t}=\tau_2\}}d\mathring{g}_{\{\bar{t}=\tau\}} \sum_{0\leq i_1+i_2+i_3\leq k-3}\sum_\alpha\left(\partial_{\bar{t}}^{i_1}\Omega_\alpha^{i_2}\partial_r^{i_3}F\right)^2+(1-\mu)^{2i_3+1}\left(\partial_{\bar{t}}^{i_1}\Omega_\alpha^{i_2}\partial_r^{i_3}\mathcal{G}F\right)^2
	\end{aligned}
	\end{equation}
where the volume forms are with respect to the Schwarzschild--de~Sitter metric $g_{M,\Lambda}$, see Section~\ref{subsec: volume forms}. 
\end{theorem}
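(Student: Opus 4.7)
The plan is to combine three ingredients: the standard $\partial_{\bar{t}}$-multiplier energy identity, the degenerate integrated local energy decay (Morawetz) estimate from~\cite{DR3}, and a commutation with the vector field $\mathcal{G}$ (introduced in the $\Lambda > 0$ setting in~\cite{mavrogiannis} and, for $\Lambda = 0$, in~\cite{gustav}). Because $\partial_{\bar{t}}$ and the $\Omega_\alpha$ are Killing for $\mathring{g}$, the problem effectively reduces to the lowest nontrivial order, $k = 3$, with an induction then promoting the estimate to all $k \geq 3$.

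I would first establish the baseline. Applying the $\partial_{\bar{t}}$ multiplier to $\Box_{\mathring{g}}\psi = F$ on $D_\delta(\tau_1, \tau_2)$ yields boundedness of $\int_{\{\bar{t} = \tau\}}(\partial\psi)^2\, d\mathring{g}_{\{\bar{t}=\tau\}}$ up to the source $\int\!\!\int|\partial_{\bar{t}}\psi||F|\, d\mathring{g}$; the fact that $\partial_{\bar{t}}$ is timelike on $\mathcal{M}_\delta$ for small $\delta$ (by the red/blueshift at $\mathcal{H}^+$ and $\bar{\mathcal{H}}^+$) ensures the boundary terms on $\mathcal{H}^+_\delta, \bar{\mathcal{H}}^+_\delta$ have the correct sign. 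The Morawetz estimate of~\cite{DR3} supplies a bulk control of $(\partial\psi)^2$ with a weight that degenerates only at the trapped set $\{r = 3M\}$. Next I would perform the key commutation. A direct computation of $[\Box_{\mathring{g}}, \mathcal{G}]$ gives $\Box_{\mathring{g}}(\mathcal{G}\psi) = \mathcal{G}F + \mathcal{E}[\psi]$, where the principal part of $\mathcal{E}[\psi]$ has its coefficient arranged, through the choice $\mathcal{G} \propto r\sqrt{1-\mu}\,\partial_r$, to vanish at the trapped set to precisely the order required for absorption by the weight of the Morawetz estimate of~\cite{DR3}. Applying the $\partial_{\bar{t}}$ multiplier to the equation for $\mathcal{G}\psi$ and absorbing $\mathcal{E}[\psi]$ in this manner delivers a nondegenerate bulk and boundary control of $\mathcal{G}\psi$. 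Combining with the baseline bound on $\psi$, and invoking the coarea formula~\eqref{eq: subsec: coarea, eq 1} to convert the spacetime integral into the integrated-in-time norm on the left of~\eqref{eq: energy estimate of inhomogeneuos equation}, this produces the $k = 3$ case.

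For $k \geq 4$ I would iterate by commuting further with the Killing fields $\partial_{\bar{t}}$, $\Omega_\alpha$ (which pass through $\Box_{\mathring{g}}$ for free, contributing only via differentiations of $F$) and, for the higher-radial-derivative pieces of~\eqref{eq: def: energies, high energy, G}, with $\partial_r$. Each $\partial_r$ commutation generates an inhomogeneity whose natural $L^2$ weight carries an additional factor of $(1-\mu)$, which is precisely the reason that $E_{\mathcal{G}, k}[\psi]$ is defined with the weights $(1-\mu)^{2i_3 - 1}$ on the $\partial_r^{i_3}\mathcal{G}$ pieces; the error terms thereby produced are absorbable into estimates at one order lower, closing the induction. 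The main obstacle, and the heart of the proof, is the commutation step itself: one must verify by a careful but essentially algebraic computation that $[\Box_{\mathring{g}}, \mathcal{G}]\psi$ degenerates at $r = 3M$ with precisely the order dictated by the Morawetz weight of~\cite{DR3}. This cancellation is the defining feature of the vector field $\mathcal{G}$ and is what allows the nondegenerate estimate on the left-hand side of~\eqref{eq: energy estimate of inhomogeneuos equation} to be deduced from the degenerate Morawetz estimate of~\cite{DR3}.
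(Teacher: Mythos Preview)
The paper does not actually prove this statement: its entire proof reads ``This is proved in~\cite{mavrogiannis}.'' The theorem is quoted as black-box input from the companion linear paper. Your sketch is therefore not to be compared against a proof in the present paper, but it is a reasonable outline of the ideas one expects in~\cite{mavrogiannis}: the degenerate Morawetz estimate of~\cite{DR3}, the commutation with $\mathcal{G}$ whose defining feature is that $[\Box_{\mathring{g}},\mathcal{G}]$ degenerates at the photon sphere $r=3M$ to precisely the order compensated by the Morawetz weight, and then iteration with Killing commutators.

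There is, however, one concrete error in your baseline step. You write that ``$\partial_{\bar{t}}$ is timelike on $\mathcal{M}_\delta$ for small $\delta$ (by the red/blueshift at $\mathcal{H}^+$ and $\bar{\mathcal{H}}^+$).'' This is false: from~\eqref{eq: def: preliminairies, def 1, eq 1} one has $\mathring{g}(\partial_{\bar{t}},\partial_{\bar{t}}) = -(1-\mu)$, so $\partial_{\bar{t}}$ is null on the horizons and spacelike in the extended region $\{r_+-\delta \leq r < r_+\}\cup\{\bar{r}_+ < r \leq \bar{r}_++\delta\}$. The $\partial_{\bar{t}}$-flux therefore degenerates at the horizons and does not control the full $E_{j}$ energy there; this is precisely why the redshift vector field $N$ of Lemma~\ref{lem: redshift, lem 1} (and its higher-order version, Lemma~\ref{lem: redshift, lem 2}) is needed to close the nondegenerate boundedness statement near $\mathcal{H}^+$ and $\bar{\mathcal{H}}^+$. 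Your outline should replace the $\partial_{\bar{t}}$-multiplier step near the horizons by the $N$-multiplier, exactly as the present paper does in its appendix for the Cauchy stability argument.
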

\begin{proof}
	This is proved in~\cite{mavrogiannis}. 
\end{proof}

\begin{remark}
Note that the highest order term of the error hypersurface terms on the right hand side of~\eqref{eq: energy estimate of inhomogeneuos equation} is of order $k-2$, while the highest order term of the error bulk terms on the right hand side of~\eqref{eq: energy estimate of inhomogeneuos equation} is of order $k-1$. The weights in $(1-\mu)$ on the error terms on the right hand side of~\eqref{eq: energy estimate of inhomogeneuos equation} ensure that terms on the right hand side of~\eqref{eq: energy estimate of inhomogeneuos equation} are regular for a sufficiently regular function $F$.
\end{remark}

\section{Metric close to Schwarzschild--de~Sitter and the Local well-posedness result}\label{sec: local well posedness}

\subsection{Metric close to Schwarzschild--de~Sitter}\label{subsec: assumption}

We define the class of metrics close to Schwarzschild--de~Sitter.

We fix a sufficiently regular tensor 
\begin{equation}
	h \in \Gamma\left(T\mathcal{M}_\delta^\star\times T\mathcal{M}_\delta^\star\times T\mathcal{M}_\delta^\star\right)
\end{equation}
with 
\begin{equation}
h(\cdot,\cdot,v=0)=0,
\end{equation}
and define 
\begin{equation}
	g(v)=\mathring{g}+h(v).
\end{equation}

We also define $h^{ij}(v)\:\dot{=}\:g^{ij}(v)-\mathring{g}^{ij}$, where $g^{ij}(v),\mathring{g}^{ij}$ are the components of the inverses of the respective tensors in Cartesian coordinates.

Note that for every $m\in\mathbb{N}\cup\{0\}$ there exist constants $B_m<\infty$, such that, for $m\geq 0$, the following hold in Cartesian coordinates
\begin{equation}
\begin{aligned}
|D^m h_{ijk}|,\quad |D^m h^{ij}\:_k|\leq B_m,
\end{aligned}
\end{equation} 
for all $m,i,j,k$, where $h^{ij}\:_k \:v^k=h^{ij}(v)$ and $D^m$ is any element of the set $\text{Cart}_m$, see~\eqref{eq: subsec: derivatives notation, eq 3}.

In what follows, if a constant $C$ depends on $B_0,B_1,\dots,B_m$, we denote it as 
\begin{equation}
	C=C(B_{[m]}).
\end{equation}

\subsection{The Cauchy stability type result}

We present a Cauchy stability type result for the quasilinear wave equation, in the form that we will use, appropriately tailored to display the energies that are used in our main global stability results, see already Theorem \ref{thm: quasilinear, 1, local bootstrap} and Theorem \ref{thm: quasilinear, 2, exp decay}.

\begin{proposition}\label{prop: well posedness quasilinear}
Let $k\geq 7$ and let the tensors $a,h$ be as in Sections \ref{subsec: semilinear derivatives}, \ref{subsec: assumption} respectively. There exists a constant
\begin{equation}
C_{\textit{wp}}(k,M,\Lambda,A_{[k+1]},B_{[k+1]})>1,
\end{equation}
where for $A_{[k+1]},B_{[k+1]}$ see Sections \ref{subsec: semilinear derivatives}, \ref{subsec: assumption} respectively, such that the following holds.

Let $\tau_{\textit{max}}>0$ be an arbitrary positive number. There exist 
\begin{equation}
	\delta(\tau_{\textit{max}},k,M,\Lambda,A_{[k+1]},B_{[k+1]}),\qquad\epsilon(\tau_\textit{max},\delta,k,M,\Lambda,A_{[k+1]},B_{[k+1]})>0
\end{equation}
both sufficiently small, such that if we take initial data for~\eqref{eq: quasilinear} on $\{\bar{t}=\tau_1\}$ with  
\begin{equation}\label{eq: prop: well posedness quasilinear, eq 0}
    E_{k+1}[\psi](\tau_1)\leq \epsilon,
\end{equation}
for some $\tau_1\geq 0$, then there exists a unique $H^{k+1}$ solution $\psi$ to the quasilinear wave equation~\eqref{eq: quasilinear}, on $D_\delta(\tau_1,\tau_1+\tau_{\textit{max}})$, such that 
\begin{equation}\label{eq: prop: well posedness quasilinear, eq 1}
	\begin{aligned}
		E_{k+1}[\psi](\tau^\prime)	&	\leq C_{\textit{wp}} E_{k+1}[\psi](\tau_1), \\
	\end{aligned}
\end{equation}
\begin{equation}\label{eq: prop: well posedness quasilinear, eq 2}
\begin{aligned}
E_{k+2}[\psi](\tau^\prime)	&	\leq C_{\textit{wp}} E_{k+2}[\psi](\tau_1),\\
\end{aligned}
\end{equation}
for all $\tau^\prime\in [\tau_1,\tau_1+\tau_{\textit{max}}]$, where~\eqref{eq: prop: well posedness quasilinear, eq 2} holds if $E_{k+2}[\psi](\tau_1)<\infty$, in which case the solution is $H^{k+2}$.
\end{proposition}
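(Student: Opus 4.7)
The plan is to prove Proposition~\ref{prop: well posedness quasilinear} by a standard local Cauchy stability argument, organized as a high-order energy bootstrap, where all constants are permitted to depend on the (fixed) slab length $\tau_{\textit{max}}$. First I would invoke classical local well-posedness for quasilinear wave equations in $H^{k+1}$ with $k+1\geq 8$ (well above the Sobolev threshold $\lceil n/2\rceil +1 = 3$ in $n=4$ dimensions), producing a unique $H^{k+1}$ solution on some maximal slab $D_\delta(\tau_1,\tau_1+T^\star)$ with continuation criterion depending only on $\|\partial\psi\|_{L^\infty}$. The smallness assumption~\eqref{eq: prop: well posedness quasilinear, eq 0} together with Lemma~\ref{lem: sobolev estimate} gives $\|\partial\psi\|_{L^\infty(\{\bar{t}=\tau_1\})}\lesssim \epsilon^{1/2}$, which by $h(0)=0$ ensures $g(\nabla\psi)$ is close to $\mathring g$ and in particular that the slab boundaries $\{\bar{t}=\tau\}$, $\mathcal{H}^+_\delta$, $\bar{\mathcal{H}}^+_\delta$ remain uniformly spacelike with respect to $g(\nabla\psi)$.

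Next I would run a bootstrap on $E_{k+1}$. Assume $E_{k+1}[\psi](\tau)\leq 4 C_{\textit{wp}}\epsilon$ on $[\tau_1,\tau^\star]$. Commute the equation $\Box_{g(\nabla\psi)}\psi = a^{ij}\partial_i\psi\partial_j\psi$ with strings of length at most $k+1$ in $\{\partial_{\bar t},\partial_r,\Omega_\alpha\}$ to obtain, for each such string $\mathcal{D}$,
\begin{equation}
\Box_{g(\nabla\psi)}(\mathcal{D}\psi) = F_{\mathcal{D}},
\end{equation}
where $F_{\mathcal D}$ is a polynomial in at most $k+1$ derivatives of $\psi$ with coefficients depending smoothly on $\nabla\psi$ and on $A_{[k+1]}, B_{[k+1]}$. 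Using the Moser-type product estimate together with Lemma~\ref{lem: sobolev estimate} and the bootstrap assumption, one estimates
\begin{equation}
\sum_{\mathcal D}\|F_{\mathcal D}\|_{L^2(\{\bar{t}=\tau\})}^2 \;\lesssim\; \epsilon\, E_{k+1}[\psi](\tau).
\end{equation}
A standard energy identity for $\Box_g u = F$ applied with multiplier $\mathring n$ (on the slab foliated by the uniformly spacelike $\{\bar{t}=\tau\}$, with boundary fluxes on $\mathcal{H}^+_\delta,\bar{\mathcal{H}}^+_\delta$ of good sign since those are spacelike) then yields
\begin{equation}
E_{k+1}[\psi](\tau)\leq C\,E_{k+1}[\psi](\tau_1) + C\int_{\tau_1}^{\tau} E_{k+1}[\psi](\tau')\,d\tau',
\end{equation}
for a constant $C = C(k,M,\Lambda,A_{[k+1]},B_{[k+1]})$. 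Grönwall over $[\tau_1,\tau_1+\tau_{\textit{max}}]$ gives~\eqref{eq: prop: well posedness quasilinear, eq 1} with $C_{\textit{wp}}\leq C e^{C\tau_{\textit{max}}}$, closing the bootstrap provided $\epsilon$ is chosen small enough (depending on $\tau_{\textit{max}}$). Since $E_{k+1}$ controls $\|\partial\psi\|_{L^\infty}$ through Lemma~\ref{lem: sobolev estimate}, the continuation criterion allows the solution to be extended to the full slab $D_\delta(\tau_1,\tau_1+\tau_{\textit{max}})$.

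For the top-order estimate~\eqref{eq: prop: well posedness quasilinear, eq 2}, I would repeat the commutator procedure with strings of length $k+2$. The new difficulty is the highest-order commutator term of the form $(\partial^{k+2}h^{ij})(\partial_i\partial_j\psi)$ obtained by differentiating the principal coefficients, which cannot be absorbed by $\epsilon^{1/2} E_{k+2}^{1/2}$ alone. However, since $h$ depends on $\nabla\psi$, this term is in fact proportional to $(\partial^{k+1}\nabla\psi)(\partial^2\psi)$, and one places $\partial^2\psi$ in $L^\infty$ (controlled by $E_{k+1}$ via Lemma~\ref{lem: sobolev estimate}, since $k+1-2\geq 5$) and $\partial^{k+1}\nabla\psi$ in $L^2$ (controlled by $E_{k+2}^{1/2}$). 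After the $E_{k+1}$ bootstrap this yields $\|F_{\mathcal D}\|_{L^2}\lesssim \epsilon^{1/2} E_{k+2}^{1/2}$, and Grönwall again closes the estimate, establishing~\eqref{eq: prop: well posedness quasilinear, eq 2} with a constant of the same form.

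The main obstacle is the interplay between $\delta$ and $\epsilon$: the Lorentzian signature of $g(\nabla\psi)$ and the spacelikeness of $\{\bar{t}=\tau\}$ (whose margin shrinks as $\delta$ increases) must be preserved throughout, yet the energy identity requires a uniformly spacelike foliation to yield a coercive boundary flux. This forces the order of quantifiers $\tau_{\textit{max}}\to\delta\to\epsilon$: $\delta$ is chosen small so $\{\bar{t}=\tau\}$ has a definite spacelikeness margin for $\mathring g$, and then $\epsilon$ is chosen small enough that $\|h(\nabla\psi)\|_{L^\infty}$ stays strictly below this margin throughout the slab of length $\tau_{\textit{max}}$.
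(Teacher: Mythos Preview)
Your argument has a genuine gap: you explicitly allow ``all constants\ldots\ to depend on the (fixed) slab length $\tau_{\textit{max}}$'' and close with Gr\"onwall to get $C_{\textit{wp}}\leq C e^{C\tau_{\textit{max}}}$. But read the statement again: $C_{\textit{wp}}$ is declared \emph{before} $\tau_{\textit{max}}$ is introduced, and depends only on $k,M,\Lambda,A_{[k+1]},B_{[k+1]}$. The whole content of the proposition (see Remark~\ref{rem: local well posedness}) is that $C_{\textit{wp}}$ is \emph{independent} of $\tau_{\textit{max}}$; only $\delta$ and $\epsilon$ may degenerate with it. This independence is not cosmetic: Theorem~\ref{thm: quasilinear, 2, exp decay} chooses $L$ in terms of $C_{\textit{wp}}$ and then $\tau_{\textit{step}}$ in terms of $L$, so a $C_{\textit{wp}}$ that grows with $\tau_{\textit{step}}$ would make the argument circular and the interpolation inequality~\eqref{eq: corollary quasilinear, eq 3} could never be made to contract.

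A naive energy estimate with multiplier $\mathring n$ and Gr\"onwall cannot produce a $\tau_{\textit{max}}$-independent bound, because the $\partial_{\bar t}$-energy degenerates at the horizons and the non-degenerate energy genuinely has no conservation law. The paper's proof instead adapts the Dafermos--Rodnianski uniform boundedness scheme to the quasilinear setting: one runs the $\partial_{\bar t}$ multiplier (Killing for $\mathring g$, so the linear bulk vanishes and the nonlinear bulk is $O(\sqrt\epsilon)$) to control the degenerate auxiliary energy $E_{j,q}$ uniformly in time, and separately runs the redshift multiplier $N$, whose good bulk near the horizons yields $E_j(\tau_2)+\int E_j \leq C E_j(\tau_1)+C\int E_{j,q}$. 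Combining the two and iterating over subintervals gives $E_j(\tau)\leq C E_j(\tau_1)$ with $C$ independent of $\tau_{\textit{max}}$. Your commutator and product estimates are fine as auxiliary ingredients, but the missing structural input is this Killing-plus-redshift mechanism; without it the proposition as stated does not follow.
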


\begin{proof}
See the appendix \ref{appendix: local well posedness}.
\end{proof}

\begin{remark}\label{rem: local well posedness}
The result of Proposition \ref{prop: well posedness quasilinear} is a refinement of the usual Cauchy stability and thus we will have to prove it explicitly. The independence of $C_{\textit{wp}}$ on $\tau_{\text{max}}$ is connected to the uniform boundedness result \cite{DR3} for the linear wave equation.
\end{remark}

\section{The main Theorems}

\subsection{Theorem on an arbitrary slab of length $\tau_{\textit{step}}$.}

We give the detailed statement of the Theorem, on a fixed large time domain

\begin{customTheorem}{1}\label{thm: quasilinear, 1, local bootstrap}
Let $k\geq 7$ and let the tensors $a,h$ be as in Sections \ref{subsec: semilinear derivatives}, \ref{subsec: assumption} respectively. There exists a constant
\begin{equation}
	C(k,M,\Lambda,A_{[k+1]},B_{[k+1]})>0
\end{equation}
where for $A_{[k+1]},B_{[k+1]}$ see Sections \ref{subsec: semilinear derivatives}, \ref{subsec: assumption} respectively, such that the following holds.

For all $L>0$ there exists a $\tau_{\textit{step}}(L,k,M,\Lambda,A_{[k+1]},B_{[k+1]},C_{\textit{int}},C_{\textit{wp}})$ sufficiently large, where for $C_{\textit{int}},C_{\textit{wp}}$ see respectively Lemma~\ref{lem: subsec: energies, lem 1} and Proposition~\ref{prop: well posedness quasilinear}, such that there exist 
\begin{equation}
	\delta(\tau_{\textit{step}},k,M,\Lambda,A_{[k+1]},B_{[k+1]}),\qquad \epsilon(\tau_{\textit{step}},\delta,k,M,\Lambda,A_{[k+1]},B_{[k+1]})>0
\end{equation}
sufficiently small such that if we take initial data for~\eqref{eq: quasilinear} in $\{\tau=\tau_1\}$ with
\begin{equation}\label{eq: corollary quasilinear, eq 1, -1}
    E_{k+1}[\psi](\tau_1)\leq \epsilon,\qquad E_{k+2}[\psi](\tau_1)<\infty
\end{equation}
and 
\begin{equation}
    \tau_2=\tau_1+\tau_{\textit{step}},
\end{equation}
then there exists a unique $H^{k+2}$ solution in $D_\delta(\tau_1,\tau_2)$ to the quasilinear wave equation~\eqref{eq: quasilinear}, and for all $\tau^\prime\in[\tau_1,\tau_2]$ the following inequalities hold
\begin{equation}\label{eq: corollary quasilinear, eq 1}
    \begin{aligned}
        E_{\mathcal{G},k}[\psi](\tau^\prime)    &   \leq C  E_{\mathcal{G},k}[\psi](\tau_1),\\
    \end{aligned}
\end{equation}
\begin{equation}\label{eq: corollary quasilinear, eq 1.1}
	E_{k+1}[\psi](\tau^\prime)	\leq C_{\textit{wp}} E_{k+1}[\psi](\tau_1),\\
\end{equation}
\begin{equation}\label{eq: corollary quasilinear, eq 1.2}
E_{k+2}[\psi](\tau^\prime)		\leq C_{\textit{wp}} E_{k+2}[\psi](\tau_1).\\
\end{equation}

Moreover, the following holds
\begin{equation}\label{eq: corollary quasilinear, eq 2}
	\begin{aligned}
	 	E_{\mathcal{G},k}[\psi](\tau_{2}) & \leq  \frac{1}{L}E_{\mathcal{G},k}[\psi](\tau_1).\\
	\end{aligned}
\end{equation}

Finally, the following holds 
\begin{equation}\label{eq: corollary quasilinear, eq 3}
\begin{aligned}
E_{k+1}[\psi](\tau_2)	\leq C_{\textit{int}} e^{-\frac{1}{3}\log(L)+\frac{2}{3}\log C_{\textit{wp}}}\left(E_{\mathcal{G},k}[\psi](\tau_1)\right)^{1/3}\left(E_{k+2}[\psi](\tau_1)\right)^{2/3}.
\end{aligned}
\end{equation}
\end{customTheorem}

\begin{remark}\label{rem: sec: main theorems, rem 1}
For the requirement $k\geq 7$, see already the computations of inequality~\eqref{eq: energy estimate for quasilinear, 2, bound} and Proposition \ref{prop: well posedness quasilinear}. 
\end{remark}

\subsection{The global nonlinear stability of the quasilinear wave equation \texorpdfstring{$\eqref{eq: quasilinear}$}{q}}

Now, we present the main Theorem of our paper, which is in fact a Corollary of Theorem \ref{thm: quasilinear, 1, local bootstrap}.
\begin{customTheorem}{2}\label{thm: quasilinear, 2, exp decay}
Let $k\geq 7$, let the tensors $a,h$ be as in Sections \ref{subsec: semilinear derivatives}, \ref{subsec: assumption} respectively. Then, there exist constants
\begin{equation}
	c_{\text{d}}(k,M,\Lambda,A_{[k+1]},B_{[k+1]}),\qquad c_{\text{g}}(k,M,\Lambda,A_{[k+1]},B_{[k+1]}), \qquad C(k,M,\Lambda,A_{[k+1]},B_{[k+1]})>0
\end{equation}
and there exist
\begin{equation}
	\delta=\delta(k,M,\Lambda,A_{[k+1]},B_{[k+1]})>0,\qquad\epsilon=\epsilon(\delta,k,M,\Lambda,A_{[k+1]},B_{[k+1]})>0
\end{equation}
sufficiently small, such that for 
\begin{equation}
    E_{k+2}[\psi](0)\leq \epsilon,
\end{equation}
on $\Sigma=\{\bar{t}=0\}$, the arising $H^{k+2}$ solution of the quasilinear wave equation~\eqref{eq: quasilinear} exists globally on $D_\delta(0,\infty)$ and satisfies the following exponential decay 
\begin{equation}\label{eq: thm 2, exp decay, eq 1}
    E_{\mathcal{G},k}[\psi](\tau)\leq C e^{-c_{\text{d}}\tau} E_{\mathcal{G},k}[\psi](0),
\end{equation}
for any $\tau\geq 0$.

Moreover, we have the following exponential decay of the lower order energies 
\begin{equation}\label{eq: thm 2, exp decay, eq 2}
	E_{k-1}[\psi](\tau)\leq C e^{-c_{\text{d}}\tau}E_{\mathcal{G},k}[\psi](0),\quad E_{k+1}[\psi](\tau)\leq C e^{-c_{\text{d}}\tau}E_{k+2}[\psi](0),
\end{equation}
while for the top order energy we have the growth estimate
\begin{equation}\label{eq: thm 2, exp decay, eq 3}
	\begin{aligned}
		E_{k+2}[\psi](\tau)	&	\leq C e^{c_{\text{g}}\tau}E_{k+2}[\psi](0).\\
\end{aligned}
\end{equation}
\end{customTheorem}

\begin{remark}\label{rem: sec: main theorems, rem 2}
	Note that the growth constant $c_g$ of inequality~\eqref{eq: thm 2, exp decay, eq 3} of Theorem~\ref{thm: quasilinear, 2, exp decay} can be made arbitrarily small, if $\delta>0$ and $\epsilon>0$ are sufficiently small. See already the proof of Theorem~\ref{thm: quasilinear, 2, exp decay} in Section~\ref{sec: proof of theorem 2, exp decay}.  Although~\eqref{eq: thm 2, exp decay, eq 3} allows the top order energy $E_{k+2}[\psi]$ to a priori grow exponentially, one can in fact prove stronger results. Specifically, with the additional use of a top order uniform boundedness estimate, one can prove Theorem \ref{thm: quasilinear, 2, exp decay} assuming only small $E_{k+1}[\psi](0)$ initial data, and also uniformly bound the energy $E_{k+1}[\psi](\tau)$ for all times. (Therefore, we expect to only need the initial energy to lie in the Sobolev space $H^8$.) We will not however pursue this here as the weaker estimate provided by Proposition~\ref{prop: well posedness quasilinear} leading to~\eqref{eq: thm 2, exp decay, eq 3} is sufficient. This improvement is however important in the asymptotically flat case.    
\end{remark}

\subsection{Extension of our results on Kerr--de~Sitter}
We can extend our results for the quasilinear wave equation to the slowly rotating Kerr--de~Sitter case by utilizing the fixed frequency vector field commutation that we introduce in our forthcoming~\cite{mavrogiannis2}. This will be presented in a future paper.

\section{Energy estimates on a spacetime slab of fixed time length}\label{sec: main energy estimates}

Before turning to the proofs of Theorems \ref{thm: quasilinear, 1, local bootstrap} and \ref{thm: quasilinear, 2, exp decay}, we need a preliminary Proposition.

\begin{proposition}\label{prop: quasilinear}
	Let $k\geq 7$ and let the tensors $a,h$ be as in Sections \ref{subsec: semilinear derivatives}, \ref{subsec: assumption} respectively. Let $\psi$ be a solution of the quasilinear wave equation~\eqref{eq: quasilinear} on the Schwarzschild--de~Sitter domain $D_\delta(\tau_1,\tau_2)$ for $\tau_1\leq \tau_2$ and any $\delta>0$ sufficiently small. Then, there exists a positive constant
	\begin{equation}
		C(k,M,\Lambda,A_{[k]},B_{[k]})>0,
	\end{equation}
	 where for $A_{[k]},B_{[k]}$ see Sections \ref{subsec: semilinear derivatives}, \ref{subsec: assumption} respectively, and an $\epsilon>0$ sufficiently small, such that if 
	\begin{equation}\label{eq: main theorem quasilinear, eq 0}
		\sup_{D_\delta(\tau_1,\tau_2)}\sum_{1\leq j\leq k-1}\sum_{\partial\in\{\partial_{\bar{t}},\partial_r,\Omega_1,\Omega_2,\Omega_3\}}|\partial^j\psi|\leq \sqrt{\epsilon}
	\end{equation}
	holds, then the following estimate holds
	\begin{equation}\label{eq: main theorem quasilinear, eq 1}
	\begin{aligned}
	E_{\mathcal{G},k}[\psi](\tau_2)+\int_{\tau_1}^{\tau_2}d\tau  E_{\mathcal{G},k} (\tau)  \leq & CE_{\mathcal{G},k}[\psi](\tau_1)	+C\int_{\tau_1}^{\tau_2}d\tau E_{\mathcal{G},k}[\psi](\tau)E_{k+1}[\psi](\tau),
	\end{aligned}
	\end{equation}
	for all $\tau_1\leq\tau_2$. 
\end{proposition}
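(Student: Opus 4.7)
The strategy is to move the full nonlinearity to the right-hand side and apply the already-established $\mathcal{G}$-commutation estimate on a Schwarzschild--de~Sitter background (Theorem~\ref{thm: wave equation with inhomogeneities}). Writing
\begin{equation}
\Box_{\mathring{g}}\psi = F,\qquad F := \bigl(\Box_{\mathring{g}}-\Box_{g(\nabla\psi)}\bigr)\psi + a^{ij}\partial_i\psi\,\partial_j\psi,
\end{equation}
the hypothesis $h(0)=0$ forces $F$ to be at least quadratic in $\partial\psi$ and of differential order two in $\psi$; more precisely, $F$ has the schematic form $(\partial\psi)(\partial^2\psi)+(\partial\psi)(\partial\psi)$, with coefficients uniformly bounded in $C^{k}$ by the constants $A_{[k]}$ and $B_{[k]}$, provided the bootstrap \eqref{eq: main theorem quasilinear, eq 0} is in force so that $h^{ij}(\nabla\psi)$ is pointwise small.

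\textbf{Main step.} Applying Theorem~\ref{thm: wave equation with inhomogeneities} to $\Box_{\mathring{g}}\psi=F$ reduces matters to estimating the bulk and hypersurface error terms on the right-hand side of \eqref{eq: energy estimate of inhomogeneuos equation} by the right-hand side of \eqref{eq: main theorem quasilinear, eq 1}. I would Leibniz-expand $\partial^{k-2}F$ (and its $\mathcal{G}$-commuted variants) and obtain a sum of products of the form $(\partial^{m}\psi)(\partial^{n}\psi)$ with $m+n\le k+2$ and $m,n\ge 1$, with the appropriate $(1-\mu)$-weights exactly matching those appearing on the right-hand side of \eqref{eq: energy estimate of inhomogeneuos equation} (this is precisely why those weights are present, as the remark after Theorem~\ref{thm: wave equation with inhomogeneities} indicates). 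For the bulk contribution I would perform a H\"older split on each slice $\{\bar t=\tau\}$, placing the lower-order factor in $L^\infty$ via the Sobolev inequality (Lemma~\ref{lem: sobolev estimate}): for $k\ge 7$ and $m\le n$ one has $m\le (k+2)/2\le k-3$, so that
\begin{equation}
\|\partial^{m}\psi\|^2_{L^\infty(\{\bar t=\tau\})}\le C\,E_{m+2}[\psi](\tau)\le C\,E_{\mathcal{G},k}[\psi](\tau),\qquad \|\partial^{n}\psi\|^2_{L^2(\{\bar t=\tau\})}\le E_{k+1}[\psi](\tau),
\end{equation}
where the first inequality uses \eqref{eq: def: energies, trivial inequality, 2} and the bound $n\le k+1$ follows from $m\ge 1$. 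Multiplying and integrating in $\tau$ via the coarea formula \eqref{eq: subsec: coarea, eq 1} yields exactly the bulk contribution $C\int_{\tau_1}^{\tau_2} E_{\mathcal{G},k}[\psi](\tau)\,E_{k+1}[\psi](\tau)\,d\tau$ appearing in \eqref{eq: main theorem quasilinear, eq 1}.

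\textbf{Boundary term and main obstacle.} The remaining hypersurface error term at $\{\bar t=\tau_2\}$ involves only derivatives of $\psi$ of order $\le k-1$; estimating as above, but placing the $L^\infty$ factor directly under the bootstrap \eqref{eq: main theorem quasilinear, eq 0} (rather than via Sobolev), produces a contribution bounded by $C\sqrt{\epsilon}\,E_{\mathcal{G},k}[\psi](\tau_2)$, which, for $\epsilon$ sufficiently small, is absorbed into the left-hand side of \eqref{eq: main theorem quasilinear, eq 1}. The main obstacle I anticipate is the book-keeping of the $\mathcal{G}$-commuted pieces in the Leibniz expansion, since $\mathcal{G}$ is only Lipschitz and carries a degenerate factor $\sqrt{1-\mu}$: one must verify that each $\mathcal{G}$-derivative applied to $\partial\psi$ or $\partial^2\psi$ produces terms whose $(1-\mu)$-weights are exactly compatible with the weights on the right-hand side of \eqref{eq: energy estimate of inhomogeneuos equation}, and that no estimate degenerates at $\{1-\mu=0\}$ nor forces one to control the uncontrolled top-order energy $E_{k}[\psi]$ instead of $E_{\mathcal{G},k}[\psi]$. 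The threshold $k\ge 7$ enters at exactly this point, as it is the smallest integer for which the $L^\infty$ factor above sits at order $\le k-3$, hence at order $\le k-1$ after Sobolev, and so fits inside $E_{\mathcal{G},k}[\psi]$ via \eqref{eq: def: energies, trivial inequality, 2}.
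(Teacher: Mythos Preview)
Your proposal is correct and follows essentially the same approach as the paper's proof: rewrite as $\Box_{\mathring g}\psi = F$, apply Theorem~\ref{thm: wave equation with inhomogeneities}, Leibniz-expand the resulting bulk and hypersurface error terms, and close via Sobolev on each slice together with the coarea formula, using the bootstrap~\eqref{eq: main theorem quasilinear, eq 0} to absorb the hypersurface contribution at $\{\bar t=\tau_2\}$ into the left-hand side. One small correction: your chain $m\le (k+2)/2\le k-3$ fails literally at $k=7$ (since $4.5>4$), but the intended integer bound $m\le\lfloor(k+2)/2\rfloor=k-3$ does hold with equality there, which is precisely the threshold computation the paper carries out after~\eqref{eq: energy estimate for quasilinear, 2, bound}.
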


\begin{proof}[\textbf{Proof of Proposition \ref{prop: quasilinear}}]
	We will use the derivatives notation of Section \ref{subsec: derivatives notation}. Moreover, for the semilinear term $\partial\psi\cdot\partial\psi$ and for the definition of smooth tensors $a,h$ see Sections \ref{subsec: semilinear derivatives}, \ref{subsec: assumption} respectively. In this proof, when we write 
	\begin{equation}
		h_{ab}
	\end{equation}
	it is to be understood as $h_{ab}(\nabla\psi)$ in Cartesian coordinates, see~\eqref{eq: cartesian manifold}. When we write 
	\begin{equation}
		h^{ab}
	\end{equation}
	it is to be understood as $h^{ab}(\nabla\psi)=g^{ab}(\nabla\psi)-\mathring{g}^{ab}$, in Cartesian coordinates. Finally, in this proof when we write $\lesssim$ it is to be understood that we omit a constant $C(k,M,\Lambda,A_{[m]},B_{[m]})$, where for $A_{[m]},B_{[m]}$ see Sections \ref{subsec: semilinear derivatives}, \ref{subsec: assumption} respectively, and $m\leq k+1$.

	Note that we may rewrite $\Box_{g(\nabla\psi)}\psi=\partial\psi\cdot\partial\psi$ as 
	\begin{equation}\label{eq: energy estimate for quasilinear, 0}
	\Box_{\mathring{g}}\psi=\underbrace{\left(\Box_{\mathring{g}}-\Box_{g(\nabla\psi)}\right)\psi}_{F_1[\psi]}\underbrace{+\partial\psi\cdot\partial\psi}_{F_2[\psi]}.    
	\end{equation}
	We name, for convenience,
	\begin{equation}
	F[\psi]=F_1[\psi]+F_2[\psi]=\left(\Box_{\mathring{g}}-\Box_{g(\nabla\psi)}\right)\psi+\partial\psi\cdot\partial\psi.
	\end{equation}
	 Then, by the integrated energy estimate~\eqref{eq: energy estimate of inhomogeneuos equation} of Theorem \ref{thm: wave equation with inhomogeneities} for the inhomogeneous wave equation~\eqref{eq: energy estimate for quasilinear, 0}, we obtain the following
	\begin{equation}\label{eq: energy estimate for quasilinear, 1}
	\begin{aligned}
	&   E_{\mathcal{G},k}[\psi](\tau_2)+\int_{\tau_1}^{\tau_2}d\tau E_{\mathcal{G},k}[\psi](\tau) \\
	&   \lesssim E_{\mathcal{G},k}[\psi](\tau_1)+\int\int_{D(\tau_1,\tau_2)} \sum_{0\leq i_1+i_2+i_3\leq k-2}\sum_\alpha\left((\partial_{\bar{t}})^{i_1}\partial_r^{i_2}(\Omega_\alpha)^{i_3} F\right)^2+(1-\mu)\sum_{0\leq i+j\leq k-2}\sum_\alpha\left((\partial_{\bar{t}})^i (\Omega_\alpha)^j\mathcal{G}F\right)^2\\
	&	\quad\quad\quad\quad\quad\quad+ \int_{\{\bar{t}=\tau_2\}} \sum_{0\leq i_1+i_2+i_3\leq k-3}\sum_\alpha(1-\mu)^{2i_3+1}\left(\partial_{\bar{t}}^{i_1}\Omega_\alpha^{i_2}\partial_r^{i_3}\mathcal{G}F\right)^2+\sum_{0\leq i_1+i_2+i_3\leq k-3}\sum_\alpha\left(\partial_{\bar{t}}^{i_1}\Omega_\alpha^{i_2}\partial_r^{i_3}F\right)^2\\
	&   \lesssim E_{\mathcal{G},k}[\psi](\tau_1)+\int\int_{D(\tau_1,\tau_2)} \sum_{0\leq i_1+i_2+i_3\leq k-2}\sum_\alpha\left((\partial_{\bar{t}})^{i_1}\partial_r^{i_2}(\Omega_\alpha)^{i_3} F_1\right)^2+\sum_{0\leq i+j\leq k-2}\sum_\alpha\left((\partial_{\bar{t}})^i (\Omega_\alpha)^j\mathcal{G}F_1\right)^2\\
	&   \quad\quad\quad\quad\quad\quad +\int\int_{D(\tau_1,\tau_2)} \sum_{0\leq i_1+i_2+i_3\leq k-2}\sum_\alpha\left((\partial_{\bar{t}})^i(\Omega_\alpha)^j F_2\right)^2+(1-\mu)\sum_{0\leq i+j\leq k-2}\sum_\alpha\left((\partial_{\bar{t}})^{i_1}\partial_r^{i_2} (\Omega_\alpha)^{i_3}\mathcal{G}F_2\right)^2\\
	&\quad\quad\quad\quad\quad\quad+ \int_{\{\bar{t}=\tau_2\}} \sum_{0\leq i_1+i_2+i_3\leq k-3}\sum_\alpha(1-\mu)^{2i_3+1}\left(\partial_{\bar{t}}^{i_1}\Omega_\alpha^{i_2}\partial_r^{i_3}\mathcal{G}F\right)^2+\sum_{0\leq i_1+i_2+i_3\leq k-3}\sum_\alpha\left(\partial_{\bar{t}}^{i_1}\Omega_\alpha^{i_2}\partial_r^{i_3}F\right)^2.
	\end{aligned}
	\end{equation}

	We readily bound the term $\int\int_{D_\delta(\tau_1,\tau_2)}(1-\mu)\sum_{i=0}^{k-2} \left(\partial_{\bar{t}}^i \mathcal{G} F_2\right)^2$ as follows

	\begin{equation}\label{eq: energy estimate for quasilinear, 1.09}
	\begin{aligned}
	&   \int\int_{D(\tau_1,\tau_2)}(1-\mu)\sum_{i=0}^{k-2} \left(\partial_{\bar{t}}^i\mathcal{G} F_2\right)^2 =\int\int_{D(\tau_1,\tau_2)}(1-\mu)\sum_{i=0}^{k-2} \left( \partial_{\bar{t}}^i\mathcal{G} \left(a^{\alpha\beta}\partial_\alpha\psi\partial_\beta\psi\right)\right)^2\\
	&	\quad = \int\int_{D(\tau_1,\tau_2)}(1-\mu)\sum_{i=0}^{k-2} \left( \partial_{\bar{t}}^i\left([\mathcal{G},a^{\alpha\beta}]\partial_\alpha\psi\partial_\beta\psi+a^{\alpha\beta}\mathcal{G}\left(\partial_\alpha\psi\partial_\beta\psi\right)\right)\right)^2 \\
	&	\quad = \int\int_{D(\tau_1,\tau_2)}(1-\mu)\sum_{i=0}^{k-2} \left( \partial_{\bar{t}}^i\left([\mathcal{G},a^{\alpha\beta}]\partial_\alpha\psi\partial_\beta\psi+a^{\alpha\beta}\partial_\beta\psi\mathcal{G}\partial_\alpha\psi+\partial_\alpha\psi\mathcal{G}\partial_\beta\psi\right)\right)^2 \\
	&	\quad = \int\int_{D(\tau_1,\tau_2)}(1-\mu)\sum_{i=0}^{k-2} \Bigg( \partial_{\bar{t}}^i\Bigg([\mathcal{G},a^{\alpha\beta}]\partial_\alpha\psi\partial_\beta\psi+a^{\alpha\beta}\partial_\beta\psi[\mathcal{G},\partial_\alpha]\psi+a^{\alpha\beta}\partial_\beta\psi\partial_\alpha\mathcal{G}\psi\\
	&	\quad\quad\quad\quad\quad\quad\quad\quad\quad\quad\quad\quad\quad\quad\quad+a^{\alpha\beta}\partial_\alpha\psi[\mathcal{G},\partial_\beta]\psi+a^{\alpha\beta}\partial_\alpha\psi\partial_\beta\mathcal{G}\psi\Big)\Bigg)^2	\\
	&	\quad \lesssim \int\int_{D(\tau_1,\tau_2)}(1-\mu)\sum_{i=0}^{k-2} \left(\partial_{\bar{t}}^i\left([\mathcal{G},a^{\alpha\beta}]\partial_\alpha\psi\partial_\beta\psi\right)\right)^2+(1-\mu)\sum_{i=0}^{k-2}\left(\partial_{\bar{t}}^i\left(a^{\alpha\beta}\partial_\beta\psi[\mathcal{G},\partial_\alpha]\psi\right)\right)^2\\
	&	\quad\quad\quad\quad\quad\quad+(1-\mu)\sum_{i=0}^{k-2}\left(\partial_{\bar{t}}^i\left(a^{\alpha\beta}\partial_\alpha\psi[\mathcal{G},\partial_\beta]\psi\right)\right)^2\\
	&	\quad\quad\quad\quad\quad\quad+(1-\mu)\sum_{i=0}^{k-2}\left(\partial_{\bar{t}}^i\left(a^{\alpha\beta}\partial_\alpha\psi\partial_\beta\mathcal{G}\psi\right)\right)^2+(1-\mu)\sum_{i=0}^{k-2}\left(\partial_{\bar{t}}^i\left(a^{\alpha\beta}\partial_\beta\psi\partial_\alpha\mathcal{G}\psi\right)\right)^2\\
	&	\quad \lesssim  \int\int_{D(\tau_1,\tau_2)}\sum_{1\leq i+j\leq k-1}(\partial^i\psi)^2(\partial^j\psi)^2\\
	&	\quad\quad\quad\quad\quad\quad+(1-\mu)\sum_{i=0}^{k-2}\left(\partial_{\bar{t}}^i\left(a^{\alpha\beta}\partial_\alpha\psi\partial_\beta\mathcal{G}\psi\right)\right)^2+(1-\mu)\sum_{i=0}^{k-2}\left(\partial_{\bar{t}}^i\left(a^{\alpha\beta}\partial_\beta\psi\partial_\alpha\mathcal{G}\psi\right)\right)^2 \\
	&	\quad \lesssim \int\int_{D(\tau_1,\tau_2)}\sum_{1\leq i+j\leq k-1}(\partial^i\psi)^2(\partial^j\psi)^2+\int\int_{D(\tau_1,\tau_2)}\sum_{1\leq i+j\leq k-2,j\leq\floor{\frac{k-2}{2}}}(1-\mu)(\partial_{\bar{t}}^i\partial\mathcal{G}\psi)^2(\partial^j\partial\psi)^2.\\
	\end{aligned}
	\end{equation}
	Therefore, by using the coarea formula~\eqref{eq: subsec: coarea, eq 1}, see Section~\ref{subsec: coarea}, we bound~\eqref{eq: energy estimate for quasilinear, 1.09} from
	\begin{equation}\label{eq: energy estimate for quasilinear, 1.1}
	\begin{aligned}
	&	\int_{\tau_1}^{\tau_2}d\tau \sum_{1\leq i+j\leq k-1,i\leq \floor{\frac{k-1}{2}}}\sup_{\{\bar{t}=\tau\}}(\partial^i\psi)^2 \int_{\{\bar{t}=\tau\}}(\partial^j\psi)^2+ \int_{\tau_1}^{\tau_2}d\tau \sup_{\{\bar{t}=\tau\}}\sum_{1\leq i+j\leq k-2,j\leq \floor{\frac{k-2}{2}}}(\partial^j\partial\psi)^2\int_{\{\bar{t}=\tau\}}(1-\mu)(\partial_{\bar{t}}^i\partial\mathcal{G}\psi)^2  \\
	&	\quad\quad\quad +\int_{\tau_1}^{\tau_2}d\tau \sum_{j\geq \floor{\frac{k-2}{2}}+1,i\leq k-2-\ceil{\frac{k-2}{2}}-1} \sup_{\{\bar{t}=\tau\}} (\partial^{i+2}\psi)^2\int_{\{\bar{t}=\tau\}}(\partial^{j+1}\psi)^2\\
	&   \quad \lesssim \int_{\tau_1}^{\tau_2}d\tau E_{k-1}[\psi](\tau)E_{k-1}[\psi](\tau)+\int_{\tau_1}^{\tau_2}d\tau E_{k-1}[\psi](\tau)E_{\mathcal{G},k}[\psi](\tau)\\
	&   \quad \leq C(k,M,\Lambda,A_{[k-1]},B_{[k-1]}) \int_{\tau_1}^{\tau_2}d\tau E_{k}[\psi](\tau)E_{\mathcal{G},k}[\psi](\tau)
	\end{aligned}
	\end{equation}
	where in the first inequality we use the Sobolev inequality, see Lemma~\ref{lem: sobolev estimate}, and we also used the requirement $k\geq 6$. The remaining terms
	\begin{equation}
	\int\int_{D(\tau_1,\tau_2)}(1-\mu)\sum_{0\leq i+j\leq k-2}\sum_\alpha\left((\partial_{\bar{t}})^{i_1} (\Omega_\alpha)^{i_3}\mathcal{G}F_2\right)^2 
	\end{equation}
	are similarly bounded by the right hand side of~\eqref{eq: energy estimate for quasilinear, 1.1}.

	Moreover, we readily bound the term $\int\int_{D(\tau_1,\tau_2)}\sum_{i=0}^{k-2} \left(\partial_{\bar{t}}^i F_2\right)^2$ on the right hand side of equation~\eqref{eq: energy estimate for quasilinear, 1} by using the coarea formula~\eqref{eq: subsec: coarea, eq 1}, see Section~\ref{subsec: coarea}, by the expression below
	\begin{equation}\label{eq: energy estimate for quasilinear, 1.1, 2}
	\begin{aligned}
	&   \int\int_{D(\tau_1,\tau_2)}\sum_{i=0}^{k-2} \left(\partial_{\bar{t}}^i F_2\right)^2=\int\int_{D(\tau_1,\tau_2)}\sum_{i=0}^{k-2} \left(\partial_{\bar{t}}^i\left(a^{\alpha\beta}\partial_\alpha\psi\partial_\beta\psi\right)\right)^2 \\
	&	\quad \lesssim  \int\int_{D(\tau_1,\tau_2)} \sum_{1\leq i+j\leq k-1}(\partial^i\psi)^2(\partial^j\psi)^2 \\
	&	\quad \lesssim \int_{\tau_1}^{\tau_2}\sum_{1\leq i+j\leq k-1,i\leq \floor{\frac{k-1}{2}}} \sup_{\{\bar{t}=\tau\}}(\partial^i\psi)^2\int_{\{\bar{t}=\tau\}} (\partial^j\psi)^2\\
	&	\quad \lesssim \int_{\tau_1}^{\tau_2}d\tau E_{k-1}[\psi](\tau)E_{k-1}[\psi](\tau) \\	
	&   \quad \leq C(k,M,\Lambda,A_{[k-1]},B_{[k-1]}) \int_{\tau_1}^{\tau_2}d\tau E_{k-1}[\psi](\tau) E_{\mathcal{G},k}[\psi](\tau),
	\end{aligned}
	\end{equation}	
	where in the second to last inequality we use the Sobolev inequality, see Lemma~\ref{lem: sobolev estimate}, and we also used the requirement $k\geq 4$. The remaining terms 
	\begin{equation}
	\int\int_{D(\tau_1,\tau_2)} \sum_{0\leq i_1+i_2+i_3\leq k-2}\sum_\alpha\left((\partial_{\bar{t}})^i(\Omega_\alpha)^j F_2\right)^2
	\end{equation}
	are similarly bounded by the right hand side of~\eqref{eq: energy estimate for quasilinear, 1.1, 2}. Note that \textit{we have not yet appealed to the smallness}~\eqref{eq: main theorem quasilinear, eq 0}, since we have thus far only been discussing semi-linear terms.

	We estimate the term $\int\int_{D_\delta(\tau_1,\tau_2)} \sum_{i=0}^{k-2} \left(\partial_{\bar{t}}^i\mathcal{G} F_1\right)^2$, on the right hand side of equation~\eqref{eq: energy estimate for quasilinear, 1}, by using the coarea formula~\eqref{eq: subsec: coarea, eq 1}, by the following expression
	\begin{equation}\label{eq: energy estimate for quasilinear, 2}
	\begin{aligned}
	&  \int_{\tau_1}^{\tau_2}d\tau\int_{\{\bar{t}=\tau\}}\sum_{i=0}^{k-2}\Bigg(\partial_{\bar{t}}^i\mathcal{G}\left((\mathring{g}^{ab}-g^{ab}(\nabla\psi))\partial_a\partial_b\psi\right)  \\
	&   \quad\quad\quad\quad\quad\quad\quad\quad\quad\quad+\partial_{\bar{t}}^i\mathcal{G}\left(\mathring{g}^{ab}\Gamma^c\:_{ab}(\mathring{g})\partial_c\psi-g^{ab}(\nabla\psi)\Gamma^c\:_{ab}(g(\nabla\psi))\partial_c\psi\right)\Bigg)^2.
	\end{aligned}
	\end{equation}
	
	Now, we note that 
	\begin{equation}\label{eq: energy estimate for quasilinear, 2, 1, Christophel symbols}
	\Gamma^c\:_{ab}(g(\nabla\psi))=\Gamma^c\:_{ab}(\mathring{g})+S^c\:_{ab}(h),
	\end{equation}
	where 
	\begin{equation}\label{eq: energy estimate for quasilinear, 2, 2, the nonlinear part of the christophel symbols}
	\begin{aligned}
	S^c\:_{ab}(h)	\:\dot{=}\:& \frac{1}{2}\mathring{g}^{ce}\left(\partial_e h_{ab}+\partial_a h_{eb}+\partial_b h_{ea}\right) \\
	&   +\frac{1}{2} h^{ce}\left(-\partial_{e} h_{ab}+\partial_a h_{eb}+\partial_b h_{ea}\right)\\
	&   +\frac{1}{2} h^{ce}\left(-\partial_e 
	\mathring{g}_{ab}+\partial_a \mathring{g}_{eb}+\partial_b \mathring{g}_{ea}\right).
	\end{aligned}
	\end{equation}
	Therefore, we write~\eqref{eq: energy estimate for quasilinear, 2} in the following form 	
	\begin{equation}
	\begin{aligned}
	&  \int_{\tau_1}^{\tau_2}d\tau\int_{\{\bar{t}=\tau\}} \sum_{i=0}^{k-2}\left(\partial_{\bar{t}}^i\mathcal{G}\left(h^{ab}\partial_a\partial_b\psi\right)  +\partial_{\bar{t}}^i\mathcal{G}\left(\left(\mathring{g}^{ab}S^c\:_{ab}(h)-h^{ab}\Gamma^c\:_{ab}(\mathring{g})-h^{ab}S^c\:_{ab}(h)\right)\partial_c\psi\right)\right)^2 \\
	&	\quad=\int_{\tau_1}^{\tau_2}d\tau\int_{\{\bar{t}=\tau\}}\sum_{i=0}^{k-2}\left(\partial_{\bar{t}}^i\mathcal{G}(h^{ab}\partial_a\partial_b\psi)+\partial_{\bar{t}}^i\mathcal{G}(\mathring{g}^{ab}S^c\:_{ab}(h)\partial_c\psi)-\partial_{\bar{t}}^i\mathcal{G}(h^{ab}\Gamma^c\:_{ab}(\mathring{g})\partial_c\psi)-\partial_{\bar{t}}^i\mathcal{G}(h^{ab}S^c\:_{ab}(h)\partial_c\psi)\right)^2\\
	&	\quad \lesssim\int_{\tau_1}^{\tau_2}d\tau\int_{\{\bar{t}=\tau\}}\sum_{1\leq i\leq k-1}\Bigg(\left(\partial^{i}\left(h^{ab}\partial_a\partial_b\psi\right)\right)^2+\left(\partial^i\left(\mathring{g}^{ab}S^c\:_{ab}(h)\partial_c\psi)\right)\right)^2+\left(\partial^i\left(h^{ab}\Gamma^c\:_{ab}(\mathring{g})\partial_c\psi\right)\right)^2\\
	&	\quad\quad\quad\quad\quad\quad\quad\quad\quad\quad\quad\quad\quad\quad\quad+\left(\partial^i\left(h^{ab}S^c\:_{ab}(h)\partial_c\psi\right)\right)^2\Bigg)
	\end{aligned}
	\end{equation}
	which, by using the smallness~\eqref{eq: main theorem quasilinear, eq 0} for a sufficiently small $\epsilon>0$, the definition of $h$, see Section \ref{subsec: assumption}, and by distributing the derivatives (with the Cartesian coordinates of Section \ref{subsec: derivatives notation}), we bound the above from
	\begin{equation}\label{eq: energy estimate for quasilinear, 2, bound,-1}
	\begin{aligned}
		&\lesssim  \int_{\tau_1}^{\tau_2}d\tau\int_{\{\bar{t}=\tau\}}\sum_{1\leq i\leq k-1}\Bigg( \sum_{i_1+i_2=i}\left(|\partial^{i_1} h^{ab}||\partial^{i_2}\partial_a\partial_b\psi|\right)^2+ \sum_{i_1+i_2+i_3 =i}\left( |\partial^{i_1}\mathring{g}^{ab}||\partial^{i_2}S^c\:_{ab}(h)||\partial^{i_3}\partial_c\psi|\right)^2 \\
		&	\quad\quad\quad\quad\quad\quad\quad\quad\quad\quad\quad +\sum_{i_1+i_2+i_3=i}\left(|\partial^{i_1}h^{ab}||\partial^{i_2}\Gamma^c\:_{ab}(\mathring{g})||\partial^{i_3}\partial_c\psi|\right)^2\\
		& \quad\quad\quad\quad\quad\quad\quad\quad\quad\quad\quad +\sum_{i_1+i_2+i_3=i}\left(|\partial^{i_1}h^{ab}||\partial^{i_2}S^c\:_{ab}(h)||\partial^{i_3}\partial_c\psi|\right)^2 \Bigg)\\
		&	\lesssim \int_{\tau_1}^{\tau_2}d\tau \int_{\{\bar{t}=\tau\}} \sum_{1\leq i\leq k-1,i_1\leq\floor{\frac{i}{2}}}\left(\partial^{i_1+1}\psi\right)^2\left(\partial^{i_2+2}\psi\right)^2+\sum_{4\leq i\leq k-1}\sum_{i_1= \floor{\frac{i}{2}}+1,i_2=\ceil{\frac{i}{2}}-1}\left(\partial^{i_1+1}\psi\right)^2\left(\partial^{i_2+2}\psi\right)^2\\
		&	\quad\quad\quad\quad\quad\quad\quad+\sum_{4\leq i\leq k-1}\sum_{i_1= \floor{\frac{i}{2}}+2,i_2=\ceil{\frac{i}{2}}-2}\left(\partial^{i_1+1}\psi\right)^2\left(\partial^{i_2+2}\psi\right)^2+\sum_{5\leq i\leq k-1}\sum_{i_2\leq\ceil{\frac{i}{2}}-3,i_1>\floor{\frac{i}{2}}+3}\left(\partial^{i_1+1}\psi\right)^2\left(\partial^{i_2+2}\psi\right)^2	\\
		&	\quad\quad\quad\quad\quad\quad\quad	+\sum_{1\leq i\leq 4}\sum_{i_1+i_2=i}(\partial^{i_1+1}\psi)^2(\partial^{i_2+2}\psi)^2\\
		&\quad\quad\quad\quad\quad\quad\quad+\sum_{1\leq i_1+i_2\leq k-1,i_1\leq \ceil{\frac{k-1}{3}}}\left(\partial^{i_1+1}\psi\right)^2\left(\partial^{i_2+1}\psi\right)^2 \\
		&	\quad\quad\quad\quad\quad\quad\quad +\sum_{1\leq i_1+i_2+i_3\leq k-1,i_1\leq\ceil{\frac{k-1}{3}}}\left(\partial^{i_1}\psi\right)^2\left(\partial^{i_2+1}\psi\right)^2\left(\partial^{i_3+1}\psi\right)^2\\
		\end{aligned}
		\end{equation}
		therefore, by using Sobolev inequalities, see Lemma~\ref{lem: sobolev estimate}, we obtain from~\eqref{eq: energy estimate for quasilinear, 2, bound,-1} the following 		
		\begin{equation}\label{eq: energy estimate for quasilinear, 2, bound}
		\begin{aligned}
		& \lesssim \int_{\tau_1}^{\tau_2}d\tau  \sum_{1\leq i\leq k-1,i_1\leq\floor{\frac{i}{2}}}\sup_{\{\bar{t}=\tau\}}\left(\partial^{i_1+1}\psi\right)^2\int_{\{\bar{t}=\tau\}}\left(\partial^{i_2+2}\psi\right)^2\\
		&	\quad\quad\quad\quad+\sum_{4\leq i\leq k-1}\sum_{i_1= \floor{\frac{i}{2}}+1,i_2=\ceil{\frac{i}{2}}-1}\sup_{\{\bar{t}=\tau\}}\left(\partial^{i_1+1}\psi\right)^2\int_{\{\bar{t}=\tau\}}\left(\partial^{i_2+2}\psi\right)^2\\
		&	\quad\quad\quad\quad+\sum_{4\leq i\leq k-1}\sum_{i_1= \floor{\frac{i}{2}}+2,i_2=\ceil{\frac{i}{2}}-2}\sup_{\{\bar{t}=\tau\}}\left(\partial^{i_1+1}\psi\right)^2\int_{\{\bar{t}=\tau\}}\left(\partial^{i_2+2}\psi\right)^2\\
		&	\quad\quad\quad\quad+	\sum_{5\leq i\leq k-1}\sum_{i_2\leq\ceil{\frac{i}{2}}-3,i_1}\sup_{\{\bar{t}=\tau\}}\left(\partial^{i_2+2}\psi\right)^2\int_{\{\bar{t}=\tau\}}\left(\partial^{i_1+1}\psi\right)^2	\\
		&	\quad\quad\quad\quad	+\sum_{1\leq i\leq 4}\sum_{i_1+i_2=i}\sup_{\{\bar{t}=\tau\}}(\partial^{i_1+1}\psi)^2\int_{\{\bar{t}=\tau\}}(\partial^{i_2+2}\psi)^2\\
		&	\quad\quad\quad\quad+\sum_{1\leq i_1+i_2\leq k-1,i_1\leq \ceil{\frac{k-1}{3}}}\sup_{\{\bar{t}=\tau\}}\left(\partial^{i_1+1}\psi\right)^2\int_{\{\bar{t}=\tau\}}\left(\partial^{i_2+1}\psi\right)^2\\
		&	\quad\quad\quad\quad +\sum_{1\leq i_1+i_2+i_3\leq k-1,i_1\leq\ceil{\frac{k-1}{3}}}\sup_{\{\bar{t}=\tau\}}\left(\partial^{i_1}\psi\right)^2\int_{\{\bar{t}=\tau\}}\left(\partial^{i_2+1}\psi\right)^2\left(\partial^{i_3+1}\psi\right)^2\\
		&	\lesssim \int_{\tau_1}^{\tau_2}d\tau E_{k-1}[\psi](\tau)E_{k+1}[\psi](\tau)	\\
		&	\leq C(k,M,\Lambda,A_{[k]},B_{[k]}) \int_{\tau_1}^{\tau_2}d\tau E_{\mathcal{G},k}[\psi](\tau)E_{k+1}[\psi](\tau)
	\end{aligned}
	\end{equation}
	where in the second to last inequality we also used that $k=7$ is the smallest integer such that all of the following hold 
	\begin{equation}
		\left(\floor{\frac{k-1}{2}}+1\right)+2\leq k-1,\quad \left(\floor{\frac{k-1}{2}}+1+1\right)+2\leq k+1,\quad \left(\floor{\frac{k-1}{2}}+2+1\right)+2\leq k+1,\quad \left(\ceil{\frac{k-1}{2}}-3+2\right)+2\leq k-1
	\end{equation}
	where the addition of $+2$ on the above inequalities comes from the Sobolev inequality, see Lemma~\ref{lem: sobolev estimate}. 

	Now, by similar calculations, we bound the following term of the right hand side of~\eqref{eq: energy estimate for quasilinear, 1} 
	\begin{equation}\label{eq: energy estimate for quasilinear 2, bound 2}
		\int\int_{D(\tau_1,\tau_2)} \sum_{0\leq i_1+i_2+i_3\leq k-2}\sum_\alpha\left((\partial_{\bar{t}})^{i_1}\partial_r^{i_2}(\Omega_\alpha)^{i_3} F_1\right)^2+\sum_{0\leq i+j\leq k-2}\sum_\alpha\left((\partial_{\bar{t}})^i (\Omega_\alpha)^j\mathcal{G}F_1\right)^2\leq C \int _{\tau_1}^{\tau_2}E_{\mathcal{G},k}[\psi](\tau)E_{k}[\psi](\tau).
	\end{equation}
	
	Finally, by using the bounds~\eqref{eq: energy estimate for quasilinear, 1.1}~\eqref{eq: energy estimate for quasilinear, 1.1, 2},~\eqref{eq: energy estimate for quasilinear, 2, bound},~\eqref{eq: energy estimate for quasilinear 2, bound 2}, the integrated energy estimate of equation~\eqref{eq: energy estimate for quasilinear, 1} implies the following
	\begin{equation}\label{eq: energy estimate for quasilinear, 3}
	\begin{aligned}
	E_{\mathcal{G},k}[\psi](\tau_2)& +\int_{\tau_1}^{\tau_2}d\tau  E_{\mathcal{G},k} (\tau) \\
	&	\leq C(M,\Lambda) E_{\mathcal{G},k}[\psi](\tau_1)  +C(k,M,\Lambda,A_{[k]},B_{[k]}) \int_{\tau_1}^{\tau_2}d\tau E_{\mathcal{G},k}[\psi](\tau)E_{k+1}[\psi](\tau)\\
	&	\quad+ C(M,\Lambda)\int_{\{\bar{t}=\tau_2\}} \sum_{0\leq i_1+i_2+i_3\leq k-3}\sum_\alpha(1-\mu)^{2i_3+1}\left(\partial_{\bar{t}}^{i_1}\Omega_\alpha^{i_2}\partial_r^{i_3}\mathcal{G}F\right)^2+\sum_{0\leq i_1+i_2+i_3\leq k-3}\sum_\alpha\left(\partial_{\bar{t}}^{i_1}\Omega_\alpha^{i_2}\partial_r^{i_3}F\right)^2,
	\end{aligned}
	\end{equation}
	where recall that $F=\left(\Box_{\mathring{g}}-\Box_{g(\nabla\psi)}\right)\psi+\partial\psi\cdot\partial\psi$.

	Finally, we want to absorb the boundary terms of the right hand side at $\{\bar{t}=\tau_2\}$ by the relevant boundary term $E_{\mathcal{G},k}[\psi](\tau_2)$ of the left hand side. By using the smallness assumption~\eqref{eq: main theorem quasilinear, eq 0}, it is evident that we can appropriately distribute derivatives, in view of the requirement $k\geq 7$, to conclude that 
	\begin{equation}
	\begin{aligned}
		E_{\mathcal{G},k}[\psi](\tau_2)& +\int_{\tau_1}^{\tau_2}d\tau  E_{\mathcal{G},k} (\tau) \\
		&	\leq C(k,M,\Lambda,A_{[k]},B_{[k]}) E_{\mathcal{G},k}[\psi](\tau_1)  +C(k,M,\Lambda,A_{[k]},B_{[k]}) \int_{\tau_1}^{\tau_2}d\tau E_{\mathcal{G},k}[\psi](\tau)E_{k+1}[\psi](\tau),
	\end{aligned}
	\end{equation}
	for a constant $C(k,M,\Lambda,A_{[k]},B_{[k]})$.

	The proof of Proposition \ref{prop: quasilinear} is thus complete.  
\end{proof}

\section{Proof of Theorem \ref{thm: quasilinear, 1, local bootstrap}}\label{sec: proof of thm: quasilinear, 1, local bootstrap}

We are ready to prove Theorem \ref{thm: quasilinear, 1, local bootstrap}.

\begin{proof}[\textbf{Proof of Theorem \ref{thm: quasilinear, 1, local bootstrap}}]
	
	Given $L>0$, and for a
	\begin{equation}
	\tau_{\textit{step}}(L,k,M,\Lambda,A_{[k+1]},B_{[k+1]})>0
	\end{equation}
	to be determined later (see already~\eqref{eq: proof of corollary quasilinear, eq 8.3}), we apply Proposition~\ref{prop: well posedness quasilinear} with $\tau_{\textit{max}}=\tau_{\textit{step}}$ to find
	\begin{equation}
	\delta(\tau_{\textit{step}},k,M,\Lambda,A_{[k+1]},B_{[k+1]})>0,\qquad \epsilon(\tau_{\textit{step}},\delta,k,M,\Lambda,A_{[k+1]},B_{[k+1]})>0
	\end{equation}
	sufficiently small, such that if we take 
	\begin{equation}
	E_{k+1}[\psi](\tau_1)\leq \epsilon
	\end{equation}
	then the solution $\psi$ of the quasilinear wave equation~\eqref{eq: quasilinear} exists in $D_\delta(\tau_1,\tau_2)$, where 
	\begin{equation}
	\tau_2=\tau_1+\tau_{\textit{step}}. 
	\end{equation}
	
	The two inequalities~\eqref{eq: corollary quasilinear, eq 1.1}, \eqref{eq: corollary quasilinear, eq 1.2} are an immediate consequence of the Cauchy stability results~\eqref{eq: prop: well posedness quasilinear, eq 1},~\eqref{eq: prop: well posedness quasilinear, eq 2} respectively, of Proposition \ref{prop: well posedness quasilinear}.

	To obtain the remaining statements, we assume the bootstrap  
	\begin{equation}\label{eq: proof of corollary quasilinear, eq 0}
		\sup_{D_\delta(\tau_1,\tau_2)}\sum_{1\leq j\leq k-1}\sum_{\partial\in\{\partial_{\bar{t}},\partial_r,\Omega_1,\Omega_2,\Omega_3\}}|\partial^j\psi|\leq C_{\text{b}}\sqrt{\epsilon},
	\end{equation}
	for a constant
	\begin{equation}
	C_{\text{b}}(k,M,\Lambda,A_{[k+1]},B_{[k+1]})>0
	\end{equation}
	 which will be determined later.

	First, we prove inequality~\eqref{eq: corollary quasilinear, eq 1}, namely 
	\begin{equation}\label{eq: proof of corollary quasilinear, eq 1}
	\begin{aligned}
	\sup_{\tau\in[\tau_1,\tau_2]}E_{\mathcal{G},k}[\psi](\tau)  &   \leq C(k,M,\Lambda,A_{[k+1]},B_{[k+1]}) E_{\mathcal{G},k}[\psi](\tau_1).\\
	\end{aligned}
	\end{equation}
	To prove~\eqref{eq: proof of corollary quasilinear, eq 1}, we will introduce an additional bootstrap assumption
	\begin{equation}\label{eq: proof of corollary quaslilinear, eq 2, bootstrap}
	\sup_{\tau\in[\tau_1,\tau_2]}E_{\mathcal{G},k}[\psi](\tau)\leq C_{\text{boot}} E_{\mathcal{G},k}[\psi](\tau_1).
	\end{equation}
	We choose $\epsilon(C_{\text{boot}})>0$ sufficiently small so that, in view of the bootstrap~\eqref{eq: proof of corollary quasilinear, eq 0}, the energy estimate~\eqref{eq: main theorem quasilinear, eq 1} of Proposition~\ref{prop: quasilinear} holds, namely 
	\begin{equation}\label{eq: proof of corollary quasilinear, eq 6}
		\begin{aligned}
			&	E_{\mathcal{G},k}[\psi](\tau_2)+\int_{\tau_1}^{\tau_2}E_{\mathcal{G},k}[\psi](\tau^\prime)d\tau^\prime \\
			&	\qquad \leq C(k,M,\Lambda,A_{[k]},B_{[k]}) E_{\mathcal{G},k}[\psi](\tau_1)+C(k,M,\Lambda,A_{[k]},B_{[k]})\int_{\tau_1}^{\tau_2}E_{\mathcal{G},k}[\psi](\tau^\prime)E_{k+1}[\psi](\tau^\prime)d\tau^\prime.
		\end{aligned}
	\end{equation}
	Now we use the bootstrap assumption~\eqref{eq: proof of corollary quaslilinear, eq 2, bootstrap} and inequality~\eqref{eq: corollary quasilinear, eq 1.1}, which is already proven, to obtain
	\begin{equation}\label{eq: proof of corollary quasilinear, eq 7}
		\begin{aligned}
			E_{\mathcal{G},k}[\psi](\tau_2)+\int_{\tau_1}^{\tau_2}E_{\mathcal{G},k}[\psi](\tau^\prime)d\tau^\prime \leq & C(k,M,\Lambda,A_{[k+1]},B_{[k+1]}) E_{\mathcal{G},k}[\psi](\tau_1)\\
			&	+C(k,M,\Lambda,A_{[k+1]},B_{[k+1]})C_{\text{boot}}\tau_{\textit{step}}E_{\mathcal{G},k}[\psi](\tau_1)E_{k+1}[\psi](\tau_1)\\
			\leq & C(k,M,\Lambda,A_{[k+1]},B_{[k+1]}) E_{\mathcal{G},k}[\psi](\tau_1)\\
			&	+C(k,M,\Lambda,A_{[k+1]},B_{[k+1]})C_{\text{boot}}\tau_{\textit{step}}\cdot\epsilon\cdot E_{\mathcal{G},k}[\psi](\tau_1)\\
			\leq & C(k,M,\Lambda,A_{[k+1]},B_{[k+1]})\left(1	+C_{\text{boot}}\tau_{\textit{step}}\epsilon\right)E_{\mathcal{G},k}[\psi](\tau_1).\\
		\end{aligned}
	\end{equation}
	Therefore, by choosing
	\begin{equation}
	\frac{C_{\text{boot}}(k,M,\Lambda,A_{[k+1]},B_{[k+1]})}{2}\gg C(k,M,\Lambda,A_{[k+1]},B_{[k+1]}) 
	\end{equation}
	we take $\epsilon(\tau_\textit{step},k,M,\Lambda,A_{[k+1]},B_{[k+1]})>0$ sufficiently small in inequality~\eqref{eq: proof of corollary quasilinear, eq 7}, and obtain 
	\begin{equation}
	\sup_{\tau\in[\tau_1,\tau_2]}E_{\mathcal{G},k}[\psi](\tau)\leq \frac{C_{\text{boot}}}{2}E_{\mathcal{G},k}[\psi](\tau_1).
	\end{equation}
	We improved the bootstrap~\eqref{eq: proof of corollary quaslilinear, eq 2, bootstrap} and thus proved~\eqref{eq: proof of corollary quasilinear, eq 1}. Therefore, we concluded inequality~\eqref{eq: corollary quasilinear, eq 1}.

	Now, we proceed to prove inequality~\eqref{eq: corollary quasilinear, eq 2}. We use the energy estimate~\eqref{eq: proof of corollary quasilinear, eq 6}, in view of the now established bound~\eqref{eq: proof of corollary quasilinear, eq 1}, and inequality~\eqref{eq: corollary quasilinear, eq 1.1}, namely $E_{k+1}[\psi](\tau^\prime)\leq C_{\textit{wp}} E_{k+1}[\psi](\tau_1)$ for all $\tau^\prime\in[\tau_1,\tau_2]$, to obtain
	\begin{equation}\label{eq: proof of corollary quasilinear, eq 8}
	\begin{aligned}
	&    \int_{\tau_1}^{\tau_2}E_{\mathcal{G},k}[\psi](\tau^\prime)d\tau^\prime \leq C E_{\mathcal{G},k}[\psi](\tau_1)+C\tau_{\textit{step}}E_{\mathcal{G},k}[\psi](\tau_1)E_{k+1}[\psi](\tau_1)\\
	\implies & \int_{\frac{\tau_1+\tau_2}{2}}^{\tau_2}E_{\mathcal{G},k}[\psi](\tau^\prime)d\tau^\prime \leq C E_{\mathcal{G},k}[\psi](\tau_1)+C\tau_{\textit{step}}E_{\mathcal{G},k}[\psi](\tau_1)E_{k+1}[\psi](\tau_1)\\
	\implies & E_{\mathcal{G},k}[\psi](\tau_{1,2})\leq \frac{2C(k,M,\Lambda,A_{[k+1]},B_{[k+1]})}{\tau_{\textit{step}}}\left(E_{\mathcal{G},k}[\psi](\tau_1)+\tau_{\textit{step}}E_{\mathcal{G},k}[\psi](\tau_1)E_{k+1}[\psi](\tau_1)\right)
	\end{aligned}
	\end{equation}
	where we used that there exists a $\tau_{1,2}\in[\frac{\tau_1+\tau_2}{2},\tau_2]$ such that 
	\begin{equation}
	\frac{\tau_{\textit{step}}}{2}E_{\mathcal{G},k}[\psi](\tau_{1,2})\leq \int_{\frac{\tau_1+\tau_2}{2}}^{\tau_2}E_{\mathcal{G},k}[\psi](\tau^\prime)d\tau^\prime,
	\end{equation}
	by the mean value theorem. Therefore, from inequality~\eqref{eq: proof of corollary quasilinear, eq 8} we conclude that, if we take 
	\begin{equation}
	E_{k+1}[\psi](\tau_1)\leq \epsilon,
	\end{equation}
	then there exists a $\tau_{1,2}\in[\frac{\tau_1+\tau_2}{2},\tau_2]$ such that 
	\begin{equation}\label{eq: proof of corollary quasilinear, eq 8.1}
	E_{\mathcal{G},k}[\psi](\tau_{1,2})\leq \left(\frac{2C}{\tau_{\textit{step}}}+2C\epsilon\right)E_{\mathcal{G},k}[\psi](\tau_1).
	\end{equation}
	Furthermore, by using the finite in time bound~\eqref{eq: proof of corollary quasilinear, eq 1}, for $E_{\mathcal{G},k}[\psi]$, in the time domain 
	\begin{equation}
	[\tau_{1,2},\tau_2],
	\end{equation}
	in conjunction with~\eqref{eq: proof of corollary quasilinear, eq 8.1}, we find a, potentially different, constant $C(k,M,\Lambda,A_{[k+1]},B_{[k+1]})>0$ such that
	\begin{equation}\label{eq: proof of corollary quasilinear, eq 8.2}
	E_{\mathcal{G},k}[\psi](\tau_2)\leq \left(\frac{2C}{\tau_{\textit{step}}}+2C\epsilon\right)E_{\mathcal{G},k}[\psi](\tau_1).
	\end{equation}
	Finally, we choose $\tau_{\textit{step}}(L,k,M,\Lambda,A_{[k+1]},B_{[k+1]})>0$ sufficiently large and conclude 
	\begin{equation}\label{eq: proof of corollary quasilinear, eq 8.3}
	E_{\mathcal{G},k}[\psi](\tau_2)\leq \frac{1}{L}E_{\mathcal{G},k}[\psi](\tau_1),
	\end{equation}
	after taking $\epsilon(\tau_{\textit{step}})$ sufficiently small. We have concluded the inequality~\eqref{eq: corollary quasilinear, eq 2}.

To conclude~\eqref{eq: corollary quasilinear, eq 3}, we use the now established~\eqref{eq: corollary quasilinear, eq 1},~\eqref{eq: corollary quasilinear, eq 1.1},~\eqref{eq: corollary quasilinear, eq 1.2},~\eqref{eq: corollary quasilinear, eq 2} and the  classical interpolation Lemma~\ref{lem: subsec: energies, lem 1}. Namely, by recalling the constant 
\begin{equation}
C_{\textit{int}}(k,M,\Lambda)>0
\end{equation}
from the intepolation Lemma~\ref{lem: subsec: energies, lem 1}, we obtain
\begin{equation}\label{eq: proof of corollary quasilinear, eq 9}
	\begin{aligned}
		E_{k+1}[\psi](\tau_2)	&	\leq C_{\textit{int}} \left(E_{k-1}[\psi](\tau_2)\right)^{1/3}\left(E_{k+2}[\psi](\tau_2)\right)^{2/3}\\
		&	\leq C_{\text{int}} \left(E_{\mathcal{G},k}[\psi](\tau_2)\right)^{1/3}\left(E_{k+2}[\psi](\tau_2)\right)^{2/3}\\
		&	\leq C_{\textit{int}} \left(\frac{1}{L}\right)^{1/3}\left(C_{\textit{wp}}\right)^{2/3} \left(E_{\mathcal{G},k}[\psi](\tau_1)\right)^{1/3}\left(E_{k+2}[\psi](\tau_1)\right)^{2/3}\\
		&	\leq C_{\textit{int}} e^{-\frac{1}{3}\log(L)+\frac{2}{3}\log C_{\textit{wp}}}\left(E_{\mathcal{G},k}[\psi](\tau_1)\right)^{1/3}\left(E_{k+2}[\psi](\tau_1)\right)^{2/3},
	\end{aligned}
\end{equation}
where for the constant
\begin{equation}
C_{\textit{wp}}(k,M,\Lambda,A_{[k+1]},B_{[k+1]})>0
\end{equation}
see Proposition~\ref{prop: well posedness quasilinear}. We have concluded~\eqref{eq: corollary quasilinear, eq 3}.

Of course our inequalities are still conditional on improving the bootstrap assumption~\eqref{eq: proof of corollary quasilinear, eq 0}. We see immediately however that by taking $C_{\text{b}}\gg C_{\textit{wp}}$ we improve the bootstrap~\eqref{eq: proof of corollary quasilinear, eq 0} by applying a Sobolev inequality on the already proved inequality~\eqref{eq: corollary quasilinear, eq 1.1}.

The proof of Theorem \ref{thm: quasilinear, 1, local bootstrap} is thus complete.
\end{proof}

\section{Proof of Theorem \ref{thm: quasilinear, 2, exp decay}}\label{sec: proof of theorem 2, exp decay}

We are now ready to prove Theorem \ref{thm: quasilinear, 2, exp decay}.

\begin{proof}[\textbf{Proof of Theorem \ref{thm: quasilinear, 2, exp decay}}]
	We will use the results of Theorem \ref{thm: quasilinear, 1, local bootstrap} in conjunction with an iteration argument in consecutive spacetime regions.

	To apply Theorem \ref{thm: quasilinear, 1, local bootstrap}, we choose $L>0$ sufficiently large so that 
	\begin{equation}\label{eq: subsec: delta, eq 1}
	L>\left(2\left(C_{\textit{int}}+1\right)e^{\frac{2}{3}\log (C_{\textit{wp}})}\right)^3
	\end{equation}
	where for the constants $C_{\textit{int}},C_{\textit{wp}}>0$ see equation~\eqref{eq: corollary quasilinear, eq 3}. Note that with this choice, the constant on the right hand side of~\eqref{eq: corollary quasilinear, eq 3} satisfies 
	\begin{equation}
	C_{\textit{int}}e^{-\frac{1}{3}\log L+ \frac{2}{3}\log C_{\textit{wp}}}<\frac{1}{2}. 
	\end{equation}
	This gives us  $\tau_{\textit{step}},\delta$ and $\epsilon$ from the statement of Theorem~\ref{thm: quasilinear, 1, local bootstrap} and our choise of $L$, namely~\eqref{eq: subsec: delta, eq 1}. We obtain 
	\begin{equation}\label{eq: proof: thm 2, exp decay, eq 3, -2}
		E_{k+1}[\psi](\tau_2)	\leq \left(E_{\mathcal{G},k}[\psi](\tau_1)\right)^{1/3}\left(E_{k+2}[\psi](\tau_1)\right)^{2/3},
	\end{equation}
	for $\tau_2=\tau_1+\tau_{\textit{step}}$.

	First, we prove that there exists a strictly increasing sequence of real numbers $\{t_i\}_{i\in\mathbb{N}}$, $t_i\rightarrow \infty$, such that 
	\begin{equation}\label{eq: proof: thm 2, exp decay, eq 3, -1}
	t_0=0,\qquad t_{i+1}-t_i= \tau_{\textit{step}},
	\end{equation}
	the solution exists in $D_\delta(0,t_i)$ and the following hold
	\begin{equation}\label{eq: proof: thm 2, exp decay, eq 3}
	\begin{aligned}
	E_{\mathcal{G},k}[\psi](t_i)    &   \leq \left(\frac{1}{L}\right)^iE_{\mathcal{G},k}[\psi](0),\\
	E_{k+1}[\psi](t_i) &	\leq \epsilon,\\
	E_{k+2}[\psi](t_i)	&	\leq \left(C_{\textit{wp}}\right)^i E_{k+2}[\psi](0)
	\end{aligned}
	\end{equation}

	Note that~\eqref{eq: proof: thm 2, exp decay, eq 3} holds for $t_0$. For the purpose of using induction, we assume that the solution exists in $D_\delta(0,t_i)$ and also assume the iterative step that~\eqref{eq: proof: thm 2, exp decay, eq 3} holds for $t_i$, so we want to prove~\eqref{eq: proof: thm 2, exp decay, eq 3} for $t_{i+1}$. Now, in view of Theorem~\ref{thm: quasilinear, 1, local bootstrap} we obtain that the solution exists in $D_\delta(0,t_{i+1})$ and moreover in view of inequality~\eqref{eq: proof: thm 2, exp decay, eq 3, -2}, the condition for $L$ namely~\eqref{eq: subsec: delta, eq 1} and  by the iterative step assumptions we obtain
	\begin{equation}\label{eq: proof: thm 2, exp decay, eq 3..2}
		\begin{aligned}
			E_{k+1}[\psi](t_{i+1})	&	\leq  \left(E_{\mathcal{G},k}[\psi](t_i)\right)^{1/3}\left(E_{k+2}[\psi](t_i)\right)^{2/3}\\
			&	\leq  \left(\frac{1}{L}\right)^{i/3}e^{\frac{2}{3} i\log(C_{\textit{wp}})}\left(E_{\mathcal{G},k}[\psi](0)\right)^{1/3}\left(E_{k+2}[\psi](0)\right)^{2/3}\\
			&	\leq e^{i\left(-\frac{1}{3}\log(L)+\frac{2}{3}\log(C_{\textit{wp}})\right)}\left(E_{\mathcal{G},k}[\psi](0)\right)^{1/3}\left(E_{k+2}[\psi](0)\right)^{2/3}\\
			&	\leq \left(E_{\mathcal{G},k}[\psi](0)\right)^{1/3}\left(E_{k+2}[\psi](0)\right)^{2/3}\\
			&	\leq \epsilon.
		\end{aligned}
	\end{equation}
	Moreover, since $E_{k+1}[\psi](0)\leq \epsilon$ is sufficiently small $\epsilon>0$, we apply~\eqref{eq: corollary quasilinear, eq 2} of Theorem~\ref{thm: quasilinear, 1, local bootstrap} in conjunction with the inductive step and obtain 
	\begin{equation}\label{eq: proof: thm 2, exp decay, eq 3..25}
		E_{\mathcal{G},k}[\psi](t_{i+1})		\leq \frac{1}{L}E_{\mathcal{G},k}[\psi](t_i)\leq \left(\frac{1}{L}\right)^{i+1}E_{\mathcal{G},k}[\psi](0).\\
	\end{equation}
	Furthermore, we apply inequality~\eqref{eq: corollary quasilinear, eq 1.2} of Theorem~\ref{thm: quasilinear, 1, local bootstrap} in conjunction with the inductive step and obtain 
	\begin{equation}\label{eq: proof: thm 2, exp decay, eq 3..3}
		\begin{aligned}
			E_{k+2}[\psi](t_{i+1})	\leq C_{\textit{wp}} E_{k+2}[\psi](t_i)	\leq \left(C_{\textit{wp}}\right)^{i+1} E_{k+2}[\psi](0)
		\end{aligned}
	\end{equation}
	Therefore, by~\eqref{eq: proof: thm 2, exp decay, eq 3..2},~\eqref{eq: proof: thm 2, exp decay, eq 3..25},~\eqref{eq: proof: thm 2, exp decay, eq 3..3}, it follows that~\eqref{eq: proof: thm 2, exp decay, eq 3} holds for $t_{i+1}$. Therefore, by induction~\eqref{eq: proof: thm 2, exp decay, eq 3} hold for all $t_j$, $j\in\mathbb{N}$.

	Now we proceed to prove the the exponential decay~\eqref{eq: thm 2, exp decay, eq 1}, of $E_{\mathcal{G},k}[\psi]$ for all times $\tau\geq 0$. Note that for any $\tau\in\mathbb{R}$ there exists a $t_l\in \{t_i\}_{i\in\mathbb{N}}$, such that 
	\begin{equation}
	|\tau-t_l|\leq \tau_{\textit{step}},\qquad t_l<\tau.
	\end{equation}
	We use the finite in time energy estimate~\eqref{eq: corollary quasilinear, eq 1}, of Theorem~\ref{thm: quasilinear, 1, local bootstrap}, and the decay of the energy of the sequence $\{t_i\}$, see~\eqref{eq: proof: thm 2, exp decay, eq 3}, and obtain the following 
	\begin{equation}
	E_{\mathcal{G},k}[\psi](\tau)\leq C(k,M,\Lambda,A_{[k+1]},B_{[k+1]}) E_{\mathcal{G},k}[\psi](t_l)\leq C(k,M,\Lambda,A_{[k+1]},B_{[k+1]}) \left(\frac{1}{L}\right)^{l}E_{\mathcal{G},k}[\psi](0).
	\end{equation}
	Now, by using one final time the finite in time energy estimate~\eqref{eq: corollary quasilinear, eq 1}, of Theorem~\ref{thm: quasilinear, 1, local bootstrap}, we conclude the desired inequality, by noting 
	\begin{equation}\label{eq: proof: thm 2, exp decay, eq 5}
		\begin{aligned}
			E_{\mathcal{G},k}[\psi](\tau)	&\leq C(k,M,\Lambda,A_{[k+1]},B_{[k+1]}) \left(\frac{1}{L}\right)^{l}E_{\mathcal{G},k}(0)\\
			&	\leq C(k,M,\Lambda,A_{[k+1]},B_{[k+1]}) e^{-l\: log L}E_{\mathcal{G},k}[\psi](0)\\
			&	\leq C(k,M,\Lambda,A_{[k+1]},B_{[k+1]}) e^{-c_{\text{d}} \tau}E_{\mathcal{G},k}[\psi](0),
		\end{aligned}
	\end{equation}
	for $c_{\text{d}}(K,M,\Lambda,A_{[k+1]},B_{[k+1]})=\frac{\log L}{\tau_{\textit{step}}}$, where in the last inequality we utilized that  
	\begin{equation}
	t_l=(t_l-t_{l-1})+(t_{l-1}-t_{l-2})+(t_{l-2}-t_{l-3})+\dots+(t_2-t_1)+t_1= \tau_{\textit{step}} l
	\end{equation}
	so 
	\begin{equation}
	e^{-l\log L\: }\leq e^{-\log L\frac{t_l}{\tau_{\textit{step}}}}\leq e^{\log L-\frac{\log L}{\tau_{\textit{step}}}\tau},
	\end{equation}
	which concludes inequality~\eqref{eq: thm 2, exp decay, eq 1}.

	The exponential decay of the lower order energy $E_{k+1}[\psi](\tau)$ follows by similar considerations, from inequality~\eqref{eq: proof: thm 2, exp decay, eq 3..2}, which concludes~\eqref{eq: thm 2, exp decay, eq 2}, for a certain constant $c_{\text{d}}(k,M,\Lambda,A_{[k+1]},B_{[k+1]})>0$.

	Note that for the top order energy $E_{k+2}[\psi](\tau)$, we use~\eqref{eq: corollary quasilinear, eq 1.2} of Theorem~\ref{thm: quasilinear, 1, local bootstrap} and~\eqref{eq: proof: thm 2, exp decay, eq 3}, and we only obtain 
	\begin{equation}
		\begin{aligned}
			E_{k+2}[\psi](\tau)	&	\leq C(k,M,\Lambda,A_{[k+1]},B_{[k+1]}) e^{\frac{\log(C_{\textit{wp}})}{\tau_{\textit{step}}}\tau}E_{k+2}[\psi](0),\\
		\end{aligned}
	\end{equation}
	which concludes~\eqref{eq: thm 2, exp decay, eq 3}, for a constant $c_{\text{g}}(k,M,\Lambda,A_{[k+1]},B_{[k+1]})=\frac{\log(C_{\textit{wp}})}{\tau_{\textit{step}}}>0$, where note that $C_{\textit{wp}}>1$, see Proposition~\ref{prop: well posedness quasilinear}.

	The proof of Theorem~\ref{thm: quasilinear, 2, exp decay} is thus complete.
\end{proof}

\section{The semilinear equation on Schwarzschild--de~Sitter}\label{sec: semilinear}

We here present a simplified proof for the global non-linear stability for the semilinear case 
\begin{equation}\label{eq: semilinear}
\Box_{g_{M,\Lambda}}\psi=\partial\psi\cdot\partial\psi,
\end{equation}
with $\partial\psi\cdot\partial\psi=a^{ij}\partial_i\psi\partial_j\psi$, where $a^{ij}$ are sufficiently regular components of a smooth tensor
\begin{equation}
	a:T\mathcal{M}\times T\mathcal{M}\rightarrow\mathbb{R}
\end{equation}
in Cartesian coordinates, see Section \ref{subsec: semilinear derivatives}. The results here give a more direct proof and a better regularity for the initial data than the relevant Theorems \ref{thm: quasilinear, 1, local bootstrap}, \ref{thm: quasilinear, 2, exp decay} that treat the more general quasilinear wave equation.

\begin{remark}
Note that in this Section we study $D(\tau_1,\tau_2)$ instead of $D_\delta(\tau_1,\tau_2)$, as $\delta=0$ here. One may easily a posteriori extend the results to $D_\delta(\tau_1,\tau_2)$. See Remark~\ref{rem: appendix: local well posedness, rem 1}.
\end{remark}

We give the statement of our Theorem, the analogue of Theorem \ref{thm: quasilinear, 1, local bootstrap}, on a fixed large time domain.

\begin{customTheorem}{1$^\prime$}\label{thm: semilinear, local bootstrap}
	Let $k\geq 7$ and let the tensor $a$ be as in Section \ref{subsec: semilinear derivatives}. 
	
	Then, there exists a $\tau_{\textit{step}}(k,M,\Lambda,A_{[k-1]})>0$ sufficiently large and there exists
	\begin{equation}
	\epsilon(\tau_{\textit{step}},k,M,\Lambda,A_{[k-1]})>0
	\end{equation}
	sufficiently small, where for $A_{[k]}$ see Section \ref{subsec: semilinear derivatives}, such that if we take initial data for~\eqref{eq: semilinear} on $\{\tau=\tau_1\}$ with
	\begin{equation}
	E_{\mathcal{G},k}[\psi](\tau_1)\leq \epsilon,\qquad E_{k}[\psi](\tau_1)<\infty
	\end{equation}
	and 
	\begin{equation}
	\tau_2=\tau_1+\tau_{\textit{step}},
	\end{equation}
	then there exists a unique $H^k$ solution in $D_\delta(\tau_1,\tau_2)$ to the semilinear wave equation~\eqref{eq: semilinear}, and the following inequality holds
	\begin{equation}\label{eq: thm: semilinear, local bootstrap, eq 1}
	\begin{aligned}
	E_{\mathcal{G},k}[\psi](\tau^\prime)    &   \leq \frac{1}{2}  E_{\mathcal{G},k}[\psi](\tau_1),\\
	\end{aligned}
	\end{equation}
	for all $\tau^\prime\in[\tau_1,\tau_2]$. 
\end{customTheorem}

A Corollary of Theorem \ref{thm: semilinear, local bootstrap}, which is also the analogue of Theorem \ref{thm: quasilinear, 2, exp decay}, is the following global existence and exponential decay result.

\begin{customTheorem}{2$^\prime$}\label{thm: semilinear, exp decay}
	Let $k\geq 7$ and let the tensor $a$ be as in Section \ref{subsec: semilinear derivatives}. Then, there exist positive constants
	\begin{equation}
		c_{\text{d}}(k,M,\Lambda,A_{[k-1]}),\qquad C(k,M,\Lambda,A_{[k-1]})>0
	\end{equation}
	 such that the following holds. 
	
	There exists an $\epsilon>0$ sufficiently small, such that for 
	\begin{equation}
	E_{\mathcal{G},k}[\psi](0)\leq \epsilon,\qquad E_k[\psi](0)<\infty
	\end{equation}
	the solution of the semilinear wave equation~\eqref{eq: semilinear} exists globally on $D(0,\infty)$ and the following holds
	\begin{equation}
	E_{\mathcal{G},k}[\psi](\tau)\leq C e^{-c_{\text{d}}\tau} E_{\mathcal{G},k}[\psi](0),
	\end{equation}
	for all $\tau\geq 0$.
	
\end{customTheorem}

Before proving Theorem~\ref{thm: semilinear, local bootstrap} we need the following energy estimate on a fixed spacetime slab. This is the analogue of Proposition \ref{prop: well posedness quasilinear}. 

\begin{proposition}\label{prop: semilinear}
	Let $k\geq 5$ and let the tensor $a$ be as in Section \ref{subsec: semilinear derivatives}. There exists a positive constant
	\begin{equation}
		C(k,M,\Lambda,A_{[k-1]})>0,
	\end{equation}
	where for $A_{[k-1]}$ see Section \ref{subsec: semilinear derivatives}, such that the following holds. 
	
	 Let $\psi$ be a solution of the semilinear wave equation~\eqref{eq: semilinear}, on a Schwarzschild--de~Sitter background $D(\tau_1,\tau_2)$ for $\tau_1<\tau_2$. Then, if the following holds  
	 \begin{equation}\label{eq: prop: semilinear, eq 0}
	 \sup_{D(\tau_1,\tau_2)}\sum_{1\leq i\leq \floor{\frac{k-1}{2}}+1}\sum_{\partial\in\{\partial_{\bar{t}},\partial_r,\Omega_1,\Omega_2,\Omega_3\}}|\partial^i\psi|\leq \sqrt{\epsilon}
	 \end{equation}
	 for a sufficiently small $\epsilon>0$, then we obtain 
	\begin{equation}\label{eq: thm semilinear, eq 1}
	\begin{aligned}
	E_{\mathcal{G},k}[\psi](\tau_2)+\int_{\tau_1}^{\tau_2}d\tau E_{\mathcal{G},k}[\psi](\tau)\leq & \:C E_{\mathcal{G},k}[\psi](\tau_1)	+ C\left(\sup_{\tau_1\leq\tau^\prime\leq\tau_2}E_{k-1}[\psi](\tau^\prime)\right) \left(\int_{\tau_1}^{\tau_2}d\tau E_{\mathcal{G},k}[\psi](\tau)\right),
	\end{aligned}
	\end{equation}
	for all $0\leq\tau_1\leq\tau_2$. 
\end{proposition}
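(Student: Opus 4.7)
The plan is to mimic the structure of the proof of Proposition~\ref{prop: quasilinear}, but exploit the absence of the quasilinear correction $h$ to obtain a cleaner bound in which one factor comes out in $\sup$ and the other in $\int$. First I would rewrite the equation in the form $\Box_{\mathring{g}}\psi = F$ with $F = a^{ij}\partial_i\psi\partial_j\psi$, and directly apply the inhomogeneous energy estimate of Theorem~\ref{thm: wave equation with inhomogeneities} at order $k$. This reduces the task to bounding the four types of error expressions on the right hand side of~\eqref{eq: energy estimate of inhomogeneuos equation} that arise from the semilinearity, namely the two bulk expressions
\begin{equation*}
\int\!\!\int_{D(\tau_1,\tau_2)}\!\!\sum_{0\leq i_1+i_2+i_3\leq k-2}\sum_\alpha \bigl(\partial_{\bar{t}}^{i_1}\partial_r^{i_2}\Omega_\alpha^{i_3}F\bigr)^2 + (1-\mu)\sum_{0\leq i_1+i_3\leq k-2}\sum_\alpha\bigl(\partial_{\bar{t}}^{i_1}\Omega_\alpha^{i_3}\mathcal{G}F\bigr)^2
\end{equation*}
and the corresponding hypersurface terms on $\{\bar{t}=\tau_2\}$.

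For the unweighted bulk terms, Leibniz in Cartesian coordinates yields, schematically, $(\partial^{i}F)^{2}\lesssim \sum_{i_1+i_2\leq i}(\partial^{i_1+1}\psi)^2 (\partial^{i_2+1}\psi)^2$ (with $a^{ij}$-derivatives absorbed into a constant $C(k,M,\Lambda,A_{[k-1]})$). After putting $i_1\leq\lfloor(k-2)/2\rfloor$ by symmetry, I would pull the lower-derivative factor out in the sup norm on $\{\bar t = \tau\}$ and apply the Sobolev inequality of Lemma~\ref{lem: sobolev estimate}; since $i_1+3\leq k-1$ as soon as $k\geq 5$, this controls that sup by $E_{k-1}[\psi](\tau)$. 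The remaining factor is integrated in $\{\bar t = \tau\}$ and bounded by $E_{\mathcal{G},k}[\psi](\tau)$ (using $E_{k-1}\leq E_{\mathcal{G},k}$ from Remark~\ref{rem: energies}). Taking the supremum in $\tau$ of the first factor and then integrating in $\tau$ by the coarea formula~\eqref{eq: subsec: coarea, eq 1} yields precisely the product $\bigl(\sup_{\tau_1\leq\tau\leq\tau_2} E_{k-1}[\psi](\tau)\bigr)\int_{\tau_1}^{\tau_2}E_{\mathcal{G},k}[\psi](\tau)\,d\tau$ on the right hand side of~\eqref{eq: thm semilinear, eq 1}.

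For the $\mathcal{G}$-weighted bulk terms, I would first commute $\mathcal{G}$ past the semilinearity exactly as in~\eqref{eq: energy estimate for quasilinear, 1.09}, writing $\mathcal{G}F=(\mathcal{G}a^{\alpha\beta})\partial_\alpha\psi\,\partial_\beta\psi + a^{\alpha\beta}\bigl([\mathcal{G},\partial_\alpha]\psi+\partial_\alpha\mathcal{G}\psi\bigr)\partial_\beta\psi + (\text{sym.})$. The commutator contributions are bounded in the same way as the unweighted terms. The genuinely new terms contain one factor $(1-\mu)^{1/2}\partial\mathcal{G}\psi$; thanks to the $(1-\mu)$ weight, these are exactly the kind of quantities controlled at order $i_3=1$ inside $E_{\mathcal{G},k}[\psi]$, and the same Sobolev/sup--integral split as above delivers a bound by $\sup_\tau E_{k-1}[\psi](\tau)\cdot \int_{\tau_1}^{\tau_2}E_{\mathcal{G},k}[\psi](\tau)\,d\tau$.

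The main obstacle, as in Proposition~\ref{prop: quasilinear}, will be the hypersurface error terms at $\{\bar t=\tau_2\}$, which sit on the \emph{same} side as the term we are trying to bound on the left. To dispose of them I would use the sup-norm bootstrap~\eqref{eq: prop: semilinear, eq 0} to extract a smallness $\sqrt\epsilon$ from one copy of $\partial\psi$ in each nonlinear factor; the remaining integrand is of order at most $k$, and by the degeneracy $(1-\mu)^{2i_3+1}$ in the $\mathcal{G}$-weighted boundary piece it is bounded pointwise by the integrand defining $E_{\mathcal{G},k}[\psi](\tau_2)$. Choosing $\epsilon$ sufficiently small (depending only on $k,M,\Lambda,A_{[k-1]}$) lets these terms be absorbed into the $E_{\mathcal{G},k}[\psi](\tau_2)$ on the left. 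Collecting all contributions gives~\eqref{eq: thm semilinear, eq 1} with a constant $C(k,M,\Lambda,A_{[k-1]})$, which completes the proof.
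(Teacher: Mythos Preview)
Your proposal is correct and follows essentially the same approach as the paper: apply Theorem~\ref{thm: wave equation with inhomogeneities} with $F=a^{ij}\partial_i\psi\partial_j\psi$, bound the unweighted bulk error by Leibniz plus Sobolev (putting the low factor in $L^\infty$, requiring $k\geq 5$), bound the $\mathcal{G}$-weighted bulk error by the commutation computation of~\eqref{eq: energy estimate for quasilinear, 1.09}--\eqref{eq: energy estimate for quasilinear, 1.1}, and absorb the $\{\bar t=\tau_2\}$ boundary errors into the left using the smallness~\eqref{eq: prop: semilinear, eq 0}. The only cosmetic difference is that the paper first bounds the bulk by $\int_{\tau_1}^{\tau_2}E_{k-1}[\psi](\tau)E_{\mathcal{G},k}[\psi](\tau)\,d\tau$ and then pulls out the $\sup$, whereas you take the $\sup$ at the outset.
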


\begin{proof}[\textbf{Proof of Proposition \ref{prop: semilinear}}]
	We will use the derivatives notation of Section \ref{subsec: derivatives notation}. Moreover, for the semilinear term $\partial\psi\cdot\partial\psi$ and for the definition of the tensor $a$ see Section \ref{subsec: semilinear derivatives}.

	Let 
	\begin{equation}
		F=\partial\psi\cdot\partial\psi
	\end{equation}
	then by~\eqref{eq: energy estimate of inhomogeneuos equation} we obtain
	\begin{equation}\label{eq: energy estimate for semilinear, 1}
		\begin{aligned}
			&	E_{\mathcal{G},k}[\psi](\tau_2)+\int_{\tau_1}^{\tau_2}d\tau E_{\mathcal{G},k}[\psi](\tau)\\
			&	\quad\quad\leq C(M,\Lambda) E_{\mathcal{G},k}[\psi](\tau_1)\\
			&	\quad\quad\quad\quad+C(M,\Lambda)\int\int_{D(\tau_1,\tau_2)} \sum_{i+j=0}^{k-2}\sum_\alpha (1-\mu)\left(\partial_{\bar{t}}^i(\Omega_\alpha)^j \mathcal{G} F\right)^2 + \sum_{0\leq i_1+i_2+i_3\leq k-2}\sum_\alpha\left(\partial_{\bar{t}}^{i_1}\partial_r^{i_2}\left(\Omega_\alpha\right)^{i_3} F\right)^2\\
			&	\quad\quad\quad\quad+C(M,\Lambda) \int_{\{\bar{t}=\tau_2\}} \sum_{0\leq i_1+i_2+i_3\leq k-3}\sum_\alpha(1-\mu)^{2i_3+1}\left(\partial_{\bar{t}}^{i_1}\Omega_\alpha^{i_2}\partial_r^{i_3}\mathcal{G}F\right)^2+\sum_{0\leq i_1+i_2+i_3\leq k-3}\sum_\alpha\left(\partial_{\bar{t}}^{i_1}\Omega_\alpha^{i_2}\partial_r^{i_3}F\right)^2.			 
		\end{aligned}
	\end{equation}
	
	For the terms 
	\begin{equation}
	\int\int_{D(\tau_1,\tau_2)} \sum_{i+j=0}^{k-2}\sum_\alpha (1-\mu)\left(\partial_{\bar{t}}^i(\Omega_\alpha)^j \mathcal{G} F\right)^2
	\end{equation}
	we have already computed in the proof of Proposition \ref{prop: quasilinear}, see inequality~\eqref{eq: energy estimate for quasilinear, 1.1}, that
	\begin{equation}\label{eq: energy estimate for semilinear, 1, 1}
		\begin{aligned}
			\int\int_{D(\tau_1,\tau_2)} \sum_{i+j=0}^{k-2}\sum_\alpha (1-\mu)\left(\partial_{\bar{t}}^i(\Omega_\alpha)^j \mathcal{G} F\right)^2	&	\leq C(k,M,\Lambda,A_{[k-1]}) \int_{\tau_1}^{\tau_2}d\tau E_{k-1}[\psi](\tau)E_{\mathcal{G},k}[\psi](\tau)\\
			&	\leq C(k,M,\Lambda,A_{[k-1]}) \sup_{\tau^\prime\in[\tau_1,\tau_2]}E_{k-1}[\psi](\tau^\prime)\int_{\tau_1}^{\tau_2}E_{\mathcal{G},k}[\psi](\tau)d\tau.
		\end{aligned}
	\end{equation}

	From the terms 
	\begin{equation}
	\int\int_{D(\tau_1,\tau_2)} \sum_{0\leq i_1+i_2+i_3\leq k-2}\sum_\alpha\left(\partial_{\bar{t}}^{i_1}\partial_r^{i_2}\left(\Omega_\alpha\right)^{i_3} F\right)^2
	\end{equation}
	on the right hand side of~\eqref{eq: energy estimate for semilinear, 1}, we only discuss the term 
	\begin{equation}\label{eq: energy estimate for semilinear, 1.2}
	\int\int_{D(\tau_1,\tau_2)}\sum_{i=0}^{k-2} \left(\partial_{\bar{t}}^iF\right)^2
	\end{equation}
	as the rest will admit the same bound. So, we bound the term~\eqref{eq: energy estimate for semilinear, 1.2}, by using the coarea formula~\eqref{eq: subsec: coarea, eq 1}, as follows
	\begin{equation}\label{eq: energy estimate for semilinear, 1, 2}
	\begin{aligned}
	\int\int_{D(\tau_1,\tau_2)}\sum_{i=0}^{k-2} \left(\partial_{\bar{t}}^i F\right)^2 &	=\int\int_{D(\tau_1,\tau_2)}\sum_{i=0}^{k-2}\left(\partial_{\bar{t}}^i(a^{\alpha\beta}\partial_\alpha\psi\partial_\beta\psi)\right)^2\\
	&	 \leq C(k,M,\Lambda,A_{[k-1]})  \int\int_{D(\tau_1,\tau_2)}\sum_{1\leq i+j\leq k-1}(\partial^i\psi)^2(\partial^j\psi)^2\\
	&  \leq C(k,M,\Lambda,A_{[k-1]})  \int_{\tau_1}^{\tau_2}d\tau\sum_{1\leq i+j\leq k-1,i\leq \floor{\frac{k-1}{2}}} \sup_{\{\bar{t}=\tau\}}(\partial^i\psi)^2\int_{\{\bar{t}=\tau\}}(\partial^j\psi)^2 \\	
	&  \leq C(k,M,\Lambda,A_{[k-1]})  \sup_{\tau_1\leq\tau^\prime\leq\tau_2} \left(E_{k-1}[\psi](\tau^\prime)\right)\int_{\tau_1}^{\tau_2}d\tau E_{\mathcal{G},k}[\psi](\tau),
	\end{aligned}
	\end{equation}
	where, in the last inequality, we used a Sobolev inequality, see Lemma~\ref{lem: sobolev estimate}, and the assumption that $k\geq 5$. 
	
	Therefore, by using~\eqref{eq: energy estimate for semilinear, 1, 1},~\eqref{eq: energy estimate for semilinear, 1, 2}, inequality~\eqref{eq: energy estimate for semilinear, 1} implies 
	\begin{equation}\label{eq: energy estimate for semilinear, 3}
	\begin{aligned}
	&	E_{\mathcal{G},k}[\psi](\tau_2)+\int_{\tau_1}^{\tau_2}d\tau E_{\mathcal{G},k}[\psi](\tau)\\
	&	\quad \leq  C(k,M,\Lambda) E_{\mathcal{G},k}[\psi](\tau_1)\\
	&	\quad\quad + C(k,M,\Lambda,A_{[k-1]}) \left(\sup_{\tau_1\leq\tau^\prime\leq\tau_2}E_{k-1}(\tau^\prime)\right) \left(\int_{\tau_1}^{\tau_2}d\tau E_{\mathcal{G},k}[\psi](\tau)\right)\\
	&	\quad\quad+C(k,M,\Lambda) \int_{\{\bar{t}=\tau_2\}} \sum_{0\leq i_1+i_2+i_3\leq k-3}\sum_\alpha(1-\mu)^{2i_3+1}\left(\partial_{\bar{t}}^{i_1}\Omega_\alpha^{i_2}\partial_r^{i_3}\mathcal{G}F\right)^2+\sum_{0\leq i_1+i_2+i_3\leq k-3}\sum_\alpha\left(\partial_{\bar{t}}^{i_1}\Omega_\alpha^{i_2}\partial_r^{i_3}F\right)^2,
	\end{aligned}
	\end{equation}
	where recall that $F=\partial\psi\cdot\partial\psi$. 
	
	Finally, we want to absorb the boundary terms of the right hand side at $\{\bar{t}=\tau_2\}$ by the relevant boundary term $E_{\mathcal{G},k}[\psi](\tau_2)$ of the left hand side. By using the smallness assumption~\eqref{eq: prop: semilinear, eq 0}, it is evident that we can appropriately distribute derivatives, in view of the restriction $k\geq 5$, to conclude that 
	\begin{equation}
		\begin{aligned}
			E_{\mathcal{G},k}[\psi](\tau_2)+\int_{\tau_1}^{\tau_2}d\tau E_{\mathcal{G},k}[\psi](\tau)\leq & \:C E_{\mathcal{G},k}[\psi](\tau_1)	+ C\left(\sup_{\tau_1\leq\tau^\prime\leq\tau_2}E_{k-1}[\psi](\tau^\prime)\right) \left(\int_{\tau_1}^{\tau_2}d\tau E_{\mathcal{G},k}[\psi](\tau)\right),
		\end{aligned}
	\end{equation}
	for a constant $C=C(k,M,\Lambda,A_{[k]})$. 
	
	The proof of Proposition \ref{prop: semilinear} is thus complete. 
\end{proof}

Now we are ready to prove Theorem~\ref{thm: semilinear, local bootstrap}.

\begin{proof}[\textbf{Proof of Theorem~\ref{thm: semilinear, local bootstrap}}]

Note that the semilinear wave equation~\eqref{eq: semilinear} is well posed in $H^k$, $k\geq 7$, by well-known arguments. The existence of the solution in $D(\tau_1,\tau_2)$, for $\tau_2$ defined below, will follow easily from the estimates that we will prove, so for convenience we here assume existence.

Moreover, we assume the bootstrap assumption 
\begin{equation}\label{eq: thm: semilinear, exp decay, eq -1}
\sup_{D(\tau_1,\tau_2)}\sum_{1\leq i\leq \floor{\frac{k-1}{2}}+1}\sum_{\partial\in\{\partial_{\bar{t}},\partial_r,\Omega_1,\Omega_2,\Omega_3\}}|\partial^i\psi|\leq C_{\text{b}} \sqrt{\epsilon}
\end{equation}
for any $\tau\geq 0$, for a constant $C_{\text{b}}(k,M,\Lambda,A_{[k-1]})>0$ to be determined later.

First, we want to prove that for $\tau_{\textit{step}}>0$ to be chosen later, there exists an $\epsilon(\tau_{\textit{step}},C_{\text{b}})>0$ sufficiently small such that if 
\begin{equation}
E_{\mathcal{G},k}[\psi](\tau_1)\leq \epsilon
\end{equation}
for some $\tau_1\geq 0$, then for 
\begin{equation}
	\tau_2=\tau_1+\tau_{\textit{step}}
\end{equation}
we obtain 
\begin{equation}\label{eq: thm: semilinear, exp decay, eq -0.1}
\begin{aligned}
E_{k-1}[\psi](\tau_2)	&	\leq \epsilon, \\
E_{\mathcal{G},k}[\psi](\tau_2)	&	\leq \frac{1}{2} E_{\mathcal{G},k}[\psi](\tau_1).\\
\end{aligned}
\end{equation}

Let $\tau_1\geq 0$, then in view of the bootstrap assumption~\eqref{eq: thm: semilinear, exp decay, eq -1}, and for $\epsilon(\tau_{\textit{step}},C_{\text{b}})$ we obtain the result of Proposition~\ref{prop: semilinear}, namely 
\begin{equation}\label{eq: thm: semilinear, exp decay, eq -0}
\begin{aligned}
E_{\mathcal{G},k}[\psi](\tau_2)+\int_{\tau_1}^{\tau_2}d\tau E_{\mathcal{G},k}[\psi](\tau)\leq & \:C E_{\mathcal{G},k}[\psi](\tau_1)	+ C\left(\sup_{\tau_1\leq\tau^\prime\leq\tau_2}E_{k-1}[\psi](\tau^\prime)\right) \left(\int_{\tau_1}^{\tau_2}d\tau E_{\mathcal{G},k}[\psi](\tau)\right)
\end{aligned}
\end{equation}
for some constant $C=C(k,M,\Lambda,A_{[k-1]})$. Now, for a sufficiently small $\epsilon(\tau_{\textit{step}})$ if we assume that 
\begin{equation}
E_{\mathcal{G},k}[\psi](\tau_1)\leq\epsilon,
\end{equation}
we also have that $E_{k-1}[\psi](\tau_1)\leq \epsilon$, and from~\eqref{eq: thm: semilinear, exp decay, eq -0} we conclude that there exists a constant 
\begin{equation}
C(k,M,\Lambda,A_{[k-1]})>0
\end{equation}
where for $A_{[k-1]}$ see Section \ref{subsec: semilinear derivatives}, such that 
\begin{equation}\label{eq: thm: semilinear, exp decay, eq 0}
E_{\mathcal{G},k}[\psi](\tau^\prime)\leq C(k,M,\Lambda,A_{[k-1]}) E_{\mathcal{G},k}[\psi](\tau_1)
\end{equation}
for all $\tau^\prime \in[\tau_1,\tau_2]$.

Moreover, by using again the result of Proposition \ref{prop: semilinear} (also see the relevant computation in the proof of Theorem~\ref{thm: quasilinear, 1, local bootstrap}), there exists a constant $C(k,M,\Lambda,A_{[k-1]})>0$ such that for any $\tau_{\textit{step}}>0$ sufficiently large, there exists an $\epsilon(\tau_{\textit{step}})>0$ sufficiently small, and a value
\begin{equation}
\tau_{1,2}\in\left[\frac{\tau_1+\tau_2}{2},\tau_2\right]
\end{equation}
such that, after taking
\begin{equation}
E_{\mathcal{G},k}[\psi](\tau_1)\leq  \epsilon 
\end{equation}
and recalling $E_{k-1}[\psi](\tau_1)\leq E_{\mathcal{G},k}[\psi](\tau_1)$,  we obtain	
\begin{equation}\label{eq: thm: semilinear, exp decay, eq 0.1}
E_{\mathcal{G},k}[\psi](\tau_{1,2})\leq \frac{1}{\tau_{\textit{step}}}C(k,M,\Lambda,A_{[k-1]}) E_{\mathcal{G},k}[\psi](\tau_1),
\end{equation}
where we used the mean value theorem. By using~\eqref{eq: thm: semilinear, exp decay, eq 0} and~\eqref{eq: thm: semilinear, exp decay, eq 0.1} we obtain
\begin{equation}\label{eq: thm: semilinear, exp decay, eq 1}
E_{\mathcal{G},k}[\psi](\tau_2)\leq \frac{1}{\tau_{\textit{step}}}C(k,M,\Lambda,A_{[k-1]})E_{\mathcal{G},k}[\psi](\tau_1)
\end{equation}
for $\tau_{\textit{step}}>0$ and $\epsilon(\tau_{\textit{step}},C_{\text{b}})>0$ sufficiently small, and for a different constant $C(k,M,\Lambda,A_{[k-1]})$. Now, note that 
\begin{equation}
E_{k-1}[\psi](\tau_2)\leq E_{\mathcal{G},k}[\psi](\tau_2)\leq\frac{1}{\tau_{\textit{step}}}C(k,M,\Lambda,A_{[k-1]}) E_{\mathcal{G},k}[\psi](\tau_1).
\end{equation}
Therefore, we choose a $\tau_{\textit{step}}$ sufficiently large, such that
\begin{equation}
\frac{1}{\tau_{\textit{step}}}C(k,M,\Lambda,A_{[k-1]})<\frac{1}{2}
\end{equation}
and we obtain 
\begin{equation}
\begin{aligned}
E_{k-1}[\psi](\tau_2)\leq E_{\mathcal{G},k}[\psi](\tau_1)&	\leq \epsilon,\\
E_{\mathcal{G},k}[\psi](\tau_2)	&	\leq \frac{1}{2} E_{\mathcal{G},k}[\psi](\tau_1).\\
\end{aligned}
\end{equation}

Of course our inequalities are still conditional on improving the bootstrap assumption~\eqref{eq: thm: semilinear, exp decay, eq -1}. For the purpose of improving constant $C_{\text{b}}>0$ of the bootstrap assumption \eqref{eq: thm: semilinear, exp decay, eq -1} we note from~\eqref{eq: thm: semilinear, exp decay, eq 0} that
\begin{equation}\label{eq: thm: semilinear, exp decay, eq 4}
E_{k-1}[\psi](\tau^\prime)\leq C(k,M,\Lambda,A_{[k-1]}) E_{\mathcal{G},k}[\psi](\tau_1)\leq  C(k,M,\Lambda,A_{[k-1]})\epsilon,
\end{equation}
for all $\tau^\prime \in[\tau_1,\tau_2]$. Now, since $k=7$ is the smallest integer such that 
\begin{equation}
\left(\floor{\frac{k-1}{2}}+1\right)+2\leq k-1
\end{equation}
we use a Sobolev inequality in~\eqref{eq: thm: semilinear, exp decay, eq 4} and prove 
\begin{equation}
\sup_{D(0,\tau)}\sum_{1\leq i\leq \floor{\frac{k-1}{2}}+1}\sum_{\partial\in\{\partial_{\bar{t}},\partial_r,\Omega_1,\Omega_2,\Omega_3\}}|\partial^i\psi|\leq C(k,M,\Lambda,A_{[k-1]}) \sqrt{\epsilon}
\end{equation}
which improves the bootstrap for a sufficiently large $C_{\text{b}}\gg C(k,M,\Lambda,A_{[k-1]})$.

This concludes the proof the Theorem. 
\end{proof}

Now, we are ready to prove Theorem~\ref{thm: semilinear, exp decay}, which is in fact a Corollary of Theorem \ref{thm: semilinear, local bootstrap}.

\begin{proof}[\textbf{Proof of Theorem~\ref{thm: semilinear, exp decay}}]

	We want to prove that for sufficiently small initial data
	\begin{equation}
		E_{\mathcal{G},k}[\psi](0)\leq \epsilon,
	\end{equation}
	and for the following sequence of ascending real numbers 
	\begin{equation}
		t_0=\tau_{\textit{step}},\qquad t_{i+1}-t_i=\tau_{\textit{step}},
	\end{equation}
	the solution exists in $D(0,t_i)$ and we obtain that 
	\begin{equation}\label{eq: thm: semilinear, exp decay, eq 2}
		\begin{aligned}
			E_{k-1}[\psi](t_i)	&	\leq C(k,M,\Lambda)\epsilon, \\
			E_{\mathcal{G},k}[\psi](t_i)	&	\leq \left(\frac{1}{2}\right)^i E_{\mathcal{G},k}[\psi](0),\\
		\end{aligned}
	\end{equation}
	for all $i\in\mathbb{N}$, where for the constant $C(k,M,\Lambda)$ see Remark \ref{rem: energies}.
	
	Note that the inequalities~\eqref{eq: thm: semilinear, exp decay, eq 2} hold for $t_0$. Therefore, for the purpose of using induction we assume the inductive step that~\eqref{eq: thm: semilinear, exp decay, eq 2} hold for $t_i$. Then, from Theorem~\ref{thm: semilinear, local bootstrap}, specifically from inequality~\eqref{eq: thm: semilinear, local bootstrap, eq 1} and the inductive step we obtain 
	\begin{equation}
		\begin{aligned}
			E_{k-1}[\psi](t_{i+1})	&	\leq C(k,M,\Lambda)\epsilon, \\
			E_{\mathcal{G},k}[\psi](t_{i+1})	&	\leq \left(\frac{1}{2}\right)^{i+1} E_{\mathcal{G},k}[\psi](0),\\
		\end{aligned}
	\end{equation}
	and moreover conclude that the solution exists in $D(0,t_{i+1})$.

	Therefore, we have concluded global existence and the inequalities~\eqref{eq: thm: semilinear, exp decay, eq 2} for all $i\in\mathbb{N}$.

	Now, we want to conclude the result of the Theorem, namely we want to prove that for 
	\begin{equation}
		E_{\mathcal{G},k}[\psi](0)\leq \epsilon
	\end{equation}
	we obtain 
	\begin{equation}\label{eq: thm: semilinear, exp decay, eq 3}
		\begin{aligned}
			E_{\mathcal{G},k}[\psi](\tau)	&	\leq C e^{-c_\text{d}\tau}E_{\mathcal{G},k}[\psi](0), \\
		\end{aligned}
	\end{equation}
	for constants $c_{\text{d}}(k,M,\Lambda,A_{[k-1]}),\:  C(k,M,\Lambda,A_{[k-1]})>0$. With initial data 
	\begin{equation}
		E_{\mathcal{G},k}[\psi](0)\leq \epsilon
	\end{equation}
	for a sufficiently small $\epsilon>0$, we use equation~\eqref{eq: thm: semilinear, exp decay, eq 2} and the local in time inequality~\eqref{eq: thm: semilinear, local bootstrap, eq 1} of Theorem \ref{thm: semilinear, local bootstrap}, to immediately conclude~\eqref{eq: thm: semilinear, exp decay, eq 3}.

	We conclude the Theorem. 	
\end{proof}

\appendix

\section{Proof of the Cauchy stability result of Proposition \ref{prop: well posedness quasilinear}}\label{appendix: local well posedness}

In this Section we prove the Cauchy stability statement of Proposition \ref{prop: well posedness quasilinear}. This proof will require using geometric estimates with respect to the metric $g(\nabla\psi)$, so as to not lose derivatives. We will use some standard definitions and results of the linear theory. Therefore, we first need to discuss some standard notions.

\subsection{Energy momentum tensor and the divergence theorem of a metric \texorpdfstring{$g$}{g}}\label{subsec: energy momentum tensor}

Let $g$ be a sufficiently regular Lorentzian metric, and $\psi$ a sufficiently regular function. Then, we define the energy momentum tensor with respect to that metric as 
\begin{equation}
\mathbb{T}(g)[\psi]=d\psi \otimes d\psi-\frac{1}{2}g |\nabla\psi|_g^2,
\end{equation}
where $|\nabla\psi|_g^2\:\dot{=}\:g^{ab}\partial_a\psi\partial_b\psi$.

Note the following Proposition
\begin{proposition}\label{prop: energy identity, prop 1, SdS}
	Let $g$ be a sufficiently regular Lorentzian metric on $\mathcal{M}_\delta$ and $\delta (M,\Lambda)\geq 0$ sufficiently small. Also, let $\psi$ satisfy the equation 
	\begin{equation}
	\Box_g \psi=F.
	\end{equation}
	Then, we obtain the following identity
	\begin{equation}\label{eq: lem energy identity, eq 1}
	\begin{aligned}
	&   \int_{\{\bar{t}=\tau_2\}\cap D_\delta(\tau_1,\tau_2)}\mathbb{T}(g)(X,n)[\psi]+\int_{\mathcal{H}^+_{\delta}\cap D_\delta(\tau_1,\tau_2)}\mathbb{T}(g)(X,n)[\psi]+\int_{\bar{\mathcal{H}}^+_\delta\cap D_\delta(\tau_1,\tau_2)}\mathbb{T}(g)(X,n)[\psi]   \\
	&   +\int\int_{D_\delta(\tau_1,\tau_2)}K^X(g) = \int_{\{\bar{t}=\tau_1\}\cap D_\delta(\tau_1,\tau_2)} \mathbb{T}(g)(X,n)[\psi]+\int\int_{D_\delta(\tau_1,\tau_2)}\textit{Err}^X[\psi]
	\end{aligned}
	\end{equation}
	for any $0\leq \tau_1\leq \tau_2$ and for any vector field $X$, where 
	\begin{equation}
	K^X(g)= \frac{1}{2}\:^{(X)}\pi_{\mu\nu}(g)\mathbb{T}^{\mu\nu}(g)[\psi], \qquad \textit{Err}^X[\psi] = -(X\psi) F,
	\end{equation}
	where note that \textit{the raised indices are with the metric} $g$.

	The volume forms on the spacetime domains are to be understood with respect to the metric $g$, and the volume forms on the hypersurfaces are to be understood as the ones induced by the volume form of $g$. Also 
	\begin{equation}
	n
	\end{equation}
	is to be understood as the unit normal of each hypersurface with respect to $g$. 
	
	Note that in the case $g=g_{M,\Lambda},\:\delta\geq 0$, we have explicitly computed the volume forms and the normals, with which the divergence theorem is to be understood, in Section \ref{subsec: volume forms}. 
\end{proposition}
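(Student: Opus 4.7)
The plan is to derive the identity as a direct application of the divergence theorem to the energy current $J^X_\mu := \mathbb{T}_{\mu\nu}(g)[\psi] X^\nu$ on the domain $D_\delta(\tau_1,\tau_2)$, with respect to the metric $g$. This is classical; the only point requiring care is bookkeeping of the boundary contributions, since $D_\delta(\tau_1,\tau_2)$ has four pieces of boundary (two spacelike hypersurfaces $\{\bar{t}=\tau_1\}$, $\{\bar{t}=\tau_2\}$ and the two lateral spacelike hypersurfaces $\mathcal{H}^+_\delta$, $\bar{\mathcal{H}}^+_\delta$), and the metric with respect to which we take divergences, volume forms, and normals is now the general $g$, not the background $\mathring{g}$.

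First I would recall the standard pointwise computation for the energy momentum tensor of a scalar field. Using $\nabla_\mu \mathbb{T}^{\mu\nu}(g)[\psi] = (\Box_g\psi)\, \nabla^\nu\psi$, which follows by direct calculation from the definition $\mathbb{T}(g)[\psi] = d\psi\otimes d\psi - \tfrac12 g\, |\nabla\psi|_g^2$ and commuting covariant derivatives, one obtains
\begin{equation*}
\nabla^\mu J^X_\mu \;=\; (\Box_g\psi)(X\psi) \;+\; \tfrac12\, {}^{(X)}\pi_{\mu\nu}(g)\, \mathbb{T}^{\mu\nu}(g)[\psi],
\end{equation*}
where ${}^{(X)}\pi_{\mu\nu}(g) = (\mathcal{L}_X g)_{\mu\nu}$ is the deformation tensor of $X$ with respect to $g$, and indices are raised with $g$. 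Substituting $\Box_g\psi = F$ gives $\nabla^\mu J^X_\mu = -\text{Err}^X[\psi] + K^X(g)$ with the definitions in the statement.

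Next I would integrate this identity over $D_\delta(\tau_1,\tau_2)$ against the volume form of $g$ and apply the divergence theorem. For $\delta(M,\Lambda) \geq 0$ sufficiently small, and for $g$ sufficiently close to $\mathring{g}$ (which is what ``sufficiently regular'' can be taken to mean here, since the divergence theorem works for any Lorentzian $g$ provided the boundary hypersurfaces remain spacelike/null with appropriate causal character), the four boundary pieces of $D_\delta(\tau_1,\tau_2)$ remain spacelike with respect to $g$; this follows by a perturbation argument from the fact, already noted in Section~\ref{subsec: manifold, metric, domains}, that $\{\bar{t}=\tau\}$, $\mathcal{H}^+_\delta$ and $\bar{\mathcal{H}}^+_\delta$ are spacelike with respect to $\mathring{g}$. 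The divergence theorem then yields
\begin{equation*}
\int\int_{D_\delta(\tau_1,\tau_2)} \nabla^\mu J^X_\mu \;=\; \sum_{\Sigma} \epsilon_\Sigma \int_\Sigma J^X_\mu n^\mu,
\end{equation*}
where $\Sigma$ ranges over the four boundary hypersurfaces, $n$ is the unit normal with respect to $g$, and $\epsilon_\Sigma \in \{\pm 1\}$ is chosen according to whether $n$ is outward or inward pointing.

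Finally, rearranging the boundary sum with the appropriate sign conventions (the $\{\bar{t}=\tau_1\}$ term moves to the right since it is the past boundary, while the $\{\bar{t}=\tau_2\}$, $\mathcal{H}^+_\delta$, $\bar{\mathcal{H}}^+_\delta$ terms stay on the left as future/outward boundaries), and noting $\mathbb{T}(g)(X,n)[\psi] = J^X_\mu n^\mu$ by definition, produces exactly~\eqref{eq: lem energy identity, eq 1}. The only mild obstacle is the sign convention for the null/spacelike outward normals on the two lateral boundaries; but this is precisely the content of Section~\ref{subsec: volume forms} in the background case $g=\mathring{g}$, and for general $g$ sufficiently close to $\mathring{g}$ the same choice of orientation persists by continuity. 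No other subtlety arises.
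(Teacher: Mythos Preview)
Your proof is correct and is precisely the standard derivation: compute the divergence of the energy current $J^X_\mu=\mathbb{T}_{\mu\nu}(g)[\psi]X^\nu$, use $\nabla_\mu\mathbb{T}^{\mu\nu}=(\Box_g\psi)\nabla^\nu\psi$ and the deformation tensor identity, and then apply the divergence theorem on $D_\delta(\tau_1,\tau_2)$. The paper does not supply a proof for this proposition at all; it is simply stated as the energy identity (i.e.\ the divergence theorem applied to $J^X$), with the relevant volume forms and normals for the background case $g=\mathring{g}$ recorded in Section~\ref{subsec: volume forms}. Your write-up fills in exactly the argument the paper leaves implicit.
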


\subsection{The redshift vector field \texorpdfstring{$N$}{N} on the Schwarzschild background $\mathring{g}$}

We present the following redshift Lemmata, see the Lecture notes~\cite{DR5}.

\begin{lemma}\label{lem: redshift, lem 1}
	Let $q(M,\Lambda)>0$ be sufficiently small. Then, for all $\delta(M,\Lambda)>0$ sufficiently small there exist a timelike vector field
	\begin{equation}
	N,
	\end{equation}
	on $D_\delta(0,\infty)$, such that the following hold
	\begin{equation}\label{eq: redsfhift, lem 1, eq 1}
	\begin{aligned}
	K^N(\mathring{g})	&	\geq C(M,\Lambda) \mathbb{T}(\mathring{g})(N,N)[\psi],\:\textit{ in } \{r_+-\delta\leq r\leq r_++q\}\cup\{\bar{r}_+-q\leq r\leq\bar{r}_++\delta\} \\
	N	&	=\partial_{\bar{t}},\: \textit{ in }D_\delta(0,\infty)\setminus \{r_+-\delta\leq r\leq r_++q\}\cup\{\bar{r}_+-q\leq r\leq\bar{r}_++\delta\}.
	\end{aligned}
	\end{equation}
	where for $K^N(\mathring{g})$ see the divergence theorem in Proposition \ref{prop: energy identity, prop 1, SdS}.
\end{lemma}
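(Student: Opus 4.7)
The plan is to invoke the classical redshift vector field construction of Dafermos--Rodnianski~\cite{DR5}, adapted to Schwarzschild--de~Sitter with its two non-degenerate horizons. The key geometric input is that both surface gravities are non-zero---positive at the event horizon $\mathcal{H}^+$ and negative at the cosmological horizon $\bar{\mathcal{H}}^+$---and this non-degeneracy is precisely the condition that enables a positive definite redshift current to be constructed pointwise on each horizon.

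I would first work in a two-sided neighborhood of $\mathcal{H}^+$ and construct a vector field of the form $N_+ = f_1(r)\partial_{\bar{t}} + f_2(r)\partial_r$ on $[r_+ - \delta, r_+ + q]$. The coefficients $f_1, f_2$ are chosen so that $N_+$ is timelike throughout this neighborhood---in particular past the horizon into the region $r < r_+$ where $\partial_{\bar{t}}$ becomes spacelike---and so that the deformation tensor $^{(N_+)}\pi_{\mu\nu}(\mathring{g})$ produces a current $K^{N_+}(\mathring{g})$ whose quadratic form in $(\partial_{\bar{t}}\psi, \partial_r\psi, \Omega_\alpha\psi/r)$ is positive definite at $r = r_+$. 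The positivity is forced by the fact that the coefficient of $(\partial_r\psi)^2$ is proportional to the surface gravity $\kappa_+ > 0$. By continuity, pointwise positivity then persists on $[r_+ - \delta, r_+ + q]$ once $q$ and $\delta$ are taken small enough. An analogous construction near $\bar{\mathcal{H}}^+$ yields $N_-$ on $[\bar{r}_+ - q, \bar{r}_+ + \delta]$, using the non-vanishing of the cosmological surface gravity.

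The global vector field $N$ is then defined by setting $N = N_\pm$ in the respective horizon neighborhoods and $N = \partial_{\bar{t}}$ in the complement. Since $\partial_{\bar{t}}$ is Killing on the bulk $\{r_+ < r < \bar{r}_+\}$, we have $K^{\partial_{\bar{t}}}(\mathring{g}) = 0$ there. As the statement of the lemma only requires $K^N \geq C\,\mathbb{T}(\mathring{g})(N,N)[\psi]$ in the horizon neighborhoods and $N = \partial_{\bar{t}}$ in their complement, the match can be made smooth by choosing $f_1(r_+ + q) = 1$, $f_2(r_+ + q) = 0$ (and similarly at $r = \bar{r}_+ - q$); timelikeness of $N$ throughout $\mathcal{M}_\delta$ then follows from timelikeness on each piece.

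The main technical obstacle is the explicit positivity computation for $K^{N_+}$ at $\mathcal{H}^+$. This requires writing out $^{(N_+)}\pi_{\mu\nu}(\mathring{g})$ in coordinates, contracting with $\mathbb{T}(\mathring{g})[\psi]$, and verifying that the resulting $3 \times 3$ quadratic form is positive definite, with the coefficient of the potentially dangerous $(\partial_r\psi)^2$ term controlled from below by a positive multiple of $\kappa_+$. Equivalently, one can work in horizon-adapted null coordinates of Eddington--Finkelstein type, where the surface gravity appears as a manifest multiplicative factor in the relevant term and the positivity becomes transparent. Either route is standard and carried out in~\cite{DR5} for the Schwarzschild case; the presence of $\Lambda > 0$ affects only the numerical values of the surface gravities and does not change the structure of the argument.
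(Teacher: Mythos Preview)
Your proposal is correct and matches the paper's approach: the paper does not prove this lemma at all but simply cites the Dafermos--Rodnianski lecture notes~\cite{DR5}, so your sketch of the standard redshift construction---exploiting non-degeneracy of the two surface gravities to build $N_\pm$ near each horizon and gluing to $\partial_{\bar t}$ in the bulk---is in fact more detailed than what the paper provides.
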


The following Lemma, is a higher order manifestation of the redshift effect.

\begin{lemma}\label{lem: redshift, lem 2}
	For any $k\geq 0$ there exist positive constants 
	\begin{equation}
	\kappa_k >0,\qquad \bar{\kappa}_k>0
	\end{equation}
	such that for the equation
	\begin{equation}
	\Box_{\mathring{g}}\psi=F,
	\end{equation}
	we obtain that the following holds on $\mathcal{H}^+$
	\begin{equation}\label{eq: redshift, lem 2, eq 1}
	\Box_{\mathring{g}}N^k\psi=\kappa_k N^{k+1}\psi+\sum_{1\leq i\leq 5}\quad \sum_{0\leq \sum_i m_i\leq |m|\leq k+1,\: m_5\leq k} c_m \Omega_1^{m_1}\Omega_2^{m_2}\Omega_3^{m_3} \partial_{\bar{t}}^{m_4}N^{m_5}\psi+N^k F,
	\end{equation}
	where $c_m$ are smooth functions on $\mathcal{H}^+$, and the following holds on $\bar{\mathcal{H}}^+$
	\begin{equation}\label{eq: redshift, lem 2, eq 2}
	\Box_{\mathring{g}}N^k\psi=\bar{\kappa}_k N^{k+1}\psi+\sum_{1\leq i\leq 5}\quad \sum_{0\leq \sum_i m_i\leq |m|\leq k+1,\: m_5\leq k} \bar{c}_m \Omega_1^{m_1}\Omega_2^{m_2}\Omega_3^{m_3} \partial_{\bar{t}}^{m_4}N^{m_5}\psi+N^k F,
	\end{equation}
	where $\bar{c}_m$ are smooth functions on $\bar{\mathcal{H}}^+$.
\end{lemma}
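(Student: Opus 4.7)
The plan is to proceed by induction on $k$. The base case rests on the observation that at the event horizon $\mathring{g}^{rr}=1-\mu$ vanishes identically, while $\partial_r(1-\mu)|_{r=r_+}=2(3M-r_+)/r_+^2$ is strictly positive (which is, up to a factor, twice the surface gravity; this follows by using the defining relation $1-2M/r_+-\Lambda r_+^2/3=0$). Expanding $\Box_{\mathring{g}}\psi$ in the hyperboloidal coordinates $(\bar{t},r,\theta,\varphi)$ and restricting to $r=r_+$ annihilates the $\partial_r^2\psi$ term, so on $\mathcal{H}^+$ the operator $\Box_{\mathring{g}}$ involves only first-order $\partial_r$. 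Writing $N=\partial_{\bar{t}}+\beta(r)\partial_r$ near $\mathcal{H}^+$, with the sign of $\beta(r_+)$ dictated by Lemma~\ref{lem: redshift, lem 1} so that $N$ is timelike, the first-order $\partial_r\psi$ contribution to $\Box_{\mathring{g}}\psi|_{\mathcal{H}^+}$ is then re-expressed as $\kappa_0 N\psi$ modulo tangential derivatives, with $\kappa_0>0$.

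For the inductive step we use the identity
\begin{equation*}
\Box_{\mathring{g}}(N^{k+1}\psi) = N\bigl(\Box_{\mathring{g}}N^{k}\psi\bigr) + [\Box_{\mathring{g}},N](N^{k}\psi),
\end{equation*}
and feed the inductive hypothesis into the first term to obtain $\kappa_k N^{k+2}\psi$, an admissible sum at level $k+1$, and $N^{k+1}F$. For the commutator term, the second-order-in-$\psi$ part of $[\Box_{\mathring{g}},N]$ takes the form $2\mathring{g}^{rb}\beta'\partial_b\partial_r - \beta\,(\partial_r\mathring{g}^{ab})\partial_a\partial_b$. At $r=r_+$ the coefficient of $\partial_r^2$ collapses to $-\beta\,\partial_r\mathring{g}^{rr}|_{r_+}$ (the $\mathring{g}^{rr}$ factor in the first summand kills its $\partial_r^2$ contribution). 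Using the expansion $\partial_r^2(N^k\psi)=\beta(r_+)^{-2}N^{k+2}\psi+(\text{strictly fewer }N\text{-derivatives})$, this contributes an increment $-\beta(r_+)^{-1}\partial_r\mathring{g}^{rr}|_{r_+}$ to $\kappa_k$. With the timelike sign of $\beta(r_+)$ fixed as in the base case, the increment is positive, so $\kappa_{k+1}>0$.

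The main obstacle is the bookkeeping: one must check that every other contribution in the commutator expansion fits the structural constraint $m_5\leq k+1$ with $|m|\leq k+2$. This reduces to the observation that all other pieces of $[\Box_{\mathring{g}},N]$ either vanish at $r=r_+$ (carrying a factor of $\mathring{g}^{rr}$) or involve at least one derivative tangential to the horizon ($\partial_{\bar{t}}$ or some $\Omega_\alpha$) of $\psi$; re-expressing such derivatives in the basis $(\Omega_\alpha,\partial_{\bar{t}},N)$ strictly lowers the transversal count $m_5$. The analogous argument at the cosmological horizon $\bar{\mathcal{H}}^+$ is identical in structure, using that now $\partial_r(1-\mu)|_{\bar{r}_+}=2(3M-\bar{r}_+)/\bar{r}_+^2<0$ while the timelike condition for $N$ forces the opposite sign for $\beta(\bar{r}_+)$; the two sign flips cancel to produce $\bar{\kappa}_k>0$.
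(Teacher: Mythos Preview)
The paper does not supply its own proof of this lemma; it is stated without argument and attributed to the Dafermos--Rodnianski lecture notes~\cite{DR5}. Your inductive proof is the standard one from that reference and is correct in substance: the essential mechanism is that $\mathring g^{rr}=1-\mu$ vanishes on each horizon while $\partial_r\mathring g^{rr}$ does not, so the only source of a top-order $N^{k+2}\psi$ term in $[\Box_{\mathring g},N](N^k\psi)\big|_{\mathcal H^+}$ is the piece $-\beta\,\partial_r\mathring g^{rr}\,\partial_r^2$, and the sign of $\beta$ forced by the timelike redshift construction of Lemma~\ref{lem: redshift, lem 1} makes the resulting coefficient positive.

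Two small points deserve tightening. First, the $k=0$ case of the lemma as written is degenerate (since $\Box_{\mathring g}N^0\psi=F$ identically, the displayed identity would force $\kappa_0 N\psi$ to be absorbed by terms with $m_5\le 0$, which is impossible for $\kappa_0\neq 0$); your ``base case'' is really the structural computation that feeds into $k=1$, where the content begins. Second, the inductive hypothesis is asserted only \emph{on} $\mathcal H^+$, yet your step applies the transversal vector $N$ to it; to make this rigorous you should run the induction with smooth coefficient functions in a full neighbourhood of the horizon (which is automatic, since $\{N,\partial_{\bar t},\Omega_\alpha\}$ frame the tangent bundle there) and restrict to $r=r_+$ only at the end. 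Neither point affects the correctness of the argument.
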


\subsection{An estimate for metrics close to \texorpdfstring{$\mathring{g}$}{g}}

Note the following lemma
\begin{lemma}\label{lem: metric close to SdS} 
	Let $g(v)=\mathring{g}+h(v)$ be a Lorentzian metric that belongs in the class of Section~\ref{subsec: assumption}, where $v\in \Gamma\left(T\mathcal{M}_\delta\right)$ is considered fixed, and let $\delta(M,\Lambda)>0$ be sufficiently small.

	Then there exists an $\epsilon(\delta)>0$ such that for
	\begin{equation}
		|h_{ij}(v)|,\qquad |h^{ij}(v)|\leq \sqrt{\epsilon}
	\end{equation}
	in Cartesian coordinates,  the hypersurface $\{\bar{t}=\tau\}$, is a Cauchy hypersurface, with respect to $g(v)$, in $D_\delta(0,\infty)$. Moreover, the unit normal $n$ of $\{\bar{t}=c\}$ is timelike on $D_\delta(0,\infty)$ with respect to $g(v)$ and the redshift vector field, see Lemma~\ref{lem: redshift, lem 1}, is timelike with respect to $g(v)$, as well. Furthermore, the hypersurfaces 
	\begin{equation}
	\mathcal{H}^+_{\delta},\qquad \bar{\mathcal{H}}^+_\delta
	\end{equation}
	are spacelike hypersurfaces with respect to the metric $g(v)$ in $D_\delta(0,\infty)$.

	Finally, we obtain
	\begin{equation}\label{eq: lem: metric close to SdS, eq 1} 
	\begin{aligned}
		\int_{\{\bar{t}=\tau\}}\mathbb{T}(g(v))(n,n)[\psi] dg(v)_{\{\bar{t}=\tau\}}&	\sim \int_{\{\bar{t}=\tau\}}\mathbb{T}(\mathring{g})(\mathring{n},\mathring{n})[\psi] d\mathring{g}_{\{\bar{t}=\tau\}},\\
		\int_{\mathcal{H}^+_\delta}\mathbb{T}(g(v))(N,n)[\psi]dg(v)_{\mathcal{H}^+_\delta}	&	\sim \int_{\mathcal{H}^+_\delta}\frac{d\mathring{g}_{\mathcal{H}^+_\delta}}{\sqrt{|1-\mu|}}\left(\delta(\partial_r\psi)^2+(\partial_{\bar{t}}\psi)^2+|\slashed{\nabla}\psi|^2\right),\\
		\int_{\bar{\mathcal{H}}^+_\delta}\mathbb{T}(g(v))(N,n)[\psi]dg(v)_{\bar{\mathcal{H}}^+_\delta}	&	\sim \int_{\bar{\mathcal{H}}^+_\delta}\frac{d\mathring{g}_{\bar{\mathcal{H}}^+_\delta}}{\sqrt{|1-\mu|}}\left(\delta(\partial_r\psi)^2+(\partial_{\bar{t}}\psi)^2+|\slashed{\nabla}\psi|^2\right),
	\end{aligned}
	\end{equation}
	where the constants only depend on the black hole parameters, and $dg(v)_{\{\bar{t}=\tau\}},dg(v)_{\mathcal{H}^+_\delta},dg(v)_{\bar{\mathcal{H}}^+_\delta}$ are the induced volume form of the spacetime volume form of the metric $g(v)$ on the hypersurfaces $\{\bar{t}=\tau\},\mathcal{H}^+_\delta,\bar{\mathcal{H}}^+_\delta$, respectively. For the energy momentum tensor $\mathbb{T}$ see Section \ref{subsec: energy momentum tensor}, for the redshift vector field $N$ see Lemma~\ref{lem: redshift, lem 1}. 
\end{lemma}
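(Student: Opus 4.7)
The strategy is to establish, in order, that each causal condition persists under small perturbations of the metric, that the relevant vector fields remain timelike, and that the flux equivalences follow by expansion with controlled error. The underpinning observation is that every assertion in the lemma is an open condition on the space of metrics, and for $\mathring{g}$ each condition holds strictly on the compact region $D_\delta(0,\infty)$, so a uniform perturbation of size $\sqrt{\epsilon}$ in Cartesian coordinates preserves it provided $\epsilon=\epsilon(\delta)$ is small enough.

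For the causal statements I would first note that $\mathring{g}^{ij}(d\bar{t})_i(d\bar{t})_j = \mathring{g}^{\bar{t}\bar{t}} = -(1-\xi^2)/(1-\mu)$, see Remark \ref{rem: sec: preliminairies, rem 0}, which is strictly negative on $D_\delta(0,\infty)$ for $\delta(M,\Lambda)$ sufficiently small. Writing $g(v)^{ij}(d\bar{t})_i(d\bar{t})_j = \mathring{g}^{\bar{t}\bar{t}} + h^{ij}(v)(d\bar{t})_i(d\bar{t})_j$ and invoking $|h^{ij}(v)|\leq\sqrt{\epsilon}$, this quantity stays negative for small $\epsilon$, so $\{\bar{t}=\tau\}$ is spacelike with respect to $g(v)$ and the unit normal $n$ is timelike. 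The same argument treats $\mathcal{H}^+_\delta,\bar{\mathcal{H}}^+_\delta$ via $g(v)^{ij}(dr)_i(dr)_j$, using that $\mathring{g}^{rr}=1-\mu$ has a definite nonzero sign off the horizons. For the redshift vector field, $\mathring{g}(N,N)$ is strictly negative by Lemma \ref{lem: redshift, lem 1}, and the same perturbative argument yields $g(v)(N,N)<0$. The Cauchy property of $\{\bar{t}=\tau\}$ then follows because the causal cones of $g(v)$ are within $\mathcal{O}(\sqrt{\epsilon})$ of those of $\mathring{g}$, so every inextendible causal curve for $g(v)$ still crosses the hypersurface.

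To prove the first equivalence in \eqref{eq: lem: metric close to SdS, eq 1}, I would expand the energy momentum tensor as $\mathbb{T}(g(v))[\psi]=\mathbb{T}(\mathring{g})[\psi]+E[\psi]$, where the error bilinear form in $\partial\psi$ has coefficients of order $\mathcal{O}(\sqrt{\epsilon})$. The unit normal and volume form satisfy $n=\mathring{n}+\mathcal{O}(\sqrt{\epsilon})$ and $dg(v)_{\{\bar{t}=\tau\}}=(1+\mathcal{O}(\sqrt{\epsilon}))\,d\mathring{g}_{\{\bar{t}=\tau\}}$. Since both $n$ and $\mathring{n}$ are uniformly timelike and future-directed, $\mathbb{T}(\mathring{g})(\mathring{n},\mathring{n})$ is a positive definite quadratic form in $\partial\psi$ bounded below by a nondegenerate coercive expression, hence the $\mathcal{O}(\sqrt{\epsilon})$ corrections can be absorbed into the $\sim$ constants.

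The main obstacle will be the horizon flux equivalences, because the normal $n_{\mathcal{H}^+_\delta}$ blows up like $1/\sqrt{|1-\mu|}$ while the volume form carries a compensating factor $\sqrt{|1-\mu|}$, see Section \ref{subsec: volume forms}. My plan is to use the explicit decomposition $\mathring{n}_{\mathcal{H}^+_\delta}=c_1(r)\partial_{\bar{t}}+c_2(r)\partial_r$ from \eqref{eq: subsec: volume forms, eq 1} and compute $\mathbb{T}(\mathring{g})(N,\mathring{n}_{\mathcal{H}^+_\delta})$ via the bounded uniformly timelike components of $N$ near $\mathcal{H}^+$. After multiplying by $d\mathring{g}_{\mathcal{H}^+_\delta}=r^2\sqrt{|1-\mu|}\sin\theta\,d\bar{t}\,d\sigma_{\mathbb{S}^2}$, the integrand reduces to a coercive quadratic form in $(\partial_{\bar{t}}\psi,\sqrt{|1-\mu|}\partial_r\psi,\slashed{\nabla}\psi)$, and on $\mathcal{H}^+_\delta$ the degeneracy $|1-\mu|\sim\delta$ produces the factor $\delta$ multiplying $(\partial_r\psi)^2$. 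Passing from $\mathring{g}$ to $g(v)$ changes the constants only by $\mathcal{O}(\sqrt{\epsilon})$, absorbable into the $\sim$ relation for sufficiently small $\epsilon(\delta)$; the analysis at $\bar{\mathcal{H}}^+_\delta$ is identical. The subtle step to verify carefully is the sign and coercivity of the resulting quadratic form throughout the near-horizon collar, which ultimately rests on the strict timelikeness of $N$ built into Lemma \ref{lem: redshift, lem 1}.
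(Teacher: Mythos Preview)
Your proposal is correct and follows essentially the same approach as the paper's proof, which is very brief: the paper simply records the explicit form of the energy--momentum tensor $\mathbb{T}(g(v))(X,Y)[\psi]=X\psi\,Y\psi-\tfrac{1}{2}g(v)(X,Y)|\nabla\psi|_{g(v)}^2$, asserts the causal statements by smallness of $|h_{ij}|,|h^{ij}|$, and obtains the flux equivalences by pointwise estimates, writing the horizon perturbation error as $\sqrt{\epsilon(\delta)}\big((\partial_r\psi)^2+(\partial_{\bar t}\psi)^2+|\slashed{\nabla}\psi|^2\big)$ and then taking $\epsilon(\delta)\ll\delta$ to absorb the $\sqrt{\epsilon}(\partial_r\psi)^2$ term into $\delta(\partial_r\psi)^2$. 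Your writeup supplies considerably more detail for each step but the strategy is identical; the one point the paper makes more explicit than you do is precisely this quantitative requirement $\sqrt{\epsilon}\ll\delta$ at the horizon hypersurfaces, which is the reason $\epsilon$ must depend on $\delta$ in the statement.
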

\begin{proof}
	The proof of this Lemma is direct by noting that the energy momentum tensor of the metric $g(v)$ is
	\begin{equation}
		\mathbb{T}(g(v))(X,Y)[\psi]=X\psi Y\psi-\frac{1}{2}g(v)(X,Y)|\nabla \psi|_{g(v)}^2,
	\end{equation}
for any two smooth vector fields $X,Y$. 

Specifically, by taking
\begin{equation}
	|h_{ij}(v)|,\quad |h^{ij}(v)|\leq \sqrt{\epsilon}
\end{equation}
sufficiently small, we conclude the causal behaviour of the hypersurfaces mentioned, and by pointwise estimates on the integrands we conclude 
\begin{equation}
	\begin{aligned}
		\int_{\{\bar{t}=\tau\}}\mathbb{T}(g(v))(n,n)[\psi] dg(v)_{\{\bar{t}=\tau\}}&	\sim \int_{\{\bar{t}=\tau\}}\mathbb{T}(\mathring{g})(\mathring{n},\mathring{n})[\psi] d\mathring{g}_{\{\bar{t}=\tau\}},\\
		\int_{\mathcal{H}^+_\delta}\mathbb{T}(g(v))(N,n)[\psi]dg(v)_{\mathcal{H}^+_\delta}	&	\sim \int_{\mathcal{H}^+_\delta}\frac{d\mathring{g}_{\mathcal{H}^+_\delta}}{\sqrt{|1-\mu|}}\left(\delta(\partial_r\psi)^2+(\partial_{\bar{t}}\psi)^2+|\slashed{\nabla}\psi|^2+\sqrt{\epsilon(\delta)}\left((\partial_r\psi)^2+(\partial_{\bar{t}}\psi)^2+|\slashed{\nabla}\psi|^2\right)\right),\\
		\int_{\bar{\mathcal{H}}^+_\delta}\mathbb{T}(g(v))(N,n)[\psi]dg(v)_{\bar{\mathcal{H}}^+_\delta}	&	\sim \int_{\bar{\mathcal{H}}^+_\delta}\frac{d\mathring{g}_{\bar{\mathcal{H}}^+_\delta}}{\sqrt{|1-\mu|}}\left(\delta(\partial_r\psi)^2+(\partial_{\bar{t}}\psi)^2+|\slashed{\nabla}\psi|^2+\sqrt{\epsilon(\delta)}\left((\partial_r\psi)^2+(\partial_{\bar{t}}\psi)^2+|\slashed{\nabla}\psi|^2\right)\right).
\end{aligned}
\end{equation}
Now, for $\epsilon(\delta)\ll \delta$ we conclude~\eqref{eq: lem: metric close to SdS, eq 1} .
\end{proof}

\subsection{Elliptic estimates and estimates on \texorpdfstring{$\mathcal{H}^+_\delta,\bar{\mathcal{H}}^+_\delta$}{g} for the quasilinear wave equation}\label{subsec: elliptic estimates}

Note the following elliptic estimate 
\begin{lemma}\label{lem: subsec: elliptic estimates, lem 1}
	Let $k\geq 4$ and let $\psi$ satisfy the quasilinear wave equation~\eqref{eq: quasilinear} in $D_\delta(\tau_1,\tau_2)$ for $\tau_1\leq \tau_2$ and for a sufficiently small $\delta(M,\Lambda)>0$, where for the tensors $a,h$ see the Sections \ref{subsec: semilinear derivatives}, \ref{subsec: assumption} respectively. Then, for any $r_+<r_0<r_1<\bar{r}_+$, there exist constants
	\begin{equation}
	C(r_0,r_1,k,M,\Lambda,A_{[k+1]},B_{[k+1]}),\qquad C(k,M,\Lambda,A_{[k+1]},B_{[k+1]})>0
	\end{equation}
	independent of $\delta$, where for $A_{[k+1]},B_{[k+1]}$ see Sections \ref{subsec: semilinear derivatives}, \ref{subsec: assumption} respectively, and there exists an $\epsilon=\epsilon(\delta,k,M,\Lambda,A_{[k+1]},B_{[k+1]})$ sufficiently small such that the following holds. 
	
	If  
	\begin{equation}\label{eq: lem: subsec: elliptic estimates, lem 1, eq 1}
		\sup_{D_\delta(\tau_1,\tau_2)}\sum_{1\leq j\leq k-1}\sum_{\partial\in\{\partial_{\bar{t}},\partial_r,\Omega_1,\Omega_2,\Omega_3\}}|\partial^j\psi|\leq\sqrt{\epsilon}
	\end{equation}
	holds, we obtain that
\begin{equation}\label{eq: lem: subsec: elliptic estimates, lem 1, eq 2}
	\begin{aligned}
		&	\int_{\{\bar{t}=\tau^\prime\}}\sum_{1\leq i_1+i_2+i_3\leq j}\sum_\alpha\left(\partial_{\bar{t}}^{i_1}\partial_r^{i_2}\Omega_\alpha^{i_3}\psi\right)^2 d\mathring{g}_{\{\bar{t}=\tau^\prime\}}\leq C \int_{\{\bar{t}=\tau\}} \sum_{1\leq i\leq j-1} \mathbb{T}(\mathring{g})(N,n)[N^i\psi] d\mathring{g}_{\{\bar{t}=\tau\}},\\
		&	\int_{\{\bar{t}=\tau^\prime\}\cap [r_0,r_1]} \sum_{1\leq i_1+i_2+i_3\leq j}\sum_\alpha\left(\partial_{\bar{t}}^{i_1}\partial_r^{i_2}\Omega_\alpha^{i_3}\psi\right)^2 d\mathring{g}_{\{\bar{t}=\tau\}}\leq C(r_0,r_1) \int_{\{\bar{t}=\tau^\prime\}\cap[r_0,r_1]}\sum_{1\leq i_1+i_2\leq j}\sum_{\alpha} \left(\partial_{\bar{t}}^{i_1}\Omega_\alpha^{i_2}\psi\right)^2 d\mathring{g}_{\{\bar{t}=\tau\}},\\
		&  \int_{\mathcal{H}^+_\delta}\frac{d\mathring{g}_{\mathcal{H}_\delta}}{\sqrt{|1-\mu|}}\left( \delta \sum_{1\leq i_1+i_2+i_3\leq j,i_2\geq 1}\sum_\alpha\left(\partial_{\bar{t}}^{i_1}\partial_r^{i_2}\Omega_\alpha^{i_3}\psi\right)^2 +\sum_{1\leq i_1+i_2\leq j}\sum_\alpha\left(\partial_{\bar{t}}^{i_1}\Omega_\alpha^{i_2}\psi\right)^2\right)	\leq C \int_{\mathcal{H}^+_\delta}\sum_{1\leq i\leq j-1}\mathbb{T}(\mathring{g})(N,n)[N^i\psi] d\mathring{g}_{\mathcal{H}_\delta},\\
		&  \int_{\bar{\mathcal{H}}^+_\delta}\frac{d\mathring{g}_{\mathcal{H}_\delta}}{\sqrt{|1-\mu|}}\left( \delta \sum_{1\leq i_1+i_2+i_3\leq j,i_2\geq 1}\sum_\alpha\left(\partial_{\bar{t}}^{i_1}\partial_r^{i_2}\Omega_\alpha^{i_3}\psi\right)^2 +\sum_{1\leq i_1+i_2\leq j}\sum_\alpha\left(\partial_{\bar{t}}^{i_1}\Omega_\alpha^{i_2}\psi\right)^2\right)	\leq C \int_{\bar{\mathcal{H}}^+_\delta}\sum_{1\leq i\leq j-1}\mathbb{T}(\mathring{g})(N,n)[N^i\psi] d\mathring{g}_{\mathcal{H}_\delta},
	\end{aligned}
\end{equation}
for $1\leq j\leq k+2$.
\end{lemma}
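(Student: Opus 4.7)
The overarching strategy is that the quasilinear wave equation $\Box_{g(\nabla\psi)}\psi=\partial\psi\cdot\partial\psi$ provides an algebraic expression for any second-order derivative of $\psi$ transverse to a spacelike hypersurface, modulo tangential derivatives and a small quasilinear error. As in \eqref{eq: energy estimate for quasilinear, 0}, I write the equation as $\Box_{\mathring{g}}\psi=F_1[\psi]+F_2[\psi]$ with $F_1=(\Box_{\mathring{g}}-\Box_{g(\nabla\psi)})\psi$; the correction has principal part $h^{ab}(\nabla\psi)\partial_a\partial_b\psi$ with coefficients of size $\mathcal{O}(\sqrt{\epsilon})$ by \eqref{eq: lem: subsec: elliptic estimates, lem 1, eq 1}, so the $\partial^2\psi$ contributions it carries will always be absorbable on the left-hand side at the end of each estimate.

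The second (interior) inequality is the simplest. On any compact sub-interval $[r_0,r_1]\subset(r_+,\bar{r}_+)$, the coefficient $\mathring{g}^{rr}=1-\mu$ is bounded below by some $c(r_0,r_1)>0$, so one can solve $\Box_{\mathring{g}}\psi=F_1+F_2$ algebraically for $\partial_r^2\psi$ in terms of $\partial_{\bar{t}}^2\psi$, $\partial_{\bar{t}}\partial_r\psi$, angular derivatives, and quasilinear errors. Iterating this identity on the number of $\partial_r$'s, commuting throughout with $\partial_{\bar{t}}$ and $\Omega_\alpha$, expresses every $\partial_{\bar{t}}^{i_1}\partial_r^{i_2}\Omega_\alpha^{i_3}\psi$ with $i_2\geq 1$ in terms of at most $j$ purely tangential derivatives on $\{\bar{t}=\tau\}\cap[r_0,r_1]$, which yields the second inequality.

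For the first (global) inequality, the interior ellipticity argument above suffices in the bulk $\{r_++q/2\leq r\leq \bar{r}_+-q/2\}$, and there $N=\partial_{\bar{t}}$, so $\mathbb{T}(\mathring{g})(N,n)[\psi]$ directly controls $(\partial_{\bar{t}}\psi)^2$ and $|\slashed{\nabla}\psi|^2$. In the two collar regions where $1-\mu$ degenerates, one switches to the redshift vector field $N$, which is transverse to both horizons with nondegenerate $\partial_r$-component; by Lemma~\ref{lem: redshift, lem 2}, one has $\Box_{\mathring{g}}(N^k\psi)=\kappa_k N^{k+1}\psi+(\text{tangential terms})+N^k F$ with $\kappa_k>0$, and this relation is solved algebraically for $N^{k+1}\psi$, supplying the missing transverse derivatives by induction on $j$. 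The third and fourth (horizon) inequalities follow from the same redshift-commutation argument applied now on the spacelike surfaces $\mathcal{H}^+_\delta,\bar{\mathcal{H}}^+_\delta$; the $\delta$-weight on $\partial_r$-derivatives reflects \eqref{eq: subsec: volume forms, eq 1} and the volume-form structure $d\mathring{g}_{\mathcal{H}_\delta}/\sqrt{|1-\mu|}$ that arises in the boundary version of Lemma~\ref{lem: metric close to SdS}.

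The main obstacle is the simultaneous control of the $\delta$-degeneration at the horizons and the $h$-perturbation from the quasilinear structure. One must first fix $\delta(M,\Lambda)$ small enough for Lemmata~\ref{lem: redshift, lem 1} and~\ref{lem: redshift, lem 2} to apply uniformly in $\mathcal{M}_\delta$, and then choose the smallness parameter $\epsilon(\delta,k,\dots)$ so that at every step of the induction on $j$ the $h^{ab}(\nabla\psi)\partial_a\partial_b\psi$ contribution to $F_1$ is dominated by the leading coefficient of the algebraic reduction (either $\mathring{g}^{rr}$ in the bulk or $\kappa_k$ in the collars) and can be absorbed by the left-hand side.
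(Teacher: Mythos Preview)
Your approach is correct and essentially matches the paper's (very terse) proof: both rewrite the equation as $\Box_{\mathring{g}}\psi=F_1+F_2$ with the quasilinear contribution $F_1$ having $\mathcal{O}(\sqrt{\epsilon})$ principal coefficients, apply standard elliptic estimates on the slices together with the redshift structure near the horizons, and absorb the $h$-perturbation via the smallness \eqref{eq: lem: subsec: elliptic estimates, lem 1, eq 1}. One minor caution: Lemma~\ref{lem: redshift, lem 2} is stated only on $\mathcal{H}^+,\bar{\mathcal{H}}^+$, so when you solve for $N^{k+1}\psi$ in the collars you are implicitly using its (routine) extension to a neighborhood; the paper glosses over this in the same way.
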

\begin{proof}
The proof of this Lemma comes from the estimates of Lemma~\ref{lem: metric close to SdS}.

\textit{For the first inequality of~\eqref{eq: lem: subsec: elliptic estimates, lem 1, eq 2}}, for $j=1$, we use the smallness~\eqref{eq: lem: subsec: elliptic estimates, lem 1, eq 1}, and Lemma~\ref{lem: metric close to SdS}. We prove the estimate for all orders $j$, by first noting that the energy momentum tensor of $g(\nabla\psi)$ is 
\begin{equation}\label{eq: proof: lem: subsec: elliptic estimates, lem 1, eq 1}
	\mathbb{T}\left(g(\nabla\psi)\right)(X,n)[\tilde{X}^i\psi]=\mathbb{T}\left(\mathring{g}\right)(X,n)[\tilde{X}^i\psi]-\frac{1}{2}h(X,n,\nabla\psi)\left|\nabla\tilde{X}^i\psi\right|^2_{g(\nabla\psi)},
\end{equation}
for any two smooth vector field $X,\tilde{X}$. Therefore, for $\epsilon(\delta)>0$ sufficiently small, we use elliptic estimates on the difference $\Box_{\mathring{g}}\psi=\left(\Box_{\mathring{g}}-\Box_{g(\nabla\psi)}\right)\psi +\partial\psi\cdot\partial\psi$ and conclude. We also used that $k\geq 4$.

\textit{For the second inequality of~\eqref{eq: lem: subsec: elliptic estimates, lem 1, eq 2}}, we obtain the estimate by arguing as above and by taking $\epsilon(\delta)>0$ small. We also used that $k\geq 4$.

\textit{For the the last two inequalities of~\eqref{eq: lem: subsec: elliptic estimates, lem 1, eq 2}}, on the hypersurfaces $\mathcal{H}^+_\delta,\bar{\mathcal{H}}^+_\delta$, to get the estimate for $j=1$, we use the smallness~\eqref{eq: lem: subsec: elliptic estimates, lem 1, eq 1}, and Lemma~\ref{lem: metric close to SdS}. We prove the analogous estimate for all orders $j$, for a sufficiently small $\epsilon(\delta)>0$ by noting the form of the energy momentum tensor~\eqref{eq: proof: lem: subsec: elliptic estimates, lem 1, eq 1}, and by using elliptic estimates on the difference $\Box_{\mathring{g}}\psi=\left(\Box_{\mathring{g}}-\Box_{g(\nabla\psi)}\right)\psi +\partial\psi\cdot\partial\psi$. We also used that $k\geq 4$.
\end{proof}

Finally, note the following estimate
\begin{lemma}\label{lem: subsec: elliptic estimates, lem 2}
Let $k\geq 4$. Let $\psi$ satisfy the quasilinear wave equation~\eqref{eq: quasilinear} in $D_\delta(\tau_1,\tau_2)$ for a sufficiently small $\delta(k,M,\Lambda,A_{[k+1]},B_{[k+1]})>0$, where for the tensors $a,h$ see the Sections \ref{subsec: semilinear derivatives}, \ref{subsec: assumption} respectively

Then, there exists a constant
\begin{equation}
C(k,M,\Lambda,A_{[k+1]},B_{[k+1]})>0
\end{equation}
independent of $\delta$, and an $\epsilon=\epsilon(\delta,k,M,\Lambda,A_{[k+1]},B_{[k+1]})>0$ sufficiently small such that the following holds. If  
\begin{equation}\label{eq: lem: subsec: elliptic estimates, lem 1, eq 1.1}
	\sup_{D_\delta(\tau_1,\tau_2)}\sum_{1\leq j\leq k-1}\sum_{\partial\in\{\partial_{\bar{t}},\partial_r,\Omega_1,\Omega_2,\Omega_3\}}|\partial^j\psi|\leq\sqrt{\epsilon},
\end{equation}
then we obtain that
\begin{equation}
\begin{aligned}
&	\Big|\int_{\mathcal{H}^+_\delta}\sum_{1\leq i_1+i_2\leq j-1}\sum_{\alpha} \mathbb{T}(\mathring{g})(\partial_{\bar{t}},n)[\partial_{\bar{t}}^{i_1}\Omega_\alpha^{i_2}\psi]d\mathring{g}_{\mathcal{H}_\delta}-\int_{\mathcal{H}^+_\delta}\sum_{0\leq i_1+i_2\leq j-1}\sum_\alpha\left(\partial_{\bar{t}}\partial_{\bar{t}}^{i_1}\Omega_\alpha^{i_2}\psi\right)^2\frac{1}{\sqrt{|1-\mu|}}d\mathring{g}_{\mathcal{H}_\delta}\Big|	\\ 
&	\quad\quad\quad\leq C \delta\int_{\mathcal{H}^+_\delta} \sum_{1\leq i\leq j-1} \mathbb{T}(\mathring{g})(N,n)[N^i\psi] d\mathring{g}_{\mathcal{H}_\delta},\\
&\Big|	\int_{\bar{\mathcal{H}}^+_\delta}\sum_{1\leq i_1+i_2\leq j-1}\sum_{\alpha} \mathbb{T}(\mathring{g})(\partial_{\bar{t}},n)[\partial_{\bar{t}}^{i_1}\Omega_\alpha^{i_2}\psi]d\mathring{g}_{\mathcal{H}_\delta}-\int_{\bar{\mathcal{H}}^+_\delta}\sum_{0\leq i_1+i_2\leq j-1}\sum_\alpha\left(\partial_{\bar{t}}\partial_{\bar{t}}^{i_1}\Omega_\alpha^{i_2}\psi\right)^2\frac{1}{\sqrt{|1-\mu|}}d\mathring{g}_{\mathcal{H}_\delta}\Big|	\\
&	\quad\quad\quad\leq C \delta\int_{\bar{\mathcal{H}}^+_\delta} \sum_{1\leq i\leq j-1} \mathbb{T}(\mathring{g})(N,n)[N^i\psi] d\mathring{g}_{\mathcal{H}_\delta},\\
\end{aligned}
\end{equation}
for $1\leq j\leq k+2$. 
\end{lemma}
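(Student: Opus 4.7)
I plan to expand the integrand $\mathbb{T}(\mathring{g})(\partial_{\bar t},n)[\phi]$ pointwise on $\mathcal{H}^+_\delta$ for $\phi=\partial_{\bar t}^{i_1}\Omega_\alpha^{i_2}\psi$, using the explicit formulas for $n_{\mathcal{H}^+_\delta}=c_1\partial_{\bar t}+c_2\partial_r$ with $c_1\sqrt{|1-\mu|}=1+O(1-\mu)$ and $c_2=-\sqrt{|1-\mu|}$ from Section~\ref{subsec: volume forms}, the identity $\mathring{g}(\partial_{\bar t},n)=-c_1(1-\mu)-c_2\xi$, and the inverse metric components in Remark~\ref{rem: sec: preliminairies, rem 0}. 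The crucial structural input is the regularity of the hyperboloidal coordinates at $\mathcal{H}^+$: boundedness of $\mathring{g}^{\bar t\bar t}=-(1-\xi^2)/(1-\mu)$ forces $\xi(r_+)^2=1$, and a direct calculation shows $r_+<3M$ in the admissible parameter range $9M^2\Lambda<1$, so $\xi(r_+)=-1$ and $1-\xi^2=O(1-\mu)$ near $r_+$.

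These identities produce three key cancellations. First, the scalar $\mathring{g}(\partial_{\bar t},n)=-c_1(1-\mu)-c_2\xi$ vanishes to order $|1-\mu|^{3/2}$. Second, the coefficient of $(\partial_r\phi)^2$ in $\mathbb{T}(\mathring{g})(\partial_{\bar t},n)[\phi]$, namely $-\tfrac12\mathring{g}(\partial_{\bar t},n)\mathring{g}^{rr}$, vanishes to order $|1-\mu|^{5/2}$. Third, the coefficient of $(\partial_{\bar t}\phi)^2$ equals $c_1\tfrac{1+\xi^2}{2}-\tfrac{c_2\xi(1-\xi^2)}{2(1-\mu)}=\tfrac{1}{\sqrt{|1-\mu|}}+O(\sqrt{|1-\mu|})$. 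Multiplying by $d\mathring{g}_{\mathcal{H}_\delta}=r^2\sqrt{|1-\mu|}\sin\theta\,d\bar t\,d\sigma_{\mathbb{S}^2}$ and using $|1-\mu|\leq C\delta$ on $\mathcal{H}^+_\delta$, the pointwise difference $\mathbb{T}(\mathring{g})(\partial_{\bar t},n)[\phi]\,d\mathring{g}_{\mathcal{H}_\delta}-\tfrac{(\partial_{\bar t}\phi)^2}{\sqrt{|1-\mu|}}\,d\mathring{g}_{\mathcal{H}_\delta}$ decomposes as a sum of error terms $O(\delta)(\partial_{\bar t}\phi)^2+O(\delta)(\partial_{\bar t}\phi)(\partial_r\phi)+O(\delta^2)|\slashed{\nabla}\phi|^2+O(\delta^3)(\partial_r\phi)^2$, each times $r^2\sin\theta\,d\bar t\,d\sigma_{\mathbb{S}^2}$. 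The diagonal $(\partial_{\bar t}\phi)^2$, $(\partial_r\phi)^2$, and $|\slashed{\nabla}\phi|^2$ contributions are bounded directly by $C\delta\int_{\mathcal{H}^+_\delta}\sum_i\mathbb{T}(\mathring{g})(N,n)[N^i\phi]\,d\mathring{g}_{\mathcal{H}_\delta}$ via Lemma~\ref{lem: metric close to SdS} and Lemma~\ref{lem: subsec: elliptic estimates, lem 1}, which together identify the right-hand side up to constants with $\int[\delta(\partial_r\cdot)^2+(\partial_{\bar t}\cdot)^2+|\slashed{\nabla}\cdot|^2]r^2\sin\theta\,d\bar t\,d\sigma_{\mathbb{S}^2}$.

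The main obstacle is the cross term $O(\delta)\int(\partial_{\bar t}\phi)(\partial_r\phi)r^2\sin\theta\,d\bar t\,d\sigma_{\mathbb{S}^2}$: a plain Cauchy--Schwarz / AM-GM fails because $\mathbb{T}(N,n)$ only controls $(\partial_r\phi)^2$ with the extra weight $\delta$, whereas the error carries only weight $\delta$ and not $\delta^2$. I plan to integrate by parts in $\bar t$ on $\mathcal{H}^+_\delta$, writing $(\partial_{\bar t}\phi)(\partial_r\phi)=\partial_{\bar t}(\phi\,\partial_r\phi)-\phi\,\partial_{\bar t}\partial_r\phi$. The boundary terms on the $2$-spheres $\{r=r_+-\delta,\bar t=\tau_i\}$ pair with the spacelike slab data already controlled inside the Cauchy stability argument of Proposition~\ref{prop: well posedness quasilinear}, while the bulk integral $\int\phi\,\partial_{\bar t}\partial_r\phi\,r^2\sin\theta\,d\bar t\,d\sigma_{\mathbb{S}^2}$ is bounded by Cauchy--Schwarz, with the factor $\phi$ (which carries at least one derivative on $\psi$ since $i_1+i_2\geq 1$) absorbed at one order lower, and the factor $\partial_{\bar t}\partial_r\phi$ handled by the elliptic estimate of Lemma~\ref{lem: subsec: elliptic estimates, lem 1} applied one order higher; this is available within the $H^{k+2}$ regularity provided by Proposition~\ref{prop: well posedness quasilinear}, and at the very top order one additionally uses the quasilinear equation~\eqref{eq: quasilinear} (noting that $\mathring{g}^{\bar t\bar t}$ is bounded away from zero on $\mathcal{H}^+_\delta$) to trade $\partial_{\bar t}^2$ derivatives for tangential ones.

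The analogous estimate on $\bar{\mathcal{H}}^+_\delta$ follows by an identical computation, with $\xi(\bar r_+)=+1$ in place of $\xi(r_+)=-1$ and with the sign of $1-\mu$ reversed (now $1-\mu<0$ for $r>\bar r_+$); the three cancellations occur to precisely the same orders in $|1-\mu|$, so the final bound takes the same form.
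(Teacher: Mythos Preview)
Your pointwise expansion of $\mathbb{T}(\mathring{g})(\partial_{\bar t},n)[\phi]$ on $\mathcal{H}^+_\delta$ is exactly the paper's approach: the paper writes out the same quadratic form (its equation immediately following~\eqref{eq: proof prop: well posedness quasilinear, eq 7.1}) and records the orders of each coefficient in $|1-\mu|$ (its~\eqref{eq: proof prop: well posedness quasilinear, eq 7.2}). Your orders are in fact sharper than the paper's crude $\sim\delta^{1/2}$ bounds, because you track the extra cancellation coming from $\xi(r_+)=-1$; but both are sufficient.

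Where you diverge from the paper is in the handling of the cross term. You correctly observe that a direct Young inequality on $O(\delta)\,(\partial_{\bar t}\phi)(\partial_r\phi)\,r^2\sin\theta\,d\bar t\,d\sigma$ cannot produce the factor $\delta$ stated in the lemma, and you propose an integration by parts in $\bar t$ to rescue it. The paper does \emph{not} integrate by parts: it simply applies Young with the weight $\delta^{1/2}$,
\[
\delta^{1/2}\,|\partial_{\bar t}\phi|\,|\partial_r\phi|\;\leq\;\tfrac12(\partial_{\bar t}\phi)^2+\tfrac12\,\delta\,(\partial_r\phi)^2,
\]
and this, together with the identification in Lemma~\ref{lem: metric close to SdS} and Lemma~\ref{lem: subsec: elliptic estimates, lem 1}, yields a bound with prefactor $C\sqrt{\delta}$, not $C\delta$. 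Indeed the paper's own intermediate display in the proof of this lemma carries the constant $C\sqrt{\delta}$, and the sole application of the lemma (the passage leading to~\eqref{eq: proof prop: well posedness quasilinear, eq 7.3} in the proof of Proposition~\ref{prop: well posedness quasilinear}) also carries $C\sqrt{\delta}$. So the $\delta$ in the statement is evidently a typo for $\sqrt{\delta}$, and with that correction the cross term requires nothing beyond Young's inequality.

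Your integration-by-parts workaround is therefore unnecessary, and it is also problematic as written: it produces boundary terms on the $2$-spheres $\{r=r_+-\delta,\ \bar t=\tau_i\}$ which do not appear on the right-hand side of the lemma and would have to be controlled by quantities external to the statement. For the lemma actually needed (with $\sqrt{\delta}$), drop the integration by parts and conclude directly from the pointwise bounds you already derived.
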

\begin{proof}
We note that for for the metric $\mathring{g}$, we obtain 
\begin{equation}\label{eq: proof prop: well posedness quasilinear, eq 7.1}
\begin{aligned}
\mathbb{T}(\mathring{g})(\partial_{\bar{t}},\mathring{n}_{\mathcal{H}_\delta})[\partial_{\bar{t}}^j\psi]	= \Bigg(&  (\partial_{\bar{t}}^{j+1}\psi)^2\left(c_1(r)-\frac{1}{2}\mathring{g}_{\bar{t}\bar{t}}\mathring{g}^{\bar{t}\bar{t}}-\frac{1}{2}c_2(r)\mathring{g}_{\bar{t}r}\mathring{g}^{\bar{t}\bar{t}}\right)  \\
&	+\partial_{\bar{t}}\partial_{\bar{t}}^j\psi\partial_r\partial_{\bar{t}}^j\psi \left(c_2(r)-c_1(r)\mathring{g}_{\bar{t}\bar{t}}\mathring{g}^{\bar{t}r}-c_2(r)\mathring{g}_{\bar{t}r}\mathring{g}^{\bar{t}r}\right)\\
&	+\left|\slashed{\nabla}\partial_{\bar{t}}^j\psi\right|^2\left(-\frac{1}{2}c_1(r)g_{\bar{t}\bar{t}}-\frac{1}{2}c_2(r)g_{\bar{t}r}\right)\\
&	+(\partial_r\partial_{\bar{t}}^j\psi)^2\left(-\frac{1}{2}c_1(r)\mathring{g}_{\bar{t}\bar{t}}\mathring{g}^{rr}-\frac{1}{2}c_2(r)\mathring{g}_{\bar{t}r}\mathring{g}^{rr}\right)\Bigg),
\end{aligned}
\end{equation}
where for $\mathring{n}_{\mathcal{H}_\delta},c_1(r),c_2(r)$ see Section \ref{subsec: volume forms}, and specifically $c_1(r)=\frac{1}{\sqrt{|1-\mu|}}+\mathcal{O}(\sqrt{|1-\mu|})$ as $r\rightarrow r_+$ and $c_2(r)=-\sqrt{|1-\mu|}$. At $r=r_+-\delta$ the following hold
\begin{equation}\label{eq: proof prop: well posedness quasilinear, eq 7.2}
\begin{aligned}
\left|-\frac{1}{2}\mathring{g}_{\bar{t}\bar{t}}\mathring{g}^{\bar{t}\bar{t}}-\frac{1}{2}c_2(r)\mathring{g}_{\bar{t}r}\mathring{g}^{\bar{t}\bar{t}}\right|	&	\sim \sqrt{|1-\mu|}\sim\delta^{1/2},\\
\left|c_2(r)-c_1(r)\mathring{g}_{\bar{t}\bar{t}}\mathring{g}^{\bar{t}r}-c_2(r)\mathring{g}_{\bar{t}r}\mathring{g}^{\bar{t}r}\right|&\sim \sqrt{|1-\mu|}\sim\delta^{1/2}, \\ 
\left|-\frac{1}{2}c_1(r)g_{\bar{t}\bar{t}}-\frac{1}{2}c_2(r)g_{\bar{t}r}\right|	&	\sim \sqrt{|1-\mu|}\sim\delta^{1/2},\\
\left|-\frac{1}{2}c_1(r)\mathring{g}_{\bar{t}\bar{t}}\mathring{g}^{rr}-\frac{1}{2}c_2(r)\mathring{g}_{\bar{t}r}\mathring{g}^{rr}\right|	&	\sim |1-\mu|^{3/2}\sim\delta^{3/2}, 
\end{aligned}
\end{equation}
see Remark \ref{rem: sec: preliminairies, rem 0} for the inverse metric components. A similar estimate holds for the hypersurface $\bar{\mathcal{H}}^+_\delta$, since $\mathring{n}_{\bar{\mathcal{H}}^+_\delta}$ has a favorable form, see Section \ref{subsec: volume forms}. We obtain the linear version of the Lemma. 

For the quasilinear version we use the smallness~\eqref{eq: lem: subsec: elliptic estimates, lem 1, eq 1.1} to obtain 
\begin{equation}
\begin{aligned}
&	\Bigg|\int_{\mathcal{H}^+_\delta}\sum_{1\leq i_1+i_2\leq j-1}\sum_{\alpha} \mathbb{T}(\partial_{\bar{t}},\mathring{n}_{\mathcal{H}_\delta})[\partial_{\bar{t}}^{i_1}\Omega_\alpha^{i_2}\psi]d\mathring{g}_{\mathcal{H}_\delta}-\int_{\mathcal{H}^+_\delta}\sum_{0\leq i_1+i_2\leq j-1}\sum_\alpha\left(\partial_{\bar{t}}\partial_{\bar{t}}^{i_1}\Omega_\alpha^{i_2}\psi\right)^2\frac{1}{\sqrt{|1-\mu|}}d\mathring{g}_{\mathcal{H}_\delta}\Bigg|	\\ 
&	\quad\quad\quad\leq C \sqrt{\delta}\int_{\mathcal{H}^+_\delta} \sum_{1\leq i\leq j-1} \mathbb{T}(N,\mathring{n}_{\mathcal{H}_\delta})[N^i\psi] d\mathring{g}_{\mathcal{H}_\delta}+\sqrt{\epsilon(\delta)}\int_{\mathcal{H}^+_\delta}\sum_{1\leq i_1+i_2+i_3\leq j}\sum_\alpha\left(\partial_{\bar{t}}^{i_1}\partial_r^{i_2}\Omega_\alpha^{i_3}\psi\right)^2\frac{d\mathring{g}_{\mathcal{H}_\delta}}{\sqrt{|1-\mu|}}		,\\
&	\Bigg| \int_{\bar{\mathcal{H}}^+_\delta}\sum_{1\leq i_1+i_2\leq j-1}\sum_{\alpha} \mathbb{T}(\partial_{\bar{t}},\mathring{n}_{\mathcal{H}_\delta})[\partial_{\bar{t}}^{i_1}\Omega_\alpha^{i_2}\psi]d\mathring{g}_{\mathcal{H}_\delta}-\int_{\bar{\mathcal{H}}^+_\delta}\sum_{0\leq i_1+i_2\leq j-1}\sum_\alpha\left(\partial_{\bar{t}}\partial_{\bar{t}}^{i_1}\Omega_\alpha^{i_2}\psi\right)^2\frac{1}{\sqrt{|1-\mu|}}d\mathring{g}_{\mathcal{H}_\delta}\Bigg|	\\
&	\quad\quad\quad\leq C \sqrt{\delta}\int_{\bar{\mathcal{H}}^+_\delta} \sum_{1\leq i\leq j-1} \mathbb{T}(N,\mathring{n}_{\mathcal{H}_\delta})[N^i\psi] d\mathring{g}_{\mathcal{H}_\delta}+\sqrt{\epsilon(\delta)}\int_{\bar{\mathcal{H}}^+_\delta}\sum_{1\leq i_1+i_2+i_3\leq j}\sum_\alpha\left(\partial_{\bar{t}}^{i_1}\partial_r^{i_2}\Omega_\alpha^{i_3}\psi\right)^2 \frac{d\mathring{g}_{\mathcal{H}_\delta}}{\sqrt{|1-\mu|}},\\
\end{aligned}
\end{equation}
for all $1\leq j\leq k+2$, where $C=C(k,M,\Lambda,A_{[k+1]},B_{[k+1]})$. Therefore, for a sufficiently small $\delta$ and $\epsilon(\delta)\ll \delta$, we use Lemma~\ref{lem: subsec: elliptic estimates, lem 1} and conclude.
\end{proof}

\subsection{Proof of Proposition \ref{prop: well posedness quasilinear}}

Now we proceed to the proof of the Proposition.

\begin{proof}
	 
	Let $\tau_1\geq 0$. 
	 
	We will use the derivatives notation of Section \ref{subsec: derivatives notation}. Moreover, for the semilinear term $\partial\psi\cdot\partial\psi$ and for the definition of smooth tensors $a,h$, see Sections \ref{subsec: semilinear derivatives}, \ref{subsec: assumption} respectively. In this proof, when we write 
	\begin{equation}
	h_{ab}
	\end{equation}
	it is to be understood $h_{ab}(\nabla\psi)$ in Cartesian coordinates, see~\eqref{eq: cartesian manifold}. When we write 
	\begin{equation}
	h^{ab}
	\end{equation}
	it is to be understood as $h^{ab}(\nabla\psi)=g^{ab}(\nabla\psi)-\mathring{g}^{ab}$, in Cartesian coordinates. Finally, in this proof when we write $\lesssim$ it is to be understood that we omit a constant $C(k,M,\Lambda,A_{[m]},B_{[m]})$, where for $A_{[m]},B_{[m]}$ see Sections \ref{subsec: semilinear derivatives}, \ref{subsec: assumption} respectively, where $m\leq k+1$.

	\textbf{Existence and uniqueness.}

	The existence and uniqueness part of the proof follow readily, by well known arguments, from the energy estimates~\eqref{eq: prop: well posedness quasilinear, eq 1},~\eqref{eq: prop: well posedness quasilinear, eq 2}. Therefore, we only prove~\eqref{eq: prop: well posedness quasilinear, eq 1},~\eqref{eq: prop: well posedness quasilinear, eq 2}, assuming existence.
	
	\textbf{An auxiliary energy.}
		
	We here introduce an auxiliary energy with which we will estimate the $\partial_{\bar{t}}$ flux on $\{\bar{t}=\tau\}$, see already~\eqref{eq: proof prop: well posedness quasilinear, eq 7.01}. We define 
	\begin{equation}\label{eq: proof prop: well posedness quasilinear, eq 2.3}
	E_{j,q}[\psi](\tau)=\int_{\{\bar{t}=\tau\}\cap\{r_++q\leq r\leq\bar{r}_+-q\}}\sum_{1\leq i_1+i_2\leq j}\sum_\alpha\left(\partial_{\bar{t}}^{i_1}\Omega_\alpha^{i_2}\psi\right)^2+(1-\mu)\sum_{1\leq i_1+i_2+i_3\leq j} \sum_\alpha \left(\partial_{\bar{t}}^{i_1}\partial_r^{i_2}\Omega_\alpha^{i_3}\psi\right)^2 d\mathring{g}_{\{\bar{t}=\tau\}},
	\end{equation}
	for $q>0$ as in the redshift Lemma~\ref{lem: redshift, lem 1}.

	\textbf{The linear wave equation analogue}
	
	Before proving the Cauchy stability estimate~\eqref{eq: prop: well posedness quasilinear, eq 1},~\eqref{eq: prop: well posedness quasilinear, eq 2} for the solutions of the quasilinear wave equation, we sketch the proof of~\eqref{eq: prop: well posedness quasilinear, eq 1},~\eqref{eq: prop: well posedness quasilinear, eq 2} for the solutions of the linear wave equation
	\begin{equation}\label{eq: proof prop: well posedness quasilinear, eq 2.07}
		\Box_{\mathring{g}}\psi=0,
	\end{equation}
	so that the reader can see the similarity between the two proofs. Also see the Lecture notes~\cite{DR5}. The steps of the proof of the relevant quasilinear estimate are similar, \textit{but by also treating the non-linearities appropriately}. 
	
	First, we apply Proposition \ref{prop: energy identity, prop 1, SdS} with multiplier $\partial_{\bar{t}}$ and obtain that there exists a constant $C(M,\Lambda)>1$ such that 
	\begin{equation}\label{eq: proof prop: well posedness quasilinear, eq 2.08}
		E_{1,q}[\psi](\tau_2)\leq C(M,\Lambda) E_1[\psi](\tau_1)
	\end{equation}
	for all $\tau_1\leq \tau_2$. By commuting the wave equation appropriately many times with $\partial_{\bar{t}}\Omega_\alpha$ and by using elliptic estimates, then~\eqref{eq: proof prop: well posedness quasilinear, eq 2.08} implies 
	\begin{equation}\label{eq: proof prop: well posedness quasilinear, eq 2.081}
		E_{j,q}[\psi](\tau_2)\leq C(j,M,\Lambda) E_j[\psi](\tau_1)
	\end{equation}
	for all $j\geq 1$, for some constant $C(j,M,\Lambda)>1$.

	Second, we apply the divergence Theorem with multiplier the red-shift vector field $N$, see Lemma~\ref{lem: redshift, lem 1}, and obtain
	\begin{equation}\label{eq: proof prop: well posedness quasilinear, eq 2.09}
		E_1[\psi](\tau_2)+\int_{\tau_1}^{\tau_2} d\tau E_1[\psi](\tau)\leq C(j,M,\Lambda)\int_{\tau_1}^{\tau_2} E_{1,q}[\psi](\tau) +C(j,M,\Lambda) E_1[\psi](\tau_1),
	\end{equation}
	for all $\tau_1\leq \tau_2$. By commuting the wave equation $j-1$ times with $N$, also see the redshift Lemma~\ref{lem: redshift, lem 2} and by using elliptic estimates, then~\eqref{eq: proof prop: well posedness quasilinear, eq 2.09} implies 
	\begin{equation}\label{eq: proof prop: well posedness quasilinear, eq 2.10}
		E_j[\psi](\tau_2)+\int_{\tau_1}^{\tau_2} d\tau E_j[\psi](\tau)\leq C(j,M,\Lambda)\int_{\tau_1}^{\tau_2} E_{j,q}[\psi](\tau) +C(j,M,\Lambda)E_j[\psi](\tau_1),
	\end{equation}
	for all $j\geq 1$.

	Now, by using~\eqref{eq: proof prop: well posedness quasilinear, eq 2.081} and~\eqref{eq: proof prop: well posedness quasilinear, eq 2.10} we obtain that there exists a constant $C(j,M,\Lambda)$ such that 	
	\begin{equation}\label{eq: proof prop: well posedness quasilinear, eq 2.1}
	E_j[\psi](\tau_2)+\int_{\tau_1}^{\tau_2}E_j[\psi](\tau)d\tau\leq C(j,M,\Lambda) E_j[\psi](\tau_1) (\tau_2-\tau_1)+C(j,M,\Lambda) E_j[\psi](\tau_1),
	\end{equation}
	for all $\tau_1\leq \tau_2$. Note that~\eqref{eq: proof prop: well posedness quasilinear, eq 2.09} holds for all $\tau_1^\prime\geq \tau_1$ in the place of $\tau_1$. We readily obtain that 
	\begin{equation}
	E_j[\psi](\tau)\leq C(j,M,\Lambda)E_j[\psi](\tau_1)
	\end{equation}
	for all $\tau\geq \tau_1$, which concludes the proof of the energy estimates~\eqref{eq: prop: well posedness quasilinear, eq 1},~\eqref{eq: prop: well posedness quasilinear, eq 2} for~\eqref{eq: proof prop: well posedness quasilinear, eq 2.07}.

	\textbf{The bootstrap.}
	
	We now return to the proof of the Proposition \ref{prop: well posedness quasilinear}.

	We introduce the bootstrap assumption  
	\begin{equation}\label{eq: proof prop: well posedness quasilinear, eq 3}
	\sup_{D_\delta(\tau_1,\tau_1+\tau_{\text{max}})}\sum_{1\leq j\leq k-1}\sum_{\partial\in\{\partial_{\bar{t}},\partial_r,\Omega_1,\Omega_2,\Omega_3\}}|\partial^j\psi|\leq C_{\text{b}}\sqrt{\epsilon}
	\end{equation}
	for a large constant $C_{\text{b}}(M,\Lambda)>0$ to be determined later.

	\textbf{Volume form notation.}

	In this proof in any integrals over spacetime domains or hupersurfaces without explicit volume forms it is to be understood that the volume forms of the integrals over spacetime domains are with respect to the metric $g(\nabla\psi)$. The volume forms of the integrals over hypersurfaces are with respect to the induced volume form of the metric $g(\nabla\psi)$ on those hypersurfaces.

	\textbf{The multiplier estimates.}

	First, we write the general energy identities (multiplier estimates) we need. We will later use them for the vector fields $\partial_{\bar{t}},N$.

	We apply the divergence Theorem, see Proposition \ref{prop: energy identity, prop 1, SdS}, for the quasilinear wave equation~\eqref{eq: quasilinear}, for the metric $g(\nabla\psi)$, with multiplier $X$ and appropriately many commutations with a vector field $\tilde{X}$, on the extended region
	\begin{equation}
		D_{\delta}(\tau_1,\tau_1+\tau_{\textit{max}}).
	\end{equation}
	We consider any 
	\begin{equation}
		\tau_2-\tau_1\leq \tau_{\textit{max}}.
	\end{equation}
	We obtain 
	\begin{equation}\label{eq: proof prop: well posedness quasilinear, eq 4}
	\begin{aligned}
	&		\left(\int_{\{\bar{t}=\tau_2\}\cap D_\delta(\tau_1,\tau_2)}+\int_{\mathcal{H}^+_{\delta}\cap D_\delta(\tau_1,\tau_2)}+\int_{\bar{\mathcal{H}}^+_\delta\cap D_\delta(\tau_1,\tau_2)}\right)\mathbb{T}(g(\nabla\psi))(X,n)[\tilde{X}^{i}\psi]\\
	&    +\int\int_{D_\delta(\tau_1,\tau_2)} 	 \frac{1}{2}\:^{(X)}\pi_{\mu\nu}(g(\nabla\psi))\mathbb{T}^{\mu\nu}(g(\nabla\psi))[\tilde{X}^i\psi]\\
	&   = \quad \int_{\{\bar{t}=\tau_1\}\cap D_\delta(\tau_1,\tau_2)} \mathbb{T}(g(\nabla\psi))(X,n)[\tilde{X}^{i}\psi]-\int\int_{D_\delta(\tau_1,\tau_2)}X\tilde{X}^{i}\psi\left(\sum_{l=0}^{j-1}\tilde{X}^l\left[\Box_{g(\nabla\psi)},\tilde{X}\right]\tilde{X}^{i-1-l}\psi+\tilde{X}^i\left(a^{\alpha\beta}\partial_\alpha\psi\partial_\beta\psi\right)\right) 
	\end{aligned}
	\end{equation}
	for all $1\leq i\leq k+1$. Note the deformation tensor of $g(\nabla\psi)$ is
	\begin{equation}\label{eq: proof prop: well posedness quasilinear, eq 5}
	\begin{aligned}
	&  \frac{1}{2}\:^{(X)}\pi^{\mu\nu}(g(\nabla\psi))\mathbb{T}_{\mu\nu}(g(\nabla\psi))[\tilde{X}^{i}\psi]    \\
	&	= \frac{1}{2} \left(\mathcal{L}_X\left(\mathring{g}+h(\nabla\psi)\right)\right)_{\mu\nu}g^{\mu\alpha}(\nabla\psi)g^{\nu \beta}(\nabla\psi)\mathbb{T}_{\alpha\beta}(g(\nabla\psi))[\tilde{X}^{i}\psi]\\
	&  =  \frac{1}{2}\left(\mathcal{L}_X\mathring{g}+\mathcal{L}_X h(\nabla\psi)\right)_{\mu\nu}\left(\mathring{g}^{\mu\alpha}+h^{\mu\alpha}(\nabla\psi)\right)\left(\mathring{g}^{\nu\beta}+h^{\nu\beta}(\nabla\psi)\right)\left(\mathbb{T}(\mathring{g})(\partial_\alpha,\partial_\beta)[\tilde{X}^i\psi] -\frac{1}{2}h(\partial_\alpha,\partial_\beta,\nabla\psi)|\nabla \tilde{X}^i\psi|_{g(\nabla\psi)}^2\right)\\
	&	\:\dot{=}	\frac{1}{2}\left(^{(X)}\pi_{\mu\nu}(\mathring{g})\mathbb{T}^{\mu\nu}(\mathring{g})[\tilde{X}^{i}\psi] \right) +^{(X)}\pi_{\textit{non-lin}}(\tilde{X}^i\psi),
	\end{aligned}
	\end{equation}
	with 
	\begin{equation}
		\begin{aligned}
		&	^{(X)}\pi_{\textit{non-lin}}(\tilde{X}^i\psi)
		\\
		&	\quad= \left(\mathcal{L}_X\mathring{g}\right)_{\mu\nu}\mathring{g}^{\mu\alpha}\left(\mathring{g}^{\nu\beta}\frac{-1}{2}h_{\alpha\beta}(\nabla\psi)|\nabla\tilde{X}^i\psi|^2_{g(\nabla\psi)}+ h^{\nu\beta}\mathbb{T}(\mathring{g})(\partial_{\alpha},\partial_\beta)[\tilde{X}^i\psi]+h^{\nu\beta}(\nabla\psi)\frac{-1}{2}h_{\alpha\beta}(\nabla\psi)\right)\\
		&	\quad\quad + (\mathcal{L}_X\mathring{g})_{\mu\nu}h^{\mu\alpha}(\nabla\psi)\Bigg(\mathring{g}^{\nu\beta}\mathbb{T}(\mathring{g})(\partial_\alpha,\partial_\beta)[\tilde{X}^i\psi]\\
		&	\quad\quad\quad\quad\quad\quad\quad\quad\quad\quad\quad\quad\quad+\mathring{g}^{\nu\beta}\frac{-1}{2}h_{\alpha\beta}(\nabla\psi)|\nabla\tilde{X}^i\psi|^2_{g(\nabla\psi)}+ h^{\nu\beta}\mathbb{T}(\mathring{g})(\partial_{\alpha},\partial_\beta)[\tilde{X}^i\psi]+h^{\nu\beta}(\nabla\psi)\frac{-1}{2}h_{\alpha\beta}(\nabla\psi)\Bigg)\\
		&	\quad\quad +\left(\mathcal{L}_Xh(\nabla\psi)\right)_{\mu\nu}\mathring{g}^{\mu\alpha}\Bigg(\mathring{g}^{\nu\beta}\mathbb{T}(\mathring{g})(\partial_\alpha,\partial_\beta)[\tilde{X}^i\psi]\\
		&	\quad\quad\quad\quad\quad\quad\quad\quad\quad\quad\quad\quad\quad+\mathring{g}^{\nu\beta}\frac{-1}{2}h_{\alpha\beta}(\nabla\psi)|\nabla\tilde{X}^i\psi|^2_{g(\nabla\psi)}+ h^{\nu\beta}\mathbb{T}(\mathring{g})(\partial_{\alpha},\partial_\beta)[\tilde{X}^i\psi]+h^{\nu\beta}(\nabla\psi)\frac{-1}{2}h_{\alpha\beta}(\nabla\psi)\Bigg)\\
		&	\quad\quad +\left(\mathcal{L}_X h(\nabla\psi) h^{\mu\alpha}(\nabla\psi)\right)\Bigg(\mathring{g}^{\nu\beta}\mathbb{T}(\mathring{g})(\partial_\alpha,\partial_\beta)[\tilde{X}^i\psi]\\
		&	\quad\quad\quad\quad\quad\quad\quad\quad\quad\quad\quad\quad\quad+\mathring{g}^{\nu\beta}\frac{-1}{2}h_{\alpha\beta}(\nabla\psi)|\nabla\tilde{X}^i\psi|^2_{g(\nabla\psi)}+ h^{\nu\beta}\mathbb{T}(\mathring{g})(\partial_{\alpha},\partial_\beta)[\tilde{X}^i\psi]+h^{\nu\beta}(\nabla\psi)\frac{-1}{2}h_{\alpha\beta}(\nabla\psi)\Bigg)
		\end{aligned}
	\end{equation}
	where we used that the energy momentum tensor of  $g(\nabla\psi)$ is 
	\begin{equation}\label{eq: proof prop: well posedness quasilinear, eq 6}
		\mathbb{T}(g(\nabla\psi))(X,Y)[\psi]= \mathbb{T}(\mathring{g})(X,Y)[\psi] -\frac{1}{2}h(X,Y,\nabla\psi)|\nabla \psi|_{g(\nabla\psi)}^2
	\end{equation}
	for any two smooth vector field $X,Y$, with
	\begin{equation}
		|\nabla\psi|^2_{g(\nabla\psi)}=g^{ab}(\nabla\psi)\partial_a\psi\partial_b\psi.
	\end{equation}

	\textbf{The $\partial_{\bar{t}}$ multiplier estimate.}
	
	We apply the energy identity~\eqref{eq: proof prop: well posedness quasilinear, eq 4} with multiplier $\partial_{\bar{t}}$ and commutators  $\partial_{\bar{t}}^{i_1}\Omega_\alpha^{i_2}$, in view of~\eqref{eq: proof prop: well posedness quasilinear, eq 5} and of  
	\begin{equation}
		^{(\partial_{\bar{t}})}\pi_{\mu\nu}(\mathring{g})\mathbb{T}^{\mu\nu}(\mathring{g})=0.
	\end{equation}
	We obtain 
	\begin{equation}\label{eq: proof prop: well posedness quasilinear, eq 7.-1}
	\begin{aligned}
	&		\int_{\{\bar{t}=\tau_2\}}\mathbb{T}(g(\nabla\psi))(\partial_{\bar{t}},n)[\partial_{\bar{t}}^{i_1}\Omega_\alpha^{i_2}\psi]\\
	&   = \int_{\{\bar{t}=\tau_1\}\cap D_\delta(\tau_1,\tau_2)} \mathbb{T}(g(\nabla\psi))(\partial_{\bar{t}},n)[\partial_{\bar{t}}^{i_1}\Omega_\alpha^{i_2}\psi]\\
	&	\quad-\left(\int_{\mathcal{H}^+_{\delta}\cap D_\delta(\tau_1,\tau_2)}+\int_{\bar{\mathcal{H}}^+_\delta\cap D_\delta(\tau_1,\tau_2)}\right)\left(\mathbb{T}(g(\nabla\psi))(\partial_{\bar{t}},n)[\partial_{\bar{t}}^{i_1}\Omega_\alpha^{i_2}\psi]\right) \\
	&\quad -\int\int_{D_\delta(\tau_1,\tau_2)}\:^{(\partial_{\bar{t}})}\pi_{\textit{non-lin}}(\partial_{\bar{t}}^{i_1}\Omega_\alpha^{i_2}\psi)	\\
	&	\quad-\int\int_{D_\delta(\tau_1,\tau_2)}\partial_{\bar{t}}\partial_{\bar{t}}^{i_1}\Omega_\alpha^{i_2}\psi\left(\sum_{j_1+j_2+j_3+j_4= i_1+i_2-1}\sum_{j_5+j_6=1}\partial_{\bar{t}}^{j_1}\Omega_\alpha^{j_2}\left[\Box_{g(\nabla\psi)},\partial_{\bar{t}}^{j_5}\Omega_\alpha^{j_6}\right]\partial_{\bar{t}}^{j_3}\Omega_\alpha^{j_4}\psi+\partial_{\bar{t}}^{i_1}\Omega_\alpha^{i_2}\left(a^{\alpha\beta}\partial_\alpha\psi\partial_\beta\psi\right)\right). \\
	\end{aligned}
	\end{equation}
	Therefore, by summing over $i_1,i_2,\alpha$ and by adding in the hypersurface terms 
	\begin{equation}
		\sum_{1\leq i_1+i_2\leq i}\sum_\alpha\left(\int_{\mathcal{H}^+_\delta\cap D_\delta(\tau_1,\tau_2)}+\int_{\bar{\mathcal{H}}^+_\delta\cap D_\delta(\tau_1,\tau_2)}\right)\left(\partial_{\bar{t}}^{i_1+1}\Omega_\alpha^{i_2}\psi\right)^2 \frac{1}{\sqrt{|1-\mu|}} d\mathring{g}_{\mathcal{H}_\delta}
	\end{equation}
	on both sides of~\eqref{eq: proof prop: well posedness quasilinear, eq 7.-1} and recall the  the energy momentum tensor of $g(\nabla\psi)$, see~\eqref{eq: proof prop: well posedness quasilinear, eq 6}, we obtain 
	\begin{equation}\label{eq: proof prop: well posedness quasilinear, eq 7}
		\begin{aligned}		
			&	\sum_{1\leq i_1+i_2\leq i}\sum_\alpha	\int_{\{\bar{t}=\tau_2\}}\mathbb{T}(\mathring{g})(\partial_{\bar{t}},n)[\partial_{\bar{t}}^{i_1}\Omega_\alpha^{i_2}\psi]\\
			&	+\sum_{1\leq i_1+i_2\leq i}\sum_\alpha\left(\int_{\mathcal{H}^+_\delta\cap D_\delta(\tau_1,\tau_2)}+\int_{\bar{\mathcal{H}}^+_\delta\cap D_\delta(\tau_1,\tau_2)}\right)\left(\partial_{\bar{t}}^{i_1+1}\Omega_\alpha^{i_2}\psi\right)^2 \frac{1}{\sqrt{|1-\mu|}} d\mathring{g}_{\mathcal{H}_\delta}\\
			&   =\sum_{1\leq i_1+i_2\leq i}\sum_\alpha \int_{\{\bar{t}=\tau_1\}} \mathbb{T}(g(\nabla\psi))(\partial_{\bar{t}},n)[\partial_{\bar{t}}^{i_1}\Omega_\alpha^{i_2}\psi]\\
			&	\quad +	\sum_{1\leq i_1+i_2\leq i}\sum_{\alpha}\int_{\{\bar{t}=\tau_2\}}\frac{1}{2}h(\partial_{\bar{t}},n,\nabla\psi)|\nabla \partial_{\bar{t}}^{i_1}\Omega_\alpha^{i_2} \psi|_{g(\nabla\psi)}^2\\
			&	\quad-\sum_{1\leq i_1+i_2\leq i}\sum_\alpha\left(\int_{\mathcal{H}^+_{\delta}\cap D_\delta(\tau_1,\tau_2)}+\int_{\bar{\mathcal{H}}^+_\delta\cap D_\delta(\tau_1,\tau_2)}\right)\left(\mathbb{T}(g(\nabla\psi))(\partial_{\bar{t}},n)[\partial_{\bar{t}}^{i_1}\Omega_\alpha^{i_2}\psi]\right) \\
			&	\quad+\sum_{1\leq i_1+i_2\leq i}\sum_\alpha\left(\int_{\mathcal{H}^+_\delta\cap D_\delta(\tau_1,\tau_2)}+\int_{\bar{\mathcal{H}}^+_\delta\cap D_\delta(\tau_1,\tau_2)}\right)\left(\partial_{\bar{t}}^{i_1+1}\Omega_\alpha^{i_2}\psi\right)^2 \frac{1}{\sqrt{|1-\mu|}} d\mathring{g}_{\mathcal{H}_\delta}\\
			&\quad -\sum_{1\leq i_1+i_2\leq i}\sum_\alpha\int\int_{D_\delta(\tau_1,\tau_2)}\:^{(\partial_{\bar{t}})}\pi_{\textit{non-lin}}(\partial_{\bar{t}}^{i_1}\Omega_\alpha^{i_2}\psi)	\\
			&	\quad-\sum_{1\leq i_1+i_2\leq i}\sum_\alpha\int\int_{D_\delta(\tau_1,\tau_2)}\partial_{\bar{t}}\partial_{\bar{t}}^{i_1}\Omega_\alpha^{i_2}\psi\Bigg(\sum_{j_1+j_2+j_3+j_4= i_1+i_2-1}\sum_{j_5+j_6=1}\partial_{\bar{t}}^{j_1}\Omega_\alpha^{j_2}\left[\Box_{g(\nabla\psi)},\partial_{\bar{t}}^{j_5}\Omega_\alpha^{j_6}\right]\partial_{\bar{t}}^{j_3}\Omega_\alpha^{j_4}\psi\\
			&\quad\quad\quad\quad\quad\quad\quad\quad\quad\quad\quad\quad\quad\quad\quad\quad\quad\quad\quad+\partial_{\bar{t}}^{i_1}\Omega_\alpha^{i_2}\left(a^{\alpha\beta}\partial_\alpha\psi\partial_\beta\psi\right)\Bigg), \\		
		\end{aligned}
	\end{equation}
	Now, note that there exist constants $c(M,\Lambda),C(M,\Lambda)>0$, such that 
	\begin{equation}
	c(M,\Lambda)\left((\partial_{\bar{t}}\psi)^2+|\slashed{\nabla}\psi|^2\right)\leq\mathbb{T}(\mathring{g})(\partial_{\bar{t}},\mathring{n}_{\{\bar{t}=\tau\}})[\psi]+C(M,\Lambda)|1-\mu|(\partial_r\psi)^2,
	\end{equation}
	in sufficiently small neighborhoods of the horizons and 
	\begin{equation}
		 c(M,\Lambda,q)\left((\partial_{\bar{t}}\psi)^2+(\partial_r\psi)^2+|\slashed{\nabla}\psi|^2\right)\leq\mathbb{T}(\mathring{g})(\partial_{\bar{t}},\mathring{n}_{\{\bar{t}=\tau\}})[\psi]
	\end{equation}
	in $(r_++q,\bar{r}_+-q)$. Therefore, in view of Lemma~\ref{lem: metric close to SdS} and the elliptic estimate of Lemma~\ref{lem: subsec: elliptic estimates, lem 1}, we note that there exists a $\delta>0$ sufficiently small, and an $\epsilon(\delta,C_{\text{b}})>0$ sufficiently small, such that
	\begin{equation}\label{eq: proof prop: well posedness quasilinear, eq 7.01}
	\begin{aligned}
	&		c E_{j,q}(\tau_2)\\
	&   \leq C E_j(\tau_1)\\
	&	\quad+C\delta\int_{\{\bar{t}=\tau_2\}}\sum_{1\leq i_1+i_2+i_3\leq j}\sum_\alpha\left(\partial_{\bar{t}}^{i_1}\partial_r^{i_2}\Omega_\alpha^{i_3}\psi\right)^2 d\mathring{g}_{\{\bar{t}=\tau\}}\\
	&	\quad-\sum_{1\leq i_1+i_2\leq j-1}\sum_\alpha\left(\int_{\mathcal{H}^+_{\delta}\cap D_\delta(\tau_1,\tau_2)}+\int_{\bar{\mathcal{H}}^+_\delta\cap D_\delta(\tau_1,\tau_2)}\right)\left(\mathbb{T}(g(\nabla\psi))(\partial_{\bar{t}},n)[\partial_{\bar{t}}^{i_1}\Omega_\alpha^{i_2}\psi]\right) \\
	&	\quad+\sum_{1\leq i_1+i_2\leq j-1}\sum_\alpha\left(\int_{\mathcal{H}^+_\delta\cap D_\delta(\tau_1,\tau_2)}+\int_{\bar{\mathcal{H}}^+_\delta\cap D_\delta(\tau_1,\tau_2)}\right)\left(\partial_{\bar{t}}^{i_1+1}\Omega_\alpha^{i_2}\psi\right)^2 \frac{1}{\sqrt{|1-\mu|}} d\mathring{g}_{\mathcal{H}_\delta}\\
	&	\quad +	\sum_{1\leq i_1+i_2\leq i}\sum_{\alpha}\int_{\{\bar{t}=\tau_2\}}\frac{1}{2}h(\partial_{\bar{t}},n,\nabla\psi)|\nabla \partial_{\bar{t}}^{i_1}\Omega_\alpha^{i_2} \psi|_{g(\nabla\psi)}^2\\
	&\quad -\sum_{1\leq i_1+i_2\leq j-1}\sum_\alpha\int\int_{D_\delta(\tau_1,\tau_2)}\:^{(\partial_{\bar{t}})}\pi_{\textit{non-lin}}(\partial_{\bar{t}}^{i_1}\Omega_\alpha^{i_2}\psi)	\\
&	\quad-\sum_{1\leq i_1+i_2\leq j-1}\sum_\alpha\int\int_{D_\delta(\tau_1,\tau_2)}\partial_{\bar{t}}\partial_{\bar{t}}^{i_1}\Omega_\alpha^{i_2}\psi\Bigg(\sum_{j_1+j_2+j_3+j_4= i_1+i_2-1}\sum_{j_5+j_6=1}\partial_{\bar{t}}^{j_1}\Omega_\alpha^{j_2}\left[\Box_{g(\nabla\psi)},\partial_{\bar{t}}^{j_5}\Omega_\alpha^{j_6}\right]\partial_{\bar{t}}^{j_3}\Omega_\alpha^{j_4}\psi\\
&	\quad\quad\quad\quad\quad\quad\quad\quad\quad\quad\quad\quad\quad\quad\quad\quad\quad\quad\quad\quad+\partial_{\bar{t}}^{i_1}\Omega_\alpha^{i_2}\left(a^{\alpha\beta}\partial_\alpha\psi\partial_\beta\psi\right)\Bigg) \\
&   = C E_j(\tau_1)\\
&	\quad+C\delta\int_{\{\bar{t}=\tau_2\}}\sum_{1\leq i_1+i_2+i_3\leq j}\sum_\alpha\left(\partial_{\bar{t}}^{i_1}\partial_r^{i_2}\Omega_\alpha^{i_3}\psi\right)^2 d\mathring{g}_{\{\bar{t}=\tau\}}\\
&	\quad-\sum_{1\leq i_1+i_2\leq j-1}\sum_\alpha\left(\int_{\mathcal{H}^+_{\delta}\cap D_\delta(\tau_1,\tau_2)}+\int_{\bar{\mathcal{H}}^+_\delta\cap D_\delta(\tau_1,\tau_2)}\right)\left(\mathbb{T}(g(\nabla\psi))(\partial_{\bar{t}},n)[\partial_{\bar{t}}^{i_1}\Omega_\alpha^{i_2}\psi]\right) \\
&	\quad+\sum_{1\leq i_1+i_2\leq j-1}\sum_\alpha\left(\int_{\mathcal{H}^+_\delta\cap D_\delta(\tau_1,\tau_2)}+\int_{\bar{\mathcal{H}}^+_\delta\cap D_\delta(\tau_1,\tau_2)}\right)\left(\partial_{\bar{t}}^{i_1+1}\Omega_\alpha^{i_2}\psi\right)^2 \frac{1}{\sqrt{|1-\mu|}} d\mathring{g}_{\mathcal{H}_\delta}\\
&	\quad +	\sum_{1\leq i_1+i_2\leq i}\sum_{\alpha}\int_{\{\bar{t}=\tau_2\}}\frac{1}{2}h(\partial_{\bar{t}},n,\nabla\psi)|\nabla \partial_{\bar{t}}^{i_1}\Omega_\alpha^{i_2} \psi|_{g(\nabla\psi)}^2\\
&\quad -\sum_{1\leq i_1+i_2\leq j-1}\sum_\alpha\int\int_{D_\delta(\tau_1,\tau_2)}\:^{(\partial_{\bar{t}})}\pi_{\textit{non-lin}}(\partial_{\bar{t}}^{i_1}\Omega_\alpha^{i_2}\psi)	\\
&	\quad-\sum_{1\leq i_1+i_2\leq j-1}\sum_\alpha\int\int_{D_\delta(\tau_1,\tau_2)}\partial_{\bar{t}}\partial_{\bar{t}}^{i_1}\Omega_\alpha^{i_2}\psi\Bigg(\sum_{j_1+j_2+j_3+j_4= i_1+i_2-1}\sum_{j_5+j_6=1}\partial_{\bar{t}}^{j_1}\Omega_\alpha^{j_2}\left[h^{ab}(\nabla\psi)\partial_a\partial_b+S^c\:_{ab}(h)\partial_c,\partial_{\bar{t}}^{j_5}\Omega_\alpha^{j_6}\right]\partial_{\bar{t}}^{j_3}\Omega_\alpha^{j_4}\psi\\
&	\quad\quad\quad\quad\quad\quad\quad\quad\quad\quad\quad\quad\quad\quad\quad\quad\quad\quad\quad\quad+\partial_{\bar{t}}^{i_1}\Omega_\alpha^{i_2}\left(a^{\alpha\beta}\partial_\alpha\psi\partial_\beta\psi\right)\Bigg) \\	
\end{aligned}
\end{equation}
for all $1\leq j\leq k+1$, for some constants $c(M,\Lambda),C(k,M,\Lambda,A_{[k+1]},B_{[k+1]})>0$, where for $S^c\:_{ab}$ see~\eqref{eq: energy estimate for quasilinear, 2, 2, the nonlinear part of the christophel symbols}. Note that to get the energies $E_{j,q},E_j$ on the left hand side and right hand side respectively, we used the bootstrap assumption~\eqref{eq: proof prop: well posedness quasilinear, eq 3} and Lemma~\ref{lem: metric close to SdS}, for $\epsilon(\delta,C_{\text{b}})>0$ sufficiently small.

We now estimate the hypersurface terms on $\mathcal{H}^+_\delta,\bar{\mathcal{H}}^+_\delta$ on the right hand side of~\eqref{eq: proof prop: well posedness quasilinear, eq 7.01} (in view of Lemma~\ref{lem: subsec: elliptic estimates, lem 2}) and the two last (non-linear) terms of~\eqref{eq: proof prop: well posedness quasilinear, eq 7.01}. Specifically, in view of the bootstrap~\eqref{eq: proof prop: well posedness quasilinear, eq 3}, and the definitions of the smooth tensors $a,\: h$, see Sections \ref{subsec: semilinear derivatives}, \ref{subsec: assumption} respectively, there exists a $\delta>0$ sufficiently small, and an $\epsilon(\delta,C_{\text{b}})>0$ sufficiently small, such that for the two non-linear terms we use the Lemma~\ref{lem: metric close to SdS} and the coarea formula~\eqref{subsec: coarea}, to obtain
	\begin{equation}\label{eq: proof prop: well posedness quasilinear, eq 7.3}
	\begin{aligned}
	&		 E_{k+1,q}(\tau_2)\\
	&   \leq C E_{k+1}(\tau_1)+C\delta E_{k+1}(\tau_2)\\
	&\quad +	C\sqrt{\delta}\left(\int_{\mathcal{H}^+_{\delta}\cap D_\delta(\tau_1,\tau_2)}+\int_{\bar{\mathcal{H}}_\delta^+\cap D_\delta(\tau_1,\tau_2)}\right)\sum_{1\leq i\leq k}\mathbb{T}(N,n)[N^i\psi]  d\mathring{g}_{\mathcal{H}_\delta}\\
	&	\quad +\sqrt{\epsilon(\delta)} C \int_{\tau_1}^{\tau_2}d\tau E_j(\tau),
	\end{aligned}
	\end{equation}
	for $k\geq 7$, where $C=C(k,M,\Lambda,A_{[k+1]},B_{[k+1]})>0$. We appropriately distributed derivatives and used a Sobolev estimate, see Lemma~\ref{lem: sobolev estimate}, in view of the fact that $k=7$ is the smallest integer such that 
	\begin{equation}
		\floor{\frac{k+1}{2}}+2\leq k-1.
	\end{equation}

	\textbf{The $N$ multiplier estimate.}
	
	We apply the energy identity~\eqref{eq: proof prop: well posedness quasilinear, eq 4} with multiplier $N$ and commutators  $N^i$, in view of~\eqref{eq: proof prop: well posedness quasilinear, eq 5}. 
	
	We obtain
	\begin{equation}\label{eq: proof prop: well posedness quasilinear, eq 8}
	\begin{aligned}
	&		\left(\int_{\{\bar{t}=\tau_2\}\cap D_\delta(\tau_1,\tau_2)}+\int_{\mathcal{H}^+_{\delta}\cap D_\delta(\tau_1,\tau_2)}+\int_{\bar{\mathcal{H}}^+_\delta\cap D_\delta(\tau_1,\tau_2)}\right)\mathbb{T}(g(\nabla\psi))(N,n)[N^i\psi]\\
	&    +\int\int_{D_\delta(\tau_1,\tau_2)} 	 \frac{1}{2}\:^{(N)}\pi_{\mu\nu}(\mathring{g})\mathbb{T}^{\mu\nu}(\mathring{g})[N^i\psi]\\
	&   = \quad \int_{\{\bar{t}=\tau_1\}\cap D_\delta(\tau_1,\tau_2)} \mathbb{T}(g(\nabla\psi))(N,n)[N^i\psi]\\
	&	\quad-\int\int_{D_\delta(\tau_1,\tau_2)}\:^{(N)}\pi_{\textit{non-lin}}(N^i\psi)-\int\int_{D_\delta(\tau_1,\tau_2)}N^{i+1}\psi\left(\sum_{l=0}^{i-1}N^l\left[\Box_{g(\nabla\psi)},N\right]N^{i-1-l}\psi+N^i\left(a^{\alpha\beta}\partial_\alpha\psi\partial_\beta\psi\right)\right) 
	\end{aligned}
	\end{equation}
	for all $1\leq i\leq k$. 
	
	Therefore, for a sufficiently small $\delta$ we obtain  
	\begin{equation}\label{eq: proof prop: well posedness quasilinear, eq 9}
	\begin{aligned}
	&		\left(\int_{\{\bar{t}=\tau_2\}\cap D_\delta(\tau_1,\tau_2)}+\int_{\mathcal{H}^+_{\delta}\cap D_\delta(\tau_1,\tau_2)}+\int_{\bar{\mathcal{H}}^+_\delta\cap D_\delta(\tau_1,\tau_2)}\right)\mathbb{T}(g(\nabla\psi))(N,n)[N^{i}\psi]\\
	&    +\int\int_{D_\delta(\tau_1,\tau_2)\cap \left(\{r_+-q\leq r\leq r_++q\}\cup\{\bar{r}_+-q\leq r\leq\bar{r}_++q\}\right)} 	 \frac{1}{2}\:^{(N)}\pi_{\mu\nu}(\mathring{g})\mathbb{T}^{\mu\nu}(\mathring{g})[N^{i}\psi]\\
	&   = \quad \int_{\{\bar{t}=\tau_1\}\cap D_\delta(\tau_1,\tau_2)} \mathbb{T}(g(\nabla\psi))(N,n)[N^{i}\psi]\\
	&	\quad -\int\int_{D_\delta(\tau_1,\tau_2)\setminus \left(\{r_+-q\leq r\leq r_++q\}\cup\{\bar{r}_+-q\leq r\leq\bar{r}_++q\}\right)} 	 \frac{1}{2}\:^{(N)}\pi_{\mu\nu}(\mathring{g})\mathbb{T}^{\mu\nu}(g(\nabla\psi))[N^{i}\psi]\\
	&	\quad-\int\int_{D_\delta(\tau_1,\tau_2)}\:^{(N)}\pi_{\textit{non-lin}}(N^i\psi)-\int\int_{D_\delta(\tau_1,\tau_2)}N^{i+1}\psi\left(\sum_{l=0}^{i-1}N^l\left[\Box_{g(\nabla\psi)},N\right]N^{i-1-l}\psi+N^i\left(a^{\alpha\beta}\partial_\alpha\psi\partial_\beta\psi\right)\right), 
	\end{aligned}
	\end{equation}
	for all $1\leq i\leq k$, where for $q$ see Lemma~\ref{lem: redshift, lem 1}. Now, we estimate the second term on the right hand side of~\eqref{eq: proof prop: well posedness quasilinear, eq 9}, and obtain 
	\begin{equation}\label{eq: proof prop: well posedness quasilinear, eq 10}
	\begin{aligned}
	&		\left(\int_{\{\bar{t}=\tau_2\}\cap D_\delta(\tau_1,\tau_2)}+\int_{\mathcal{H}^+_{\delta}\cap D_\delta(\tau_1,\tau_2)}+\int_{\bar{\mathcal{H}}^+_\delta\cap D_\delta(\tau_1,\tau_2)}\right)\mathbb{T}(g(\nabla\psi))(N,n)[N^{i}\psi]\\
	&    +\int\int_{D_\delta(\tau_1,\tau_2)\cap \left(\{r_+-q\leq r\leq r_++q\}\cup\{\bar{r}_+-q\leq r\leq\bar{r}_++q\}\right)} 	 \frac{1}{2}\:^{(N)}\pi_{\mu\nu}(\mathring{g})\mathbb{T}^{\mu\nu}(\mathring{g})[N^{i}\psi]\\
	&   \leq \quad \int_{\{\bar{t}=\tau_1\}\cap D_\delta(\tau_1,\tau_2)} \mathbb{T}(g(\nabla\psi))(N,n)[N^{i}\psi]\\
	&	\quad +C\tau_{\textit{max}}\sup_{\tau\in[\tau_1,\tau_2]}\int_{\{\bar{t}=\tau\}\cap\{r_++q\leq r\leq\bar{r}_+-q\}}\sum_{1\leq i_1+i_2+i_3= i+1}\sum_\alpha\left(\partial_{\bar{t}}^{i_1}\partial_r^{i_2}\Omega_\alpha^{i_3}\psi\right)^2\\
	&	\quad-\int\int_{D_\delta(\tau_1,\tau_2)}\:^{(N)}\pi_{\textit{non-lin}}(N^i\psi)-\int\int_{D_\delta(\tau_1,\tau_2)}N^{i+1}\psi\left(\sum_{l=0}^{i-1}N^l\left[\Box_{g(\nabla\psi)},N\right]N^{i-1-l}\psi+N^i\left(a^{\alpha\beta}\partial_\alpha\psi\partial_\beta\psi\right)\right),
	\end{aligned}
	\end{equation}
	for all $1\leq i\leq k$, where the constant $C>0$ depends only on the black hole parameters. 
	
	Now, in view of Lemma~\ref{lem: metric close to SdS} and the elliptic estimates of Lemma~\ref{lem: subsec: elliptic estimates, lem 1}, there exist $\delta,\epsilon(\delta,C_{\text{b}})$, sufficiently small, such that the estimate~\eqref{eq: proof prop: well posedness quasilinear, eq 10} implies 
	\begin{equation}\label{eq: proof prop: well posedness quasilinear, eq 11.1}
	\begin{aligned}
	&		cE_j(\tau_2)+c\int_{\tau_1}^{\tau_2}d\tau E_j(\tau)\\
	&	 +c\left(\int_{\mathcal{H}^+_{\delta}\cap D_\delta(\tau_1,\tau_2)}+\int_{\bar{\mathcal{H}}_\delta^+\cap D_\delta(\tau_1,\tau_2)}\right)\sum_{1\leq i\leq j-1}\mathbb{T}(N,n)[N^i\psi]d\mathring{g}_{\mathcal{H}_\delta}\\
	&   \leq C E_j(\tau_1) +C\tau_{\textit{max}}\sup_{\tau\in[\tau_1,\tau_2]}E_{j,q}(\tau)\\
	&	\quad-\int\int_{D_\delta(\tau_1,\tau_2)}\sum_{1\leq i\leq j-1}\:^{(N)}\pi_{\textit{non-lin}}(N^i\psi)\\
	&	\quad -\int\int_{D_\delta(\tau_1,\tau_2)}\sum_{1\leq i\leq j-1}N^{i+1}\psi\left(\sum_{l=0}^{i-1}N^l\left[\Box_{g(\nabla\psi)},N\right]N^{i-1-l}\psi+N^i\left(a^{\alpha\beta}\partial_\alpha\psi\partial_\beta\psi\right)\right)\\
	&   = C E_j(\tau_1) +C\tau_{\textit{max}}\sup_{\tau\in[\tau_1,\tau_2]}E_{j,q}(\tau)\\
	&	\quad-\int\int_{D_\delta(\tau_1,\tau_2)}\sum_{1\leq i\leq j-1}\:^{(N)}\pi_{\textit{non-lin}}(N^i\psi)\\
	&	\quad -\int\int_{D_\delta(\tau_1,\tau_2)}\sum_{1\leq i\leq j-1}N^{i+1}\psi\left(\sum_{l=0}^{i-1}N^l\left[\Box_{\mathring{g}},N\right]N^{i-1-l}\psi\right)\\
	&	\quad -\int\int_{D_\delta(\tau_1,\tau_2)}\sum_{1\leq i\leq j-1}N^{i+1}\psi\left(\sum_{l=0}^{i-1}N^l\left[h^{ab}\partial_a\partial_b +S^c\:_{ab}\partial_c,N\right]N^{i-1-l}\psi+N^i\left(a^{\alpha\beta}\partial_\alpha\psi\partial_\beta\psi\right)\right),
	\end{aligned}
	\end{equation}
	for all $1\leq j\leq k+1$, for some constants $c(k,M,\Lambda,A_{[k+1]},B_{[k+1]}),C(k,M,\Lambda,A_{[k+1]},B_{[k+1]})$, where for $S^c\:_{ab}$ see~\eqref{eq: energy estimate for quasilinear, 2, 2, the nonlinear part of the christophel symbols}.

	Similarly to~\eqref{eq: proof prop: well posedness quasilinear, eq 7.01}, we estimate the two last (non-linear) terms of~\eqref{eq: proof prop: well posedness quasilinear, eq 11.1}. Note that we estimate the third to last bulk term of~\eqref{eq: proof prop: well posedness quasilinear, eq 11.1}, in view on the linear $N$-commutation Lemma~\ref{lem: redshift, lem 2}. Specifically, in view of the bootstrap~\eqref{eq: proof prop: well posedness quasilinear, eq 3}, and the definitions of the smooth tensors $a,\: h$, see Sections \ref{subsec: semilinear derivatives}, \ref{subsec: assumption} respectively, there exists a $\delta>0$ sufficiently small, and an $\epsilon(\delta,C_{\text{b}})>0$ sufficiently small, such that for the two non-linear terms we use the Lemma~\ref{lem: metric close to SdS} and the coarea formula~\eqref{subsec: coarea}, to obtain
	\begin{equation}\label{eq: proof prop: well posedness quasilinear, eq 11.2}
	\begin{aligned}
	&		E_{k+1}(\tau_2)+\int_{\tau_1}^{\tau_2}d\tau E_{k+1}(\tau)+\left(\int_{\mathcal{H}^+_{\delta}\cap D_\delta(\tau_1,\tau_2)}+\int_{\bar{\mathcal{H}}_\delta^+\cap D_\delta(\tau_1,\tau_2)}\right)\sum_{1\leq i\leq k}\mathbb{T}(N,n)[N^i\psi]d\mathring{g}_{\mathcal{H}_\delta}\\
	&   \leq \quad C E_{k+1}(\tau_1) +C\tau_{\textit{max}}\sup_{\tau\in[\tau_1,\tau_2]}E_{j,q}(\tau)\\
	\end{aligned}
	\end{equation}
	for $k\geq 7$, where $ 0<C=C(k,M,\Lambda,A_{[k+1]},B_{[k+1]})$. We appropriately distributed derivatives and used a Sobolev injection, see Lemma~\ref{lem: sobolev estimate}, in view of that $k= 7$ is the smallest integer such that 
	\begin{equation}
	\floor{\frac{k+1}{2}}+2\leq k-1.
	\end{equation}	
	We obtain

	\textbf{The combination of the $\partial_{\bar{t}}$ and $N$ multiplier estimates.} 
	
	Now, we combine the $\partial_{\bar{t}}$-estimate, namely~\eqref{eq: proof prop: well posedness quasilinear, eq 7.3}, and the $N$-estimate, namely~\eqref{eq: proof prop: well posedness quasilinear, eq 11.2}. 
	
	We obtain 
	\begin{equation}\label{eq: proof prop: well posedness quasilinear, eq 11.3}
	\begin{aligned}
	&		E_{k+1}(\tau_2)+\int_{\tau_1}^{\tau_2}d\tau E_{k+1}(\tau)+\left(\int_{\mathcal{H}^+_{\delta}\cap D_\delta(\tau_1,\tau_2)}+\int_{\bar{\mathcal{H}}_\delta^+\cap D_\delta(\tau_1,\tau_2)}\right)\sum_{1\leq i\leq k}\mathbb{T}(N,n)[N^i\psi]d\mathring{g}_{\mathcal{H}_\delta}\\
	&   \leq \quad C E_{k+1}(\tau_1) +C\tau_{\textit{max}}\Bigg(C E_{k+1}(\tau_1)+C\delta E_{k+1}(\tau_2)\\
	&\quad\quad\quad\quad\quad\quad\quad\quad\quad\quad\quad +	C\sqrt{\delta}\left(\int_{\mathcal{H}^+_{\delta}\cap D_\delta(\tau_1,\tau_2)}+\int_{\bar{\mathcal{H}}_\delta^+\cap D_\delta(\tau_1,\tau_2)}\right)\sum_{1\leq i\leq k}\mathbb{T}(N,n)[N^i\psi]  d\mathring{g}_{\mathcal{H}_\delta}\\
	&\quad\quad\quad\quad\quad\quad\quad\quad\quad\quad\quad +\sqrt{\epsilon(\delta)} C \int_{\tau_1}^{\tau_2}d\tau E_{k+1}(\tau)\Bigg),
	\end{aligned}
	\end{equation}
	for $k\geq 7$.
	
	\textbf{The integral inequality.}
	Therefore, for $\delta(\tau_{\textit{max}}),\epsilon(\tau_{\textit{max}},\delta)>0$ sufficiently small, then from inequality~\eqref{eq: proof prop: well posedness quasilinear, eq 11.3} we obtain 
	\begin{equation}\label{eq: proof prop: well posedness quasilinear, eq 11.4}
	\begin{aligned}
	&		E_{k+1}(\tau_2)+\int_{\tau_1}^{\tau_2}d\tau E_{k+1}(\tau)  \leq C E_{k+1}(\tau_1) +C\left(\tau_2-\tau_1\right) E_{k+1}(\tau_1),\\
	\end{aligned}
	\end{equation}
	for $k\geq 7$, and for a constant $C(k,M,\Lambda,A_{[k+1]},B_{[k+1]})>0$. Note that~\eqref{eq: proof prop: well posedness quasilinear, eq 11.4} is the quasilinear analogue of the integral inequality~\eqref{eq: proof prop: well posedness quasilinear, eq 2.1}. 
	
	\textbf{Finishing the proof, improving the bootstrap~\eqref{eq: proof prop: well posedness quasilinear, eq 3}.}

	Note that the estimate~\eqref{eq: proof prop: well posedness quasilinear, eq 11.4} holds for any $\tau_1^\prime\geq \tau_1$ in the place of $\tau_1$. 
	
	Therefore, arguing as in the linear case, see the integral inequality~\eqref{eq: proof prop: well posedness quasilinear, eq 2.1}, we obtain from~\eqref{eq: proof prop: well posedness quasilinear, eq 11.4} that there exists a constant
	\begin{equation}
	C_{\textit{wp}}(k,M,\Lambda,A_{[k+1]},B_{[k+1]})>1
	\end{equation}
	independent of $\tau_{\textit{max}}$, such that  
	\begin{equation}\label{eq: proof prop: well posedness quasilinear, eq 12}
	\begin{aligned}
	E_{k+1}[\psi](\tau^\prime)	&	\leq C_{\textit{wp}} E_{k+1}[\psi](\tau_1),\\
	\end{aligned}
	\end{equation}
	for all $\tau^\prime\in [\tau_1,\tau_1+\tau_{\textit{max}}]$. Moreover, by using~\eqref{eq: proof prop: well posedness quasilinear, eq 12} and the smallness $E_{k+1}[\psi](\tau_1)\leq \epsilon$, we also prove that 
	\begin{equation}
		E_{k+2}[\psi](\tau^\prime)	\leq C_{\textit{wp}} E_{k+2}[\psi](\tau_1),
	\end{equation}
	by repeating the arguments of the above proof, and after redefining $C_{\textit{wp}}(k,M,\Lambda,A_{[k+1]},B_{[k+1]})>1$ appropriately, provided that $E_{k+2}[\psi](\tau_1)<\infty$.

	Finally, for a sufficiently large $C_{\text{b}}(k,M,\Lambda,A_{[k+1]},B_{[k+1]})\gg C_{\textit{wp}}$ and a sufficiently small $\epsilon>0$, we improve the bootstrap assumption~\eqref{eq: proof prop: well posedness quasilinear, eq 3} by Sobolev inequalities, see Lemma~\ref{lem: sobolev estimate}, on the left hand side of inequality~\eqref{eq: proof prop: well posedness quasilinear, eq 12}.

	We conclude the energy estimates~\eqref{eq: prop: well posedness quasilinear, eq 1},~\eqref{eq: prop: well posedness quasilinear, eq 2} and therefore the Proposition. 
\end{proof}

Note also the following Remark 
\begin{remark}\label{rem: appendix: local well posedness, rem 1}
A posteriori, one can remove the dependence of $\delta$ on the $\tau_{\textit{step}}$ parameter. To do so, note that we dropped the terms on the hypersurfaces $\mathcal{H}^+_\delta,\bar{\mathcal{H}}^+_\delta$ from~\eqref{eq: proof prop: well posedness quasilinear, eq 11.3}. By using the $N$ redshift vector field in the spacetime regions
\begin{equation}
	\{r_+-\tilde{\delta}\leq r\leq r_+-\delta\},\qquad  \{\bar{r}_++\delta\leq r\leq\bar{r}_++\tilde{\delta}\}
\end{equation}
for $\delta>0$ as in Theorem~\ref{thm: quasilinear, 1, local bootstrap}, and for a sufficiently small $\tilde{\delta}$ independent of $\tau_{\textit{step}}$, we can absorb the contributions at $\mathcal{H}^+_\delta,\bar{\mathcal{H}}^+_\delta$ by~\eqref{eq: proof prop: well posedness quasilinear, eq 11.3}, and conclude the Cauchy stability result of Proposition \ref{prop: well posedness quasilinear} for a $\tilde{\delta}>0$ in the place of $\delta$.
\end{remark}

%%%%%%%%%%%%%%%%%%%%%%%%%%%%%%%%%%%%%%%%%%%%%%%%%%%%%%%%%%%%%
\bibliographystyle{plain}
\bibliography{MyBibliography}

\end{document}